\newtheorem{theorem}{Theorem}[section]
\newtheorem{Lemma}[theorem]{Lemma}
\newtheorem{claim}[theorem]{Claim}
\newtheorem{corollary}[theorem]{Corollary}
\newtheorem{fact}[theorem]{Fact}
\newtheorem{definition}[theorem]{Definition}
\newcommand{\R}{\mathbb{R}}
\newcommand{\E}{\mathbb{E}}
\DeclareBoldMathCommand\boldlangle{\left\langle} 
\DeclareBoldMathCommand\boldrangle{\right\rangle} 
\def\bra#1{\mathinner{\langle{#1}|}}
\def\ket#1{\mathinner{|{#1}\rangle}}
\newcommand{\braket}[2]{\langle #1|#2\rangle}
\renewcommand{\part}[2]{\frac{\partial #1}{\partial #2}}
\newcommand{\als}[1]{\begin{align*}#1\end{align*}}
\newcommand{\all}[2]{\begin{align}\label{#2} #1\end{align}}
\newcommand{\al}[1]{\begin{align} #1\end{align}}
\newcommand{\enum}[1]{\begin{enumerate}#1\end{enumerate}}
\newcommand{\en}[1]{\left ( #1 \right )}
\newcommand{\Span}{ \text{Span}}
\newcommand{\nl}{\notag \\}
\newcommand{\norm}[1]{\lVert#1\rVert}
\renewcommand{\E}{\mathbb{E}}
\newcommand{\thmref}[1]{\hyperref[#1]{{Theorem~\ref*{#1}}}}
\newcommand{\lemref}[1]{\hyperref[#1]{{Lemma ~\ref*{#1}}}}
\newcommand{\remref}[1]{\hyperref[#1]{{Remark~\ref*{#1}}}}
\newcommand{\corref}[1]{\hyperref[#1]{{Corollary~\ref*{#1}}}}
\newcommand{\eqnref}[1]{\hyperref[#1]{{Equation~(\ref*{#1})}}}
\newcommand{\claimref}[1]{\hyperref[#1]{{Claim~\ref*{#1}}}}
\newcommand{\remarkref}[1]{\hyperref[#1]{{Remark~\ref*{#1}}}}
\newcommand{\propref}[1]{\hyperref[#1]{{Proposition~\ref*{#1}}}}
\newcommand{\factref}[1]{\hyperref[#1]{{Fact~\ref*{#1}}}}
\newcommand{\defref}[1]{\hyperref[#1]{{Definition~\ref*{#1}}}}
\newcommand{\exampleref}[1]{\hyperref[#1]{{Example~\ref*{#1}}}}
\newcommand{\hypref}[1]{\hyperref[#1]{{Hypothesis~\ref*{#1}}}}
\newcommand{\secref}[1]{\hyperref[#1]{{Section~\ref*{#1}}}}
\newcommand{\chapref}[1]{\hyperref[#1]{{Chapter~\ref*{#1}}}}
\newcommand{\apref}[1]{\hyperref[#1]{{Appendix~\ref*{#1}}}}
\begin{document}
\title{A Quantum Interior Point Method for LPs and SDPs}

\author{ 
Iordanis Kerenidis \thanks{
CNRS, IRIF, Universit\'e Paris Diderot, Paris, France and  
Centre for Quantum Technologies, National University of Singapore, Singapore. 
Email: {\tt jkeren@irif.fr}.} 
\and
Anupam Prakash \thanks{CNRS, IRIF, Universit\'e Paris Diderot, Paris, France.
Email: { \tt anupamprakash1@gmail.com}.}
}

\maketitle 

\begin{abstract}
We present a quantum interior point method with worst case running time $\widetilde{O}(\frac{n^{2.5}}{\xi^{2}}  \mu \kappa^3 \log (1/\epsilon))$  for SDPs and 
$\widetilde{O}(\frac{n^{1.5}}{\xi^{2}}  \mu \kappa^3 \log (1/\epsilon))$ for LPs, where the output of our algorithm is a pair of matrices $(S,Y)$ 
that are $\epsilon$-optimal $\xi$-approximate SDP solutions. The factor $\mu$ is 
at most $\sqrt{2}n$ for SDPs and $\sqrt{2n}$ for LP's, and $\kappa$ is an upper bound on the condition number of the intermediate solution matrices. For the case 
where the intermediate matrices for the interior point method are well conditioned, our method provides a polynomial speedup over the best known classical SDP solvers and 
interior point based LP solvers, which have a worst case running time of $O(n^{6})$ and $O(n^{3.5})$ respectively. Our results build upon recently developed techniques for 
quantum linear algebra and pave the way for the development of quantum algorithms for a variety of applications in optimization and machine learning. 
\end{abstract}

\section{Introduction}

Semidefinite programming is an extremely useful and powerful technique that is used both in theory and in practice for convex optimization problems. It has also been widely used for providing approximate solutions to NP-hard problems, starting from the work of Goemans and Williamson \cite{GW95} for approximating the MAXCUT problem.   

A Semidefinite Program (SDP) is an optimization problem with inputs a vector $c \in \R^m$ and matrices $A^{(1)}, \ldots , A^{(m)},B$ in $\R^{n\times n}$, and the goal is to find a vector $x \in \R^m$ that minimizes the inner product $c^{t} x$, while at the same time the constraint $ \sum_{k \in [m]} x_{k} A^{(k)}  \succeq B$ is satisfied. In other words, an SDP is defined as

$$Opt(P) = \min_{x \in \R^{m}}  \{ c^{t} x \;|\; \sum_{k \in [m]} x_{k} A^{(k)}  \succeq B  \}.$$

\noindent One can also define the dual SDP as the following optimization problem,

\[ Opt(D)= \max_{Y \succeq 0 }  \{ Tr(BY) \;|\; Y \succeq 0, Tr(YA^{(j)})=c_{j} \}.  \]

\noindent For most cases of SDPs, strong duality implies that the two optimal values, $Opt(P)$ for the primal SDP and $Opt(D)$ for the dual SDP are actually equal. 
Semidefinite programs are a generalization of Linear Programs, which is the special case where all matrices are diagonal. The main advantage of both linear and semidefinite programs is that they encapsulate a large number of optimization problems and there are polynomial time algorithms for solving them. 

The ellipsoid algorithm \cite{K80} was the first provably polynomial-time algorithm to be given for LPs. In a seminal paper, Karmarkar \cite{K84} introduced an algorithm that is equivalent to an interior point method to improve the running time of the ellipsoid method. The first provably polynomial time algorithm for SDPs was given by Nesterov and Nemirovskii \cite{NN94} using interior point methods with self-concordant barrier functions.  

The running time of the best known general method for solving SDPs \cite{LWS15} is $O(m^{3} + mn^{\omega} + m^{2}ns \log (mn/\epsilon) )$ where $s$ is the sparsity, the maximum number of non zero entries in a row of the input matrices and the SDPs are solved to additive error $\epsilon$. The running time in the worst case is $\widetilde{O}(n^{6})$ for dense SDPs with $m=O(n^{2})$. For the case of linear number of constraint matrices ($m=O(n)$) the worst-case running time is $\widetilde{O}(n^4)$, or slightly better in case the sparsity of the constraint matrices is small.

For our results we will in fact use a version of the classical interior point method for SDPs based on \cite{BN15}, whose running time is
$$\tilde{O}(  ( n^{0.5} m^{3} + n^{2.5}m^{2} + n^{3.5}m) \log(1/\epsilon)).$$
For dense SDPs with $m=O(n^{2})$ the running time is $O(n^{6.5})$, while for instances with $m=O(n)$, the running time is $O(n^{4.5})$. This method can also be used for Linear Programming in $n$ dimensions with $m$ constraints, and running time $O(n^{2} (m+n)^{3/2} \log (1/\epsilon))$. 

The fastest known algorithm for linear programs is by Vaidya \cite{V89}, it has time complexity $O( (m+n)^{1.5} n L)$ where $L$ is the logarithm of the maximal sub-determinant of the constraint matrix. The main bottleneck for interior point style algorithms for both LPs and SDPs is maintaining suitable approximations of the inverse of the Hessian matrices at each step of the computation, Vaidya's algorithm uses a combination of pre-computation and low-rank updates to accelerate this step for linear programs. The interior point method was originally proposed for solving linear programs \cite{K84} and the best known LP algorithms using this approach \cite{V90} have complexity $O(m^{1.5}n^{2} +  m^{2}n )$ if all the 
arithmetic operations are performed up to a constant number of bits of precision. 

A different approach to SDP solving is the framework of Arora and Kale \cite{AK12, AHK05}. 
The Arora-Kale algorithm uses a variant of the multiplicative weights update method to iteratively find better solutions to the primal and the dual SDPs until a solution close to the optimal is found. The running time of the algorithm depends on the dimensions $m,n$ of the problem, the approximation guarantee $\epsilon$ between the solution and the optimal value, and upper bounds $R$ and $r$ on the "size" of the optimal primal and dual solutions (in some appropriate norm).  More precisely, an upper bound on the running time of this algorithm given in \cite{AGGW17} is, 
\[ \tilde{O} \big( nms \left( \frac{Rr}{\epsilon}  \right)^4 +ns\left( \frac{Rr}{\epsilon}  \right)^7 \big). \]
While the Arora-Kale framework has found many applications in complexity theory \cite{AK12}, for the case of solving general SDPs, it remains an algorithm of mostly theoretical interest. It is known that for many combinatorial problems (like MAXCUT or scheduling problems), the width $(Rr/\epsilon)$ grows at least linearly in the dimensions $n, m$ (Theorem 24, \cite{AGGW17}), making this algorithm infeasible in practice and with worse running time than the interior point method. 

Recently, quantum algorithms for solving semi definite programs (SDPs) using the Arora-Kale framework were first proposed by Brandao and Svore \cite{BS16} and subsequently improved by van Appeldoorn, Gilyen, Gribling and de Wolf \cite{AGGW17}. These algorithms had a better dependence on $n,m$ but a worse dependence 
on other parameters compared to the classical algorithm. Very recently, \cite{BKLLSW17} and independently \cite{AG18} (subsequent to a previous version of \cite{BKLLSW17}) provided an even better dependence
on the parameters $n,m$ and improved the dependence on the error to $(\frac{Rr}{\epsilon})^{4}$. In order to discuss these results and compare them to ours, we first need to describe the quantum input models for SDPs. 

The most basic input model is the sparse oracle model of $ \cite{BS16}, \cite{AGGW17}$, where the input matrices $A^{(i)}$ are assumed to be $s$-sparse and one has access to an oracle $O_{A}: \ket{i, k, l,0} \to \ket{i,k,l, index(i,k,l)}$ for computing the $l$-th element of $A^{(i)}_{k}$. The quantum state model was introduced in $\cite{BKLLSW17}$, and in this model each $A^{(i)} $ is split as a difference of positive semidefinite matrices ($A^{(i)} = A^{(i)}_{+} - A^{(i)}_{-}$) and we assume that we have access to purifications of the density matrices corresponding to $A^{(i)}_{+} , A^{(i)}_{-}$ for all $i \in [m]$. 

The input model most relevant for our work is the quantum operator model of \cite{AG18} where one assumes access to unitary block encodings of the the input matrices $A^{(i)}$, that is there are efficient implementations of unitary operators $U_{j}= \left ( \begin{matrix} 
A^{(j)}/\alpha_{j} & .  \\
. & .  \\
 \end{matrix} \right ) $. That is, we assume that the operations $\ket{j}\ket{ \phi} = \ket{j} U_{j} \ket{\phi} $ can be performed efficiently.  It is shown in \cite{AG18} that both the sparse oracle model and the 
 quantum state model can be viewed as instances of the operator model. The best upper bound for the quantum algorithm from \cite{AG18} in the operator model is, 
\all{ 
 \tilde{O} \en{ \en{ \sqrt{m} + \sqrt{n} \en{ \frac{Rr}{\epsilon} } }  \en{ \frac{Rr}{\epsilon}}^{4} \alpha}. 
}{optsdp}  
We note that this parameter $\alpha$ can be $\sqrt{n}$ in the worst case but it can in principle be smaller than that. The various input models for quantum SDP solvers assume oracle access to the input matrices in the form described above and do not address further the question 
of implementing these oracles. In this paper, we work in the quantum data structure model introduced in \cite{KP16, KP17} 
that explicitly provides a method for implementing block encodings for arbitrary matrices.

In the quantum data structure model the algorithms have quantum access to a data structure that stores the matrix $A^{(i)}, i \in [m]$ in a QRAM (Quantum Random Access Memory). The data structure is built in a single pass over a stream of matrix entries $(i, j, a_{ij})$ and the time required to process a single entry is poly-logarithmic in $n$. Thus, the construction of the data structure does not incur additional overhead over that required for storing the matrices $A^{(i)}$ sequentially into classical memory or the QRAM. 
More generally, we can define the quantum data structure model for a general data structure $D$ as follows. 
\begin{definition} \label{defer} 
A data structure for storing a dataset $D$ of size $N$ in the QRAM is said to be efficient if it can be constructed in a single pass over the entries $(i, d_{i})$ for $i \in [N]$ 
and the insertion and update time per entry is $O(\log^{2} N)$. 
\end{definition}

In \cite{KP16, KP17} we had given quantum algorithms for matrix multiplication and inversion for an arbitrary matrix $A \in \R^{n\times n}$ with $\norm{A}=1$ in the QRAM data structure model. The running time 
for these algorithms was $\widetilde{O}( \mu(A) \kappa^{2}(A)/\epsilon)$ where $\epsilon$ is the error, $\kappa(A)$ is the condition number, and $\mu(A)=  \min_{p\in [0,1]} (\norm{A}_{F}, \sqrt{s_{2p}(A)s_{(1-2p)}(A^{T})})$ for $s_{p}(A) = \max_{i \in [n]} \sum_{j \in [n]} A_{ij}^{p}$. Note that $\mu(A) \leq \norm{A}_{F} \leq \sqrt{n}$ as we have assumed that $\norm{A}=1$. 

The different terms in the minimum in the definition of $\mu(A)$ correspond to different choices for the data structure for storing $A$. Subsequently, 
\cite{CGJ18} pointed out that these results are equivalent to there being an efficient block encoding for $A$ with parameter $\alpha=\mu(A)$. Moreover, in \cite{CGJ18}, the dependence on precision for the quantum linear system solvers in the QRAM data structure model was improved to $\text{polylog}(1/\epsilon)$ and that on $\kappa$ to linear.
The QRAM data structure model thus explicitly provides efficient block encodings for matrix $A$ with $\alpha = \mu(A)$ and can be used to implement the operator model 
for SDPs as described above. 
 
The quantum interior point algorithm is developed in the QRAM data structure model and can therefore be compared to the best quantum SDP algorithm in the Arora-Kale framework \cite{AG18} with running time 
given by equation \eqref{optsdp}. We note that both $\alpha$ and $\max_{i \in [m]} \mu(A^{(i)})$ are at most $s$ when the input SDP matrices are $s$-sparse, so the algorithms 
remain comparable for the case of sparse SDPs as well. There is a linear lower bound on $Rr/\epsilon$ (Theorem 24, \cite{AGGW17}), and hence in the best case, i.e. when $Rr/\epsilon=O(n)$, the running time in equation \eqref{optsdp} is $\widetilde{O}(n^{6})$, which is similar to the running time for the classical interior point method.

\subsection{Our techniques and results}
In this paper we develop the first quantum algorithms for SDPs and LPs using interior point methods and obtain a significant polynomial speedup over the corresponding classical algorithms if 
the intermediate 
matrices arising in the interior point method are well-conditioned. 

The classical interior point method starts with a pair of feasible solutions to the SDP and iteratively updates the solutions, improving the duality gap in each iteration. 
Each iteration of the interior point method involves constructing and solving the Newton linear system whose solutions give the updates to be applied to the SDP 
solution. 

The main bottleneck for our quantum interior point method, as for the classical interior point method, is that the matrix for the Newton linear system is not given directly and is expensive to compute from the data. One of our main contributions is a technique for constructing block encodings for the Newton linear system matrix which allows us to solve this linear system with low cost in the quantum setting. We utilize the quantum techniques for linear algebra developed in \cite{KP16,KP17} and the improvements in precision for these linear algebra techniques in \cite{CGJ18, GLSW18}.

The quantum solution of the Newton linear system corresponds to a single step of the interior point method. We need a classical description of the solution of the linear system in order to define the linear system for the next step. We therefore perform tomography on the quantum state corresponding to the solution of the Newton linear system at each step of the method. The dimension of the state that we perform tomography on is $d=O(n^{2})$ so a super-linear tomography algorithm \cite{OW16, GLFBE10} would be prohibitively expensive. We provide a tomography algorithm that given a unitary for preparing a vector state, reconstructs a vector 
$\delta$-close to the target vector in the $\ell_{2}$ norm with complexity $O(\frac{d \log d}{\delta^{2}})$ which is linear in the dimension.

 We also prove a convergence theorem in order to upper bound the number of iterations of the method, taking into account the various errors introduced by the quantum algorithms. The algorithm outputs a pair of matrices $(S,Y)$ that are $\epsilon$-optimal $\xi$-approximate SDP solutions, where by $\epsilon$-optimal we mean that forthe duality gap we have $Tr(SY) < \epsilon$ and by $\xi$-approximate we mean that the SDP constraints are satisfied to error $\xi$. The classical analysis also implies that the algorithm converges if the precision for the tomography algorithm is $\delta= O(1/\kappa)$ where $\kappa$ is the condition number of the matrix $Y$ being updated.

We will provide the exact running time of our quantum interior point method in a later section after defining precisely the matrix of the Newton Linear system, but we describe it here without making explicit the matrices involved and suppressing logarithmic factors. Our interior point algorithm for SDPs has worst case running time, 
\[
\widetilde{O}(\frac{n^{2.5}}{\xi^{2}}  \mu \kappa^3 \log 1/\epsilon)
\]

\noindent The number of iterations $T$ for convergence is $\widetilde{O}(\sqrt{n})$, this is same as in the classical case. The term $\widetilde{O}(n^2 \kappa^2/\xi^2)$ comes from the tomography algorithm that reconstructs the solution to the Newton linear system in each step to $\ell_{2}$ error $\xi/\kappa$. The additional $\mu$ and $\kappa$ factors arise from the quantum linear system solver as described earlier. Note that since the Newton linear system matrix has dimension $(n^2+m) \times (n^2+m)$ the $\mu$ factor is at most $\sqrt{2}n$. 

lf the intermediate matrices arising in the method are well-conditioned, the worst case running time for our algorithm is $\widetilde{O}(n^{3.5})$ which is significantly better than the corresponding classical interior point method in \cite{BN15} whose running time is $O(n^{6.5})$ and the best known general SDP solver \cite{LWS15} whose running time is $O(n^6)$. 

While we have a worst case bound of $n$ on $\mu$, this parameter can be much smaller in practice than the worst case scenario. For example, in \cite{KL18}, a quantum linear system solver was used for dimensionality reduction and classification of the MNIST data set, and in that case the dimension of the corresponding matrix was $10^5$ while the value of $\mu$, when taken to be the maximum $\ell_1$ norm of any row, was found to be less than 10. Another remark is that the real condition number can be replaced by a smaller condition threshold. 

In this paper we focus on the case of dense SDPs where $m=O(n^{2})$, for the sparse case when $m=O(n)$ as in the classical case one can develop variants of the interior point method with faster running time. It is easy to see that these methods can be quantized, we do not address these methods for the sparse case in this work. 

For the special case of Linear Programs our algorithm has running time
\[
\widetilde{O}(\frac{n^{1.5}}{\xi^{2}}  \mu \kappa^3 \log (1/\epsilon))
\]
The condition number $\kappa$ here is a bit different than the SDP case, it is the ratio of the maximum to the minimum element of the intermediate solution vectors, while $\mu$ is in the worst case at most $\sqrt{2n}$.
The running time of our algorithm is $\widetilde{O}(n^2)$ if the intermediate Newton matrices and solution vectors are well conditioned, compared to $O(n^{3.5})$ for the corresponding classical algorithm in \cite{BN15} and \cite{V90}. We note that there 
is a better specialized classical algorithm for linear programs\cite{V89} with time complexity $O( (m+n)^{1.5} n L)$ where $L$ is the logarithm of the maximal sub-determinant of the constraint matrix.

Our results provide the first quantum SDP and LP solvers based on the interior point method, thus paving the way for a vast variety of applications. 
An interesting feature of our algorithm is that the quantum part is no more difficult than a quantum linear system solver, a circuit whose depth depends on $\mu$ and $\kappa$ but only logarithmically on the dimension and the error parameter. This quantum circuit is repeated independently at each step of the iterative method, and for each step a number of times required by the tomography. Therefore, building a quantum circuit for solving linear systems implies a quantum SDP solver.

 This paper is organized as follows. In Section \ref{prelim} we collect some linear algebra preliminaries and useful quantum procedures. The results on quantum linear system  solvers and quantum data structures that we need are introduced in Section \ref{elsa}. In Section \ref{tom} we provide an algorithm for vector state tomography. In 
Section \ref{cipm} we describe the classical interior point method and in \ref{clan} we provide a convergence analysis for the interior point method when the linear systems are solved approximately. We present the quantum interior point method for SDPs and LPs in Section \ref{qipm} and bound its running time. 

\section{Preliminaries} \label{prelim} 
We introduce some notation that is used throughout the paper. The entry-wise or Hadamard product of two matrices $A,B \in \R^{m\times n}$ is denoted by $A \odot B$. The concatenation of two vectors $a \in \R^{n}, b \in \R^{m}$ is denoted by $a \circ b$. The vector state $\ket{a}$ for $a \in \R^{n}$ is the quantum state $\frac{1}{ \norm{a}} \sum_{i \in [n]} a_{i} \ket{i}$. Given a vector $a \in \R^{n}$ the vectors obtained by taking the entry-wise square root and squares of $a$ are denoted as $\sqrt{a}$ and $a^{2}$ respectively. If $A, B \in \R^{n\times n}$, then $A \oplus B \in \R^{2n \times 2n}$ denotes the block diagonal matrix $\left ( \begin{matrix} 
A & 0 \\ 
0 & B \end{matrix} \right )$. 

The singular value decomposition (SVD) for a matrix $A \in \R^{m\times n}$ is denoted as $A= \sum_{i} \sigma_{i} u_{i} v_{i}^{t}$. The spectral decomposition  for a symmetric matrix $A \in \R^{n\times n}$ is denoted as $A= \sum_{i \in [n]} \lambda_{i} u_{i}u_{i}^{t}$. The condition number $\kappa(A):= \frac{\sigma_{max}(A)}{\sigma_{min}(A)}$ of an invertible matrix is the ratio between the largest and smallest singular values of $A$. The Frobenius norm $\norm{A}_{F} = \en{ \sum_{i} \sigma_{i}^{2} }^{1/2}$ and the spectral norm $\norm{A} = \sigma_{max}(A)$ are functions of the singular values. The $\ell_{2}$ norm of a vector $v$ is denoted as $\norm{v}$. The $i$-th column of $A$ is denoted by $A_{i}$ and the $j$-th row is denoted as $A^{j}$.

\subsection{Linear Algebra} 
We collect in this section linear algebra facts and lemmas that are used in later sections.

\begin{fact} \label{isospec} 
For all $A, B \in \R^{n \times n}$, $\norm{AB}_{F} = \norm{BA}_{F}$. 
\end{fact}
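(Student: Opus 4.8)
The plan is to reduce the claimed identity of Frobenius norms to an identity of traces and then attack that with the cyclic invariance of the trace. The starting point is the elementary identity $\norm{M}_F^2 = \mathrm{Tr}(M^{\top}M) = \mathrm{Tr}(M M^{\top})$, valid for every real square matrix $M$. Applying it once with $M = AB$ and once with $M = BA$ reduces the statement to the single scalar equality
\[
\mathrm{Tr}\!\left(A B B^{\top} A^{\top}\right) \;=\; \mathrm{Tr}\!\left(B A A^{\top} B^{\top}\right) ,
\]
after which taking square roots is harmless since Frobenius norms are nonnegative.

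To handle this trace identity I would first apply the cyclic property $\mathrm{Tr}(XY) = \mathrm{Tr}(YX)$ to rewrite the left-hand side as $\mathrm{Tr}\!\left(A^{\top}A\, B B^{\top}\right)$ and the right-hand side as $\mathrm{Tr}\!\left(B^{\top}B\, A A^{\top}\right)$; writing $\langle X, Y\rangle := \mathrm{Tr}(X^{\top}Y)$ for the Frobenius inner product, the target then reads
\[
\langle A^{\top}A,\; B B^{\top}\rangle \;=\; \langle A A^{\top},\; B^{\top}B\rangle .
\]
I would then feed in the singular value decompositions $A = \sum_i \sigma_i u_i v_i^{\top}$ and $B = \sum_j \tau_j w_j z_j^{\top}$ and expand both sides: the left one collapses to $\sum_{i,j}\sigma_i^2\tau_j^2\,\langle v_i, w_j\rangle^2$ and the right one to $\sum_{i,j}\sigma_i^2\tau_j^2\,\langle u_i, z_j\rangle^2$. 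In parallel I would record the companion fact that the label of this statement alludes to, namely that $AB$ and $BA$ are isospectral --- they have the same characteristic polynomial --- which by cyclicity of the trace gives $\mathrm{Tr}\!\left((AB)^k\right) = \mathrm{Tr}\!\left((BA)^k\right)$ for every $k \ge 1$, and in particular pins down $\mathrm{Tr}\!\left((AB)^2\right) = \mathrm{Tr}\!\left((BA)^2\right)$.

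The step I expect to be the main obstacle is the reconciliation of the two expansions: the left one pairs the \emph{right} singular vectors of $A$ with the \emph{left} singular vectors of $B$, while the right one pairs the \emph{left} singular vectors of $A$ with the \emph{right} singular vectors of $B$, and a single cyclic rotation under the trace does not interchange these two pairings. Making this matching go through is the heart of the argument, and that is where I would search for the right structural identity --- for instance relating the two pairings through the orthogonal polar factors of $A$ and $B$ --- before committing to a write-up. Along the way I would sanity-check the whole chain of equalities on scalars ($n = 1$), on diagonal $A, B$ (where both sides are manifestly $\sum_i A_{ii}^2 B_{ii}^2$), and on the case where one of $A, B$ is orthogonal (where both sides collapse to $\norm{A}_F^2$ by unitary invariance of the Frobenius norm), before relying on it in the general reduction.
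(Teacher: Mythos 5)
Your reduction to the trace identity $\mathrm{Tr}(A^{\top}A\,BB^{\top})=\mathrm{Tr}(AA^{\top}\,B^{\top}B)$ is correct, and your diagnosis that a cyclic rotation cannot interchange the two singular-vector pairings is not a temporary obstacle you can engineer around: the statement as written is false, so the "right structural identity" you were hoping to find does not exist. Take $A=e_1e_2^{\top}$ and $B=e_1e_1^{\top}$. Then $AB=e_1(e_2^{\top}e_1)e_1^{\top}=0$ while $BA=e_1(e_1^{\top}e_1)e_2^{\top}=e_1e_2^{\top}$, so $\norm{AB}_F=0\neq 1=\norm{BA}_F$. This also realizes the mismatch in your own SVD expansion: here $\langle v_1,w_1\rangle=0$ but $\langle u_1,z_1\rangle=1$, and the two sums genuinely differ. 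The label of the fact alludes to the true statement that $AB$ and $BA$ are isospectral, but the Frobenius norm is a function of singular values, not eigenvalues, and those are not shared; in particular $\mathrm{Tr}((AB)^2)=\mathrm{Tr}((BA)^2)$ is true yet irrelevant, since $\mathrm{Tr}(M^2)=\norm{M}_F^2$ only for symmetric (more generally normal) $M$. Your sanity checks (scalars, diagonal matrices, one factor orthogonal) all pass precisely because they land in such special cases, which is why they did not expose the failure.

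There is nothing on the paper's side to compare against: the fact is asserted without proof. The statement does hold under extra hypotheses, and then your argument closes in one line: if $A$ and $B$ are both symmetric (or both normal), the two traces coincide, e.g.\ $\mathrm{Tr}(A^2B^2)=\mathrm{Tr}(B^2A^2)$ by cyclicity, giving $\norm{AB}_F=\norm{BA}_F$. That weaker version is essentially what the paper needs. Its one invocation, in Lemma \ref{death}, applies the fact with the symmetric factor $\overline{S}^{1/2}$ against a non-symmetric factor; there the product in one order, $\overline{\nu}^{-1}\overline{S}^{1/2}\overline{Y}\,\overline{S}^{1/2}-I$, is symmetric and the other order, $\overline{\nu}^{-1}\overline{S}\,\overline{Y}-I$, is merely similar to it, so in general one only gets the inequality $\norm{\overline{\nu}^{-1}\overline{S}^{1/2}\overline{Y}\,\overline{S}^{1/2}-I}_F\leq\norm{\overline{\nu}^{-1}\overline{S}\,\overline{Y}-I}_F$ (Schur: the Frobenius norm dominates the $\ell_2$ norm of the eigenvalues, with equality only when $\overline{S}\,\overline{Y}$ is normal). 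That inequality happens to point in the direction the subsequent upper bound on $d(\overline{S},\overline{Y},\overline{\nu})$ requires, so the downstream argument survives, but as a standalone claim "for all $A,B$" the fact should be restricted to normal (e.g.\ symmetric) matrices, and your write-up should prove that restricted version rather than chase the general one.
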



\begin{fact} \label{frobfact} 
\noindent $\lambda_{min}(A) \norm{B}_{F} \leq \norm{AB}_{F} \leq \lambda_{max}(A)  \norm{B}_{F}$ for positive definite $A \succ 0$ and $B \in \R^{n\times n}$. 
\end{fact} 
\noindent A simple lemma that will be used later is stated below. 
\begin{Lemma} \label{frobineq} 
Let $B, Y, Y' \in \R^{n\times n}$ be such that $Y' \preceq (1+ \rho) Y$, then $\norm{Y'B}_{F} \leq (1+\rho) \norm{YB}_{F}$. 
\end{Lemma}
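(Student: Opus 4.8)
The plan is to reduce the Frobenius-norm inequality to a Löwner-order inequality between the \emph{squares} $Y'^2$ and $Y^2$, and then dispatch that inequality using the common eigenbasis of $Y$ and $Y'$.

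First I would record that, as wherever the order $\preceq$ is used, $Y$ and $Y'$ are symmetric positive semidefinite, so $Y'\preceq(1+\rho)Y$ is meaningful, and rewrite both sides of the target inequality via the trace identity $\norm{M}_F^2=\mathrm{Tr}(M^TM)$: since $Y,Y'$ are symmetric, $\norm{Y'B}_F^2=\mathrm{Tr}(B^TY'^2B)$ and $\norm{YB}_F^2=\mathrm{Tr}(B^TY^2B)$. Hence it suffices to prove $Y'^2\preceq(1+\rho)^2Y^2$: conjugating this inequality on the left by $B^T$ and on the right by $B$ preserves the Löwner order (for any $x$, $x^TB^T Y'^2 Bx=(Bx)^TY'^2(Bx)\le(1+\rho)^2(Bx)^TY^2(Bx)$), and since the trace is linear and nonnegative on PSD matrices it is monotone for $\preceq$, which gives $\mathrm{Tr}(B^TY'^2B)\le(1+\rho)^2\mathrm{Tr}(B^TY^2B)$; taking square roots then yields the claim.

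To establish $Y'^2\preceq(1+\rho)^2Y^2$: in the situation in which this lemma is invoked $Y$ and $Y'$ commute (in particular in the LP case the relevant matrices are diagonal), so I would simultaneously diagonalize, $Y=UDU^T$ and $Y'=UD'U^T$ with $D=\mathrm{diag}(\nu_i)$, $D'=\mathrm{diag}(\mu_i)$ and $\nu_i,\mu_i\ge 0$. Conjugating the hypothesis by $U^T$ turns $Y'\preceq(1+\rho)Y$ into the scalar inequalities $\mu_i\le(1+\rho)\nu_i$ for all $i$; squaring these gives $\mu_i^2\le(1+\rho)^2\nu_i^2$, i.e. $D'^2\preceq(1+\rho)^2D^2$, and conjugating back by $U$ gives the desired matrix inequality. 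Equivalently, and more directly, one can skip the order-on-squares statement: $\norm{Y'B}_F^2=\norm{D'U^TB}_F^2=\sum_i\mu_i^2\,\norm{(U^TB)^i}^2\le(1+\rho)^2\sum_i\nu_i^2\,\norm{(U^TB)^i}^2=(1+\rho)^2\norm{YB}_F^2$, where $(U^TB)^i$ is the $i$-th row of $U^TB$ and I use that left multiplication by the orthogonal matrix $U$ leaves the Frobenius norm unchanged; this is the same computation unpacked.

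The one genuinely delicate point is the passage from $Y'\preceq(1+\rho)Y$ to $Y'^2\preceq(1+\rho)^2Y^2$: the map $X\mapsto X^2$ is not operator monotone, so this implication is not available for arbitrary noncommuting PSD matrices, and it is precisely the commuting (in the LP case, diagonal) structure of the iterates to which the lemma is applied that makes the argument go through. Without it, the best one could extract using \factref{frobfact} applied to $Y^{1/2}$ and $Y'^{1/2}$ (plus $\norm{Y'^{1/2}B}_F^2=\mathrm{Tr}(B^TY'B)\le(1+\rho)\mathrm{Tr}(B^TYB)=(1+\rho)\norm{Y^{1/2}B}_F^2$) is a weaker bound of the shape $(1+\rho)\sqrt{\kappa(Y)}\,\norm{YB}_F$. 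So the proof I would write is: reduce to the square inequality via the trace identity, then finish by simultaneous diagonalization, flagging where commutativity is used.
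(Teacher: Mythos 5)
Your argument, as written, proves a \emph{modified} statement---the lemma with the additional hypothesis that $Y$ and $Y'$ commute (equivalently, with the hypothesis $Y'^2 \preceq (1+\rho)^2 Y^2$)---and that part is correct; once unpacked it is the same computation as the paper's, which decomposes $\norm{Y'B}_F^2 = \sum_i \norm{Y' b_i}^2$ over the columns $b_i$ of $B$. The substantive difference is that the paper asserts the columnwise inequality $\norm{Y'b_i} \le (1+\rho)\norm{Yb_i}$ directly, ``as $Y' \preceq (1+\rho)Y$'', with no commutativity assumption and no further justification. That assertion is exactly the square inequality $Y'^2 \preceq (1+\rho)^2 Y^2$ which you correctly flag as unavailable in general: squaring is not operator monotone, and indeed the lemma as literally stated is false. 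For instance, with $\rho=0$, $Y' = \left(\begin{smallmatrix} 1 & 1 \\ 1 & 1 \end{smallmatrix}\right) \preceq Y = \left(\begin{smallmatrix} 2 & 1 \\ 1 & 1 \end{smallmatrix}\right)$ (both PSD), yet $Y^2 - Y'^2 = \left(\begin{smallmatrix} 3 & 1 \\ 1 & 0 \end{smallmatrix}\right)$ has negative determinant, so taking $B$ with a single nonzero column along an eigenvector for the negative eigenvalue gives $\norm{Y'B}_F > \norm{YB}_F$.

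So your diagnosis of the delicate point is right, and it is a defect of the paper's one-line justification rather than of your write-up: the paper silently commits the same non-operator-monotone step, and some extra structure is needed for the statement to hold (commutation of $Y$ and $Y'$, or $Y$ a multiple of the identity as in the rescaled setting of Section \ref{proof}, or directly assuming $Y'^2 \preceq (1+\rho)^2 Y^2$); without such structure one only gets the weaker $\kappa(Y)$-dependent bound you mention via Fact \ref{frobfact}. The remaining difference between your repaired proof and the paper's is purely stylistic: you route through traces, PSD conjugation and simultaneous diagonalization, whereas the paper works columnwise; under the commutativity hypothesis the two are equivalent.
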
 
\begin{proof} 
The squared Frobenius norm of $Y'B$ is the sum of the squared norms of the columns of $Y'B$. 
Denoting the columns of $B$ by $b_{i}, i \in [n]$, we have, 
\al{ 
\norm{Y'B}_{F}^{2} = \sum_{i \in [n]} \norm{ Y'b_{i} } ^{2} \leq (1+ \rho)^{2}  \sum_{i \in [n]} \norm{ Yb_{i} } ^{2} = (1+ \rho)^{2} \norm{YB}_{F}^{2} 
} 
where the inequality follows as $Y' \preceq (1+\rho )Y$.  
\end{proof} 

\subsection{Quantum procedures}


We will require the following auxiliary lemma that gives a procedure for 
preparing $\ket{ a \circ b}$ given procedures for preparing $\ket{a}$ and $\ket{b}$. This lemma will be useful for preparing block encodings required for the quantum interior point method. 

\begin{Lemma} \label{concat} 
Let $a \in \R^{m}, b \in \R^{n}$ for integers $m,n>0$ and let $U_{a}: \ket{0^{\lceil \log m \rceil} } \to \ket{a}$ and $U_{b}: \ket{0^{\lceil \log n \rceil}  } \to \ket{b}$ be unitary operators that 
can be implemented in time $T(U_{a}), T(U_{b})$ respectively, then given $\norm{a}, \norm{b}$ 
the state $\ket{a \circ b}$ 
can be prepared in time $O(T(U_{a}) + T(U_{b}))$. 
\end{Lemma}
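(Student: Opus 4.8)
The plan is to realize $\ket{a\circ b}$ as a flag‑controlled superposition of $\ket{a}$ and $\ket{b}$, where a single ancilla qubit carries amplitudes proportional to $\norm{a}$ and $\norm{b}$. Recall that, as a vector in $\R^{m+n}$, the concatenation has entries $(a\circ b)_i = a_i$ for $i\in[m]$ and $(a\circ b)_{m+j}=b_j$ for $j\in[n]$, so that $\ket{a\circ b} = \frac{1}{\sqrt{\norm{a}^{2}+\norm{b}^{2}}}\en{\sum_{i\in[m]} a_i\ket{i} + \sum_{j\in[n]} b_j \ket{m+j}}$. The right‑hand side is exactly what one gets by appending a flag qubit that distinguishes the two blocks, which is the object we will build.

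First I would prepare the flag qubit in the state $\frac{1}{\sqrt{\norm{a}^{2}+\norm{b}^{2}}}\en{\norm{a}\ket{0}+\norm{b}\ket{1}}$; since $\norm{a}$ and $\norm{b}$ are given, this is a single‑qubit $R_y$ rotation by the angle $\theta = 2\arctan(\norm{b}/\norm{a})$, performed in $O(1)$ time. Next, using an index register of $\max(\lceil\log m\rceil,\lceil\log n\rceil)$ qubits, I would apply the controlled unitary $U_a$ on the first $\lceil\log m\rceil$ qubits when the flag is $\ket{0}$, and the controlled unitary $U_b$ on the first $\lceil\log n\rceil$ qubits when the flag is $\ket{1}$; the controlled versions run in time $O(T(U_a))$ and $O(T(U_b))$. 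The resulting state is
\[
\frac{1}{\sqrt{\norm{a}^{2}+\norm{b}^{2}}}\en{\norm{a}\,\ket{0}\ket{a} + \norm{b}\,\ket{1}\ket{b}}.
\]
Because $U_a$ maps $\ket{0^{\lceil\log m\rceil}}$ exactly to $\ket{a}$ and likewise for $U_b$, the state is supported only on $\ket{0}\ket{i}$ with $i\in[m]$ and $\ket{1}\ket{j}$ with $j\in[n]$, so there is no garbage register to uncompute. Identifying $\ket{0}\ket{i}$ with index $i$ and $\ket{1}\ket{j}$ with index $m+j$ — an injection of the support into $[m+n]$, i.e. merely a relabelling of basis vectors — this state is precisely $\ket{a\circ b}$, and the total time is $O(T(U_a)+T(U_b))$.

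The construction is essentially immediate, so the only point that needs a little care is the bookkeeping when $m$ and $n$ are not powers of two or when $\lceil\log m\rceil \neq \lceil\log n\rceil$: one pads the shorter index register with ancillas fixed to $\ket{0}$, which is harmless since the prepared subvectors have no support outside $[m]$ and $[n]$. If finite precision is tracked, the $R_y$ rotation contributes only a polylogarithmic overhead in the inverse target accuracy and a corresponding loss in fidelity, which is negligible and absorbed into the running‑time bound used throughout the paper; I do not expect any genuine obstacle here.
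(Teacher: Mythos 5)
Your construction is correct and is essentially the paper's own proof: the same norm-weighted flag qubit prepared from the given values of $\norm{a},\norm{b}$, followed by controlled applications of $U_{a}$ and $U_{b}$ to obtain $\frac{1}{\norm{a\circ b}}\en{\norm{a}\ket{0}\ket{a}+\norm{b}\ket{1}\ket{b}}$. The only difference is the final step, where the paper implements your ``relabelling'' explicitly as a unitary (a shift of the index register by $m$ conditioned on the flag, then a bit-flip of the flag conditioned on the index exceeding $m$, which disentangles the flag so it can be discarded); this is worth spelling out since later uses of $\ket{a\circ b}$ assume the standard computational-basis encoding on a single register, but its cost is negligible and the stated time bound is unaffected.
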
 
\begin{proof} 
We start with an auxiliary qubit in the 
state $\frac{\norm{a}}{\norm{a\circ b}} \ket{0} + \frac{\norm{b}}{\norm{a\circ b}} \ket{1}$ and an output register initialized to $\ket{ 0^{\lceil \log (m) \rceil + \lceil \log (n) \rceil} }$, 
where $\norm{ a \circ b} = (\norm{a}^{2} + \norm{b}^{2})^{1/2}$. The initial state is, 
\als{ 
\frac{\norm{a}}{\norm{a\circ b}} \ket{0,0^{  \lceil \log (m) \rceil + \lceil \log (n) \rceil}    } + \frac{\norm{b}}{\norm{a\circ b}} \ket{1,0^{\lceil \log (m) \rceil + \lceil \log (n) \rceil}   } 
} 
Conditioned on the control qubit being $\ket{0}$ apply $(I \otimes U_{a} )$ to obtain the state $ (I \otimes U_{a} ) \ket{ 0^{\lceil \log n \rceil}} \ket {0^{\lceil \log m \rceil} } = \ket {0^{\lceil \log n \rceil} } \ket{a}  $. 
Conditioned on the control qubit being $\ket{1}$ apply $(I \otimes U_{b})$ to obtain $(I \otimes U_{b}) \ket{0^{\lceil \log m \rceil}} \ket {0^{\lceil \log n \rceil} } =  \ket {0^{\lceil \log m \rceil} }  \ket{b}$. After this step we obtain the state, 
\als{ 
\frac{1}{\norm{a\circ b}}  \en{ \ket{0} \sum_{k\in [m]}  a_{k} \ket{k}  +\ket{1} \sum_{l \in [n] }   b_{l} \ket{l} } 
} 
Let $q=2^{\lceil \log (m) \rceil + \lceil \log (n) \rceil}$ be the dimension of the output register.  Define the unitary $V$ which acts as follows: $V\ket{0}\ket{t} = \ket{0}\ket{t}$ for $t \in [q]$ and $V\ket{1}\ket{t} = \ket{1}\ket{t+m \mod (q) }$ for $b\in [q]$. 
Note that $V$ is a unitary as it permutes the orthogonal basis states $\ket{0,b}, \ket{1,b}$ for $t\in [2q]$. Applying $V$ to the above state we obtain, 
\als{ 
\frac{1}{\norm{a\circ b}}  \en{ \ket{0} \sum_{k\in [m]}  a_{k} \ket{k}  +\ket{1} \sum_{l \in [n] }   b_{l} \ket{l+m} } 
} 
Apply a bit-flip conditioned on the auxiliary register being in a state $\ket{t'}$ such that $t'>m$, this operation erases the control qubit as it maps $\ket{1} \ket{l+m} \to \ket{0} \ket{l+m}$. 
Discarding the auxiliary qubit, after this operation we have the desired state, 
\als{ 
\frac{1}{\norm{a\circ b}} \en{ \sum_{k\in [m]}  a_{k} \ket{k}  + \sum_{l \in [n] }  b_{l} \ket{l+m} }  = \ket{a \circ b} 
} 
The time required for the procedure is $T(U_{a}) + T(U_{b})$ for the conditional operations, the remaining steps have negligible cost. 
\end{proof} 

\noindent We also state an approximate version of the above lemma where the norms $\norm{a}, \norm{b}$ are estimated within error $(1\pm\delta)$. 
\begin{corollary} 
Given estimates $\overline{\norm{a}} \in (1\pm \delta) \norm{a}, \overline{\norm{b}} \in (1\pm \delta) \norm{b}$ and unitaries $U_{a}, U_{b}$ in Lemma \ref{concat} a state $\ket{z}$ such that $\norm{\ket{z} - \ket{a \circ b} }_{2} \leq 2\delta$ can be prepared in time  $O(T(U_{a}) + T(U_{b}))$. 
\end{corollary}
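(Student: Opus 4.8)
The plan is to run \emph{verbatim} the procedure of Lemma~\ref{concat}, changing only the way the control qubit is initialized. First I would prepare the auxiliary qubit in the state $\frac{\overline{\norm{a}}}{\overline{\norm{a\circ b}}}\ket{0} + \frac{\overline{\norm{b}}}{\overline{\norm{a\circ b}}}\ket{1}$, where $\overline{\norm{a\circ b}} := (\overline{\norm{a}}^{2}+\overline{\norm{b}}^{2})^{1/2}$; this is the only place the norms enter the construction. Then I would apply exactly the same sequence of operations as before (the conditional $U_{a}$ and $U_{b}$, the index-shift permutation $V$, and the bit-flip that uncomputes the control register), all of which are unitary, cost $O(T(U_{a})+T(U_{b}))$, and are unaffected by the change of normalization. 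Tracking amplitudes through these steps exactly as in Lemma~\ref{concat}, the output is the unit vector
\[
\ket{z} \;=\; \frac{1}{\sqrt{\overline{\norm{a}}^{2}+\overline{\norm{b}}^{2}}}\left( \frac{\overline{\norm{a}}}{\norm{a}}\sum_{k\in[m]} a_{k}\ket{k} \;+\; \frac{\overline{\norm{b}}}{\norm{b}}\sum_{l\in[n]} b_{l}\ket{l+m} \right),
\]
which is automatically normalized (the sum of squared amplitudes telescopes to $1$). The running-time claim is then immediate, and it only remains to bound $\norm{\ket{z}-\ket{a\circ b}}$.

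For the error bound I would set $p := \overline{\norm{a}}/\norm{a}$ and $q := \overline{\norm{b}}/\norm{b}$, both in $[1-\delta,1+\delta]$, and observe that $\ket{z}$ and $\ket{a\circ b}$ both lie in the two-dimensional subspace spanned by the orthonormal pair $\ket{u}:=\frac{1}{\norm{a}}\sum_{k}a_{k}\ket{k}$ and $\ket{v}:=\frac{1}{\norm{b}}\sum_{l}b_{l}\ket{l+m}$. Writing $c:=\norm{a}/\norm{a\circ b}$ and $s:=\norm{b}/\norm{a\circ b}$ (so $c^{2}+s^{2}=1$), we have $\ket{a\circ b}=c\ket{u}+s\ket{v}$ and $\ket{z}=\frac{pc\ket{u}+qs\ket{v}}{\sqrt{p^{2}c^{2}+q^{2}s^{2}}}$. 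The cleanest finish is via the overlap: a direct computation gives $1-\langle z|a\circ b\rangle^{2}=\frac{c^{2}s^{2}(p-q)^{2}}{p^{2}c^{2}+q^{2}s^{2}}$, and since $(p-q)^{2}\le 4\delta^{2}$, $c^{2}s^{2}\le 1/4$, and $p^{2}c^{2}+q^{2}s^{2}\ge (1-\delta)^{2}$, this is at most $\delta^{2}/(1-\delta)^{2}$; hence $\norm{\ket{z}-\ket{a\circ b}}^{2}=2-2\langle z|a\circ b\rangle\le 2\delta^{2}/(1-\delta)^{2}$, which is at most $4\delta^{2}$ for the (small) values of $\delta$ in which the corollary is applied, giving the stated bound $2\delta$. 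An alternative route avoiding the overlap identity is to note that the \emph{unnormalized} vector $\ket{w}:=pc\ket{u}+qs\ket{v}$ satisfies $\norm{\ket{w}-\ket{a\circ b}}=\sqrt{(p-1)^{2}c^{2}+(q-1)^{2}s^{2}}\le\delta$ and $\norm{\ket{w}}\ge 1-\delta$, so the triangle inequality applied to $\ket{z}=\ket{w}/\norm{\ket{w}}$ yields $\norm{\ket{z}-\ket{a\circ b}}\le \frac{\delta}{1-\delta}+\frac{\delta}{1-\delta}$.

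I do not expect any real obstacle here: the construction is identical to Lemma~\ref{concat} and the bound is a routine error-propagation calculation. The only point that needs a moment's care is that dividing by the \emph{wrong} normalization $\overline{\norm{a\circ b}}$ instead of the true $\norm{a\circ b}$ could in principle amplify the error; this is precisely why the hypotheses are stated multiplicatively ($\overline{\norm{a}}\in(1\pm\delta)\norm{a}$, $\overline{\norm{b}}\in(1\pm\delta)\norm{b}$), and it is controlled by the denominator bound $p^{2}c^{2}+q^{2}s^{2}\ge(1-\delta)^{2}$ (equivalently $\norm{\ket{w}}\ge 1-\delta$) above. If one insists on the constant being exactly $2$ rather than $2/(1-\delta)$, it suffices to observe that the statement is only invoked for $\delta$ bounded away from $1$, or to absorb the slack into $\delta$.
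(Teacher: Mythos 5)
Your proposal is correct and takes essentially the same approach as the paper: the paper's proof simply says to rerun the procedure of Lemma~\ref{concat} with $\norm{a},\norm{b}$ replaced by the estimates, leaving the error propagation as straightforward, which you carry out explicitly. (A minor simplification of your second route: since $\ket{z}=\ket{w}/\norm{\ket{w}}$, one has $\norm{\ket{z}-\ket{w}}=|1-\norm{\ket{w}}|\leq\delta$, so the triangle inequality gives exactly $\norm{\ket{z}-\ket{a\circ b}}\leq 2\delta$ with no restriction on $\delta$.)
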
 
\noindent The proof is straightforward and follows by replacing $\norm{a}, \norm{b}$ in the proof of Lemma \ref{concat} by the respective estimates.

\section{Quantum linear system solvers and QRAM data structures} \label{elsa} 
In this section we collect the results on quantum linear algebra primitives and QRAM data structures that are required for the interior point method.

\subsection{Quantum linear system solvers} 
We assume without loss of generality that the matrices $A \in \R^{n \times n}$ are symmetric. If the matrix $A$ is rectangular or not symmetric, then it is well known we can 
work instead with the symmetrized matrix 
$ \overline{A} = \left ( \begin{matrix}
0,  &A \\ 
A^{T} , &0 \\
\end{matrix}  \right )$.

In \cite{KP16, KP17} we constructed efficient data structures (Definition \ref{defer}) for storing matrices $A$ that were used to obtain algorithms for quantum linear algebra operations including matrix inversion and multiplication with running time $O(\mu(A) \kappa^{2}(A)/\epsilon)$ where 
$\mu(A) \leq \sqrt{n}$ is a factor that depends on the matrix $A \in \R^{n\times n}$ and $\epsilon$ is the accuracy in the $\ell_{2}$ norm. These algorithms used phase estimation and amplitude amplification and therefore require time inverse polynomial in the precision and quadratic in the condition number $\kappa$. 
In recent work \cite{CGJ18}, the dependence on precision for the quantum linear system solvers in the QRAM data structure model was improved to $\text{polylog}(1/\epsilon)$ and that on $\kappa$ to linear. In order to state the improved results, we 
recall the notion of a $(\mu, t, \delta)$ block encoding of a matrix introduced in \cite{CGJ18}. 
\begin{definition} 
Let $A \in \R^{n\times n}$ be a $\lceil \log n \rceil$-qubit operator, an $(\lceil \log n \rceil+t)$ qubit unitary $U$ is an $(\mu, t, \delta)$ block encoding of $A$ if 
$U= \left ( \begin{matrix} 
\widetilde{A} & .  \\
. & .  \\
\end{matrix} 
\right ) $ such that $\norm{ \mu \widetilde{A} - A } \leq \delta$. A $(\mu, t, \delta)$ block encoding of $A$ is said to be efficient if it can be implemented in time $T_{U}=O(\text{polylog}(n))$. 
\end{definition}  
\noindent We next state the results on improved linear system solvers and matrix multiplication from \cite{CGJ18, GLSW18}. In the theorem stated below, the first part is given as Lemma 27, the second part of the result follows from Lemma 22 and the third part follows from Corollary 29 and Theorem 21 in \cite{CGJ18}. The singular value transformation
approach \cite{GLSW18} can also be used to obtain these results. 
\begin{theorem} \label{qlsa} 
\cite{CGJ18, GLSW18} Let $A \in \R^{n\times n}$ be a matrix with non-zero eigenvalues in the interval $[-1, -1/\kappa] \cup [1, 1/\kappa]$. Given an implementation of an $(\mu, O(\log n), \delta )$ block encoding for $A$ in time $T_{U}$ and a procedure for preparing state $\ket{b}$ in time $T_{b}$, 
\enum{ 
\item If $\delta \leq \frac{\epsilon}{\kappa^{2} \text{polylog} (\kappa, 1/\epsilon)}$ then a state $\epsilon$-close to $\ket{A^{-1} b}$ can be generated in time 
$O((T_{U} \kappa \mu+ T_{b} \kappa) \text{polylog}(\kappa \mu /\epsilon))$. 
\item If $\delta \leq \frac{\epsilon}{2\kappa}$ then a state $\epsilon$-close to $\ket{A b}$ can be generated in time $O((T_{U} \kappa \mu+ T_{b} \kappa) \text{polylog}(\kappa \mu /\epsilon))$. 
\item For $\epsilon>0$ and $\delta$ as in parts 1 and 2 and $\mathcal{A} \in \{ A, A^{-1} \}$, an estimate $\Lambda$ such that $\Lambda \in (1\pm \epsilon) \norm{ \mathcal{A} b}$ with probability $(1-\delta)$ can be generated in time $O((T_{U}+T_{b}) \frac{ \kappa \mu}{ \epsilon} \text{polylog}(\kappa \mu/\epsilon))$. 
} 
\end{theorem}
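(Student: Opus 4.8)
The plan is to reduce all three parts to one pipeline and then invoke the corresponding lemmas of \cite{CGJ18} (Lemma~27, Lemma~22, Corollary~29, Theorem~21) or their singular value transformation analogues in \cite{GLSW18}. The pipeline is: (i) pick a polynomial $p$ bounded by $1$ in sup-norm on $[-1,1]$ that approximates the scalar function we want to apply ($x\mapsto 1/x$ for part~1, $x\mapsto x$ for part~2) on the interval where $|x|\in[1/\kappa,1]$; (ii) run QSVT with $p$ on the given $(\mu,O(\log n),\delta)$ block encoding $U$, which costs $O(\deg p)$ calls to $U$ and $U^\dagger$ and produces a block encoding $W$ of $p(A/\mu)$, using that $A$ is symmetric so that the eigenvalues of $A$ of absolute value in $[1/\kappa,1]$ are exactly its singular values; (iii) apply $W$ to the state $\ket{b}$, obtaining a state whose component in a flagged subspace is proportional to the target $\ket{A^{-1}b}$ or $\ket{Ab}$ with a known, inverse-polynomially small amplitude; (iv) recover that component by amplitude amplification (parts~1--2), or measure its weight by amplitude estimation (part~3).

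For part~1 the first concrete ingredient is the classical fact that $1/x$ admits, on $\{|x|\in[x_{\min},1]\}$, an odd polynomial approximation of degree $O(\tfrac{1}{x_{\min}}\log(1/\epsilon))$ that stays bounded by $1$ on $[-1,1]$; here $x_{\min}=\Theta(1/(\kappa\mu))$ because the block encoding presents $A/\mu$ rather than $A$, so $\deg p=O(\kappa\mu\,\polylog(\kappa\mu/\epsilon))$, which is exactly where the factor $\kappa\mu$ multiplying $T_U$ comes from. The flagged amplitude after step~(iii) is $\Omega(1/\kappa)$, so naive amplitude amplification needs $\Theta(\kappa)$ rounds and yields only a $\widetilde{O}(\kappa^2\mu)$ bound; to obtain the stated $O(T_U\kappa\mu+T_b\kappa)$ one instead realizes $\ket{A^{-1}b}$ as a variable-stopping-time algorithm --- a cascade of segments inverting geometrically finer portions of the spectrum with geometrically growing QSVT degrees and branching amplitudes --- and applies Ambainis's variable-time amplitude amplification, whose cost is governed by the $\ell_2$-average rather than the maximum of the segment lengths. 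I expect this variable-time construction, together with the propagation of the QSVT error through it, to be the main obstacle. It also explains the stringent requirement on $\delta$: a block-encoding error $\delta$ perturbs the argument of $p$ by $\delta/\mu$, the inverse-approximant has derivative of size $\Theta(\kappa\mu)$ on its domain, and the flagged component of the output has norm only $\Theta(1/\kappa)$, so the relative error of the final state scales like $\Theta(\kappa^2)\cdot\delta$, forcing $\delta=O(\epsilon/(\kappa^2\polylog(\kappa,1/\epsilon)))$.

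For part~2 the function is the identity, $p$ is a fixed low-degree smoothed polynomial, QSVT is essentially $U$ itself, and the flagged amplitude of $\ket{Ab}$ is $\norm{Ab}/(\mu\norm{b})\ge 1/(\kappa\mu)$; since there is no $1/x$ blow-up, only the $\kappa$ from $\norm{Ab}\ge\norm{b}/\kappa$ enters the error analysis and $\delta\le\epsilon/(2\kappa)$ suffices, while the same variable-time amplitude amplification turns the $\Theta(\mu\kappa)$ amplification rounds into the $O(T_U\kappa\mu+T_b\kappa)$ runtime. For part~3 one runs amplitude estimation --- again in a variable-time version when $\mathcal{A}=A^{-1}$, to keep the $T_U$-coefficient at $\kappa\mu$ rather than $\kappa^2\mu$ --- on the same flagged-subspace projector: the amplitude equals $\norm{\mathcal{A}b}/(c_{\mathcal{A}}\norm{b})$ with $c_{\mathcal{A}}=\Theta(\mu)$ or $\Theta(\mu\kappa)$ a known constant, so estimating it to relative error $\epsilon$ costs $O(\kappa\mu/\epsilon)$ rounds of $O(T_U+T_b)$ each (absorbing the QSVT degree into polylog factors), a median over $O(\log(1/\delta))$ repetitions boosts the success probability to $1-\delta$, and multiplying by $c_{\mathcal{A}}$ returns $\Lambda\in(1\pm\epsilon)\norm{\mathcal{A}b}$. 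The only genuinely delicate bookkeeping, beyond citing \cite{CGJ18,GLSW18}, is the interaction between the normalization $\mu$, the domain rescaling it forces on the approximating polynomials, and the precision $\delta$ --- which is precisely what dictates the stated hypotheses on $\delta$ in each part.
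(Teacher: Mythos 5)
Your proposal is correct and takes essentially the same route as the paper, which offers no internal argument at all: it simply attributes part~1 to Lemma~27, part~2 to Lemma~22, and part~3 to Corollary~29 and Theorem~21 of \cite{CGJ18} (with \cite{GLSW18} as an alternative), and your reduction to those same lemmas, together with your sketch of their internals (degree-$O(\kappa\mu\,\mathrm{polylog})$ polynomial inversion of $A/\mu$, variable-time amplitude amplification to make the $\kappa$-dependence linear, amplitude estimation for the norm, and the resulting constraints on $\delta$), matches how the cited results are actually proved. One small caveat: in part~2 the decoupling of $T_b$ from $\mu$ does not come from variable-time amplification (matrix--vector multiplication has no variable-stopping-time structure) but from amplifying the singular values of the block-encoded $A/\mu$ before ordinary amplitude amplification; since you ultimately defer to the cited lemma, this does not affect the correctness of your proposal.
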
 
\noindent We require the following simple result on composing such operations.
\begin{theorem} \label{qlsa1}
\cite{CGJ18, GLSW18}  
Given $(\mu(M_{i}), O(\log (n)), \delta_{i})$-block encoding for matrices $M_{i}$ implemented in time $T_{i}$ for $i \in \{1,2,3\}$ such that 
$\delta_{i} \leq \widetilde{O}(\epsilon/\kappa(M_{i})^{2})$ and state $\ket{b}$, a state $\epsilon$-close to the $\ket{\mathcal{M} b}$ for $\mathcal{M}= \prod_i \mathcal{M}_{i} $ with $\mathcal{M}_{i} \in \{ M_{i},M_{i}^{-1} \}$ can be 
generated in time 
$\widetilde{O}(\kappa(\mathcal{M}) (\sum_{i} \mu(\mathcal{M}_{i}) ) (\sum_{i} T_{i}) \log (1/\epsilon))$ and an $\epsilon$-estimate to the norm $\norm{\mathcal{M} b}$ in time 
$\widetilde{O}(\frac{1}{\epsilon}\kappa(\mathcal{M}) (\sum_{i} \mu(\mathcal{M}_{i}) ) (\sum_{i } T_{i}) \log (1/\epsilon))$.  
\end{theorem}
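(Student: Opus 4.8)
The plan is to realize the composed operator by applying \thmref{qlsa} once per factor, chaining the outputs: rather than a fixed input vector, at each stage we hand the next invocation the state-preparation unitary (and the norm estimate) produced by the previous stage.

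Concretely, since there are three factors, I would first use \thmref{qlsa} on $M_3$ --- its Part~1 if $\mathcal M_3 = M_3^{-1}$ and its Part~2 if $\mathcal M_3 = M_3$, which is legitimate because the hypothesis $\delta_3 \le \widetilde{O}(\epsilon/\kappa(M_3)^2)$ meets the $\delta$-requirements of both parts --- to obtain a unitary preparing a state $\epsilon_3$-close to $\ket{\mathcal M_3 b}$, together with a $(1\pm\epsilon_3')$-estimate of $\norm{\mathcal M_3 b}$ from Part~3. Feeding this unitary as the ``$\ket b$'' of a second invocation on $M_2$ gives a preparation of $\ket{\mathcal M_2\mathcal M_3 b}$, and a third invocation on $M_1$ gives $\ket{\mathcal M b}$ for $\mathcal M = \mathcal M_1\mathcal M_2\mathcal M_3$. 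The running time telescopes: each stage multiplies the current cost by the $O(\kappa\mu\,\text{polylog})$ factor of \thmref{qlsa} and adds an $O(\kappa T_i\,\text{polylog})$ term, so after collecting logarithmic factors one reads off a bound of the claimed shape $\widetilde{O}\big(\kappa(\mathcal M)(\sum_i\mu(\mathcal M_i))(\sum_i T_i)\log(1/\epsilon)\big)$, using $\norm{\mathcal M b}\ge 1/\kappa(\mathcal M)$ (for $\norm b=1$) to convert the amplitude-amplification depth into the single factor $\kappa(\mathcal M)$. For the norm one telescopes $\norm{\mathcal M b} = \norm{\mathcal M_1\widehat{u}_2}\cdot\norm{\mathcal M_2\widehat{u}_3}\cdot\norm{\mathcal M_3 b}$, with $u_k$ the $k$-th partial product and $\widehat{u}_k$ its normalization, each factor a $(1\pm O(\epsilon))$-estimate from Part~3; since Part~3 costs an extra $1/\epsilon$ over state preparation, this yields the second, $\frac1\epsilon$-heavier bound.

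An equivalent route, which makes the dependence on $\sum_i T_i$ transparent, is to first convert each inverted factor into a block encoding of $M_i^{-1}$ with subnormalization $\widetilde{O}(\kappa(M_i))$ (the block-encoding form of matrix inversion from \cite{CGJ18, GLSW18}), compose the three block encodings into one block encoding of $\mathcal M$ --- subnormalizations multiply, implementation times add --- and then apply \thmref{qlsa} once to this composite. Either way the same quantities appear.

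The step I expect to be the real obstacle is the precision accounting. I would need to check that an $O(\eta)$ error in the input state, pushed through a subsequent linear map $A$, produces at most an $O(\kappa(A)\eta)$ error in $\ket{Ab}$, and that amplitude amplification inflates errors by at most the inverse of the success amplitude; granting this, choosing each stage's target precision smaller than the following stage's tolerance by a fixed $\text{poly}(\kappa)$ factor keeps the total error below $\epsilon$ at the cost of only logarithmic overhead. The subtler point is that the straightforward chaining naturally produces a product of the individual condition numbers $\prod_i\kappa(M_i)$ rather than $\kappa(\mathcal M)$; obtaining the bound in terms of $\kappa(\mathcal M)$ itself relies on the variable-time amplitude amplification built into \thmref{qlsa}, so the crux is to invoke that theorem in a way that the composed operator's own conditioning controls the cost.
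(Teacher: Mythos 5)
The paper gives essentially no proof of this statement: it is attributed to \cite{CGJ18, GLSW18}, and the only justification offered is the one-sentence remark following the theorem, namely that the single factor $\kappa(\mathcal{M})$ (rather than $\prod_i \kappa(\mathcal{M}_{i})$) appears because amplitude amplification is \emph{not} performed after each matrix operation but only once at the end, after the block-encoded operations have been composed. Your second route --- convert each inverted factor into a block encoding of $M_i^{-1}$, compose the three encodings (subnormalizations multiply, implementation times add), and run the machinery of \thmref{qlsa} a single time on the composite --- is exactly this mechanism, so on that route you coincide with the paper.

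The gap is in your primary route and in the claim that the two routes are equivalent. Chaining three complete invocations of \thmref{qlsa}, each performing its own amplification, makes each stage's entire cost the $T_b$ of the next stage, so the total scales like $\kappa_1\kappa_2\kappa_3$ times the largest $\mu_i T_i$ rather than $\kappa(\mathcal{M})(\sum_i \mu_i)(\sum_i T_i)$, and the discrepancy can be unbounded (e.g.\ $\mathcal{M}=M_1 M_1^{-1}$ has $\kappa(\mathcal{M})=1$). You flag this yourself, but the repair you propose --- ``the variable-time amplitude amplification built into \thmref{qlsa}'' --- is not the mechanism and cannot help: you are invoking \thmref{qlsa} as a black box once per factor, and whatever happens inside one factor's invocation cannot make the cost depend on the conditioning of the composed product; the loss comes precisely from amplifying (and paying the $\kappa_i\mu_i$ overhead) at every intermediate stage. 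The correct resolution is the deferral of all amplification to the end, i.e.\ your second route, which is therefore not a stylistic variant but the actual argument. One further bookkeeping caveat: composing block encodings multiplies their subnormalizations, so the bound this argument naturally produces involves the product of the three subnormalizations divided by $\norm{\mathcal{M}b}$; turning that into the precise form stated in the theorem (a single $\kappa(\mathcal{M})$ times a \emph{sum} of $\mu$'s) requires accounting that neither you nor the paper spells out --- the paper simply defers it to \cite{CGJ18, GLSW18}.
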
 
\noindent The running time has a factor of $\kappa(\mathcal{M})$ instead of a factor $\prod_i \kappa(\mathcal{M}_{i})$ since we do not perform amplitude amplification after every matrix operation but only at the end. 


It is standard to assume for quantum linear system solvers that the eigenvalues of $A$ belong to the interval $[-1, -1/\kappa] \cup [1, 1/\kappa]$. 
The quantum data structures constructed in \cite{KP16, KP17} implement a unitary $U$ which is a 
$(\mu(A), \log(n), 0)$ block encoding for a matrix $A$ in time $T_{U}=O(\text{polylog}(n))$. 

\begin{theorem}  \label{ds} 
\cite{KP17, KP16} There are efficient QRAM data structures for storing vectors $v_{i} \in \R^{n}, i \in [m]$ and matrices $A\in \R^{n\times n}$ such that with access 
to these data structures there is, 
\enum{ 
\item  An $\widetilde{O}(1)$ time state preparation procedure $\ket{i} \ket{0} \to \ket{i} \ket{ v_{i}}$ for $i \in [m]$. 
\item  An efficient $(\mu(A), O(\log n), 0)$ block encoding for $A$ for \\ $\mu(A)= \min_{p\in [0,1]} (\norm{A}_{F}, \sqrt{s_{2p}(A)s_{(1-2p)}(A^{T})})$. 
} 
\end{theorem}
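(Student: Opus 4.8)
The plan is to display the binary-tree QRAM data structures of \cite{KP16, KP17}, check that they are efficient in the sense of Definition~\ref{defer}, and show that quantum access to them supports both stated procedures with only polylogarithmic overhead.

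\textbf{Part 1 (vector states).} For each $v_i \in \R^n$ I would keep a complete binary tree $B_i$ of depth $\lceil \log n \rceil$ whose leaf $j$ stores the pair $(v_{ij}^2, \text{sgn}(v_{ij}))$ and every internal node stores the sum of the values of its two children, so the root stores $\norm{v_i}^2$. An insertion or update of a single entry $(i,j,v_{ij})$ affects only the $O(\log n)$ nodes on one leaf-to-root path of $B_i$, and each affected node is refreshed by a single addition on $O(\log n)$-bit registers, giving $O(\log^2 n)$ time per entry. To realise $\ket{i}\ket{0}\to\ket{i}\ket{v_i}$, I would, controlled on the index register, descend $B_i$ from the root: at level $\ell$ apply to the $(\ell+1)$-st output qubit a controlled rotation by the angle $\theta$ with $\cos^2\theta$ equal to the ratio of the left-child value to the current-node value, the two values being read by one QRAM query; after $\lceil \log n\rceil$ levels the amplitude on $\ket{j}$ is $|v_{ij}|/\norm{v_i}$, and a final controlled phase installs the stored signs. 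The circuit has depth $O(\log n)$, makes $O(\log n)$ QRAM queries, and uses $\text{polylog}(n)$ gates for arithmetic, so it runs in $\widetilde{O}(1)$ time.

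\textbf{Part 2 (block encoding).} Work with two $n$-dimensional registers $R_1,R_2$. For $\mu(A)=\norm{A}_{F}$, I would additionally keep a tree $\widehat{B}$ whose leaf $i$ holds $(\norm{A^i}^2,+)$. Define $P:\ket{0}_{R_1}\ket{i}_{R_2}\mapsto \ket{i}_{R_1}\ket{A^i}_{R_2}$ (copy $i$ into $R_1$, then controlled on $R_1$ clear $R_2$ and prepare in $R_2$ the vector state of the $i$-th row via the Part~1 procedure with $B_i$) and $Q:\ket{0}_{R_1}\ket{j}_{R_2}\mapsto \bigl(\frac{1}{\norm{A}_{F}}\sum_m\norm{A^m}\ket{m}\bigr)_{R_1}\ket{j}_{R_2}$ (Part~1 procedure on $R_1$ with $\widehat{B}$). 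Each map sends an orthonormal input set to an orthonormal output set, hence extends to a unitary, and each is implementable in $\widetilde{O}(1)$ time. The one-line overlap computation gives $\bra{0}_{R_1}\bra{i}_{R_2}\,P^\dagger Q\,\ket{0}_{R_1}\ket{j}_{R_2}=\frac{A_{ij}}{\norm{A^i}}\cdot\frac{\norm{A^i}}{\norm{A}_{F}}=\frac{A_{ij}}{\norm{A}_{F}}$, so $U=P^\dagger Q$ is an efficient $(\norm{A}_{F},O(\log n),0)$ block encoding of $A$. For the parametrised bound, for each $p\in[0,1]$ I would instead store tree data structures for the entry-wise power with exponent $p$ taken along the rows of $A$ and the complementary power taken along the columns, with one extra ``slack'' leaf appended to each stored vector so that every row vector has squared norm exactly $s_{2p}(A)$ and every column vector the corresponding quantity, the slack leaves sitting in an auxiliary register and chosen so that those touched by $P$ are orthogonal to those touched by $Q$. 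The same overlap computation, with $\text{sgn}(A_{ij})|A_{ij}|^{p}|A_{ij}|^{1-p}=A_{ij}$ in the numerator and the two fixed norms in the denominator, yields $\bra{0}\bra{i}\,U\,\ket{0}\ket{j}=A_{ij}/\sqrt{s_{2p}(A)\,s_{(1-2p)}(A^T)}$; minimising over $p$ and retaining the Frobenius option gives the claimed efficient $(\mu(A),O(\log n),0)$ block encoding.

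\textbf{Main obstacle.} The part calling for care is obtaining the block encoding \emph{exactly}, i.e. the ``$0$'' error: the slack padding must be arranged so that (i) each output of $P$ and of $Q$ is a genuine unit vector, (ii) the data components contribute exactly $A_{ij}$ to the numerator while the denominators are the \emph{fixed} norms appearing under the root of $\mu(A)$ rather than the smaller individual row and column norms, and (iii) the slack subspaces used by $P$ and by $Q$ are mutually orthogonal, so they drop out of $P^\dagger Q$. One should also observe that finite-precision storage of the tree values perturbs amplitudes only by an amount absorbable into the $\text{polylog}$ factors, so the ``$0$'' is to be read as zero up to QRAM bit precision. None of this is deep; the real work is the register bookkeeping and the orthogonality accounting.
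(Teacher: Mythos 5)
Your reconstruction is correct and is essentially the construction the paper relies on: Theorem~\ref{ds} is imported from \cite{KP16, KP17} without proof here, and your binary-tree data structures together with the $U = P^{\dagger}Q$ overlap argument (rows versus row-norm vector for $\norm{A}_{F}$, and the sign-split powers $|A_{ij}|^{p}$, $|A_{ij}|^{1-p}$ with orthogonal slack padding for the $\sqrt{s_{2p}(A)s_{(1-2p)}(A^{T})}$ case) is exactly the mechanism of those references, consistent with how the paper invokes it via Lemma~\ref{lepsilon}. Your closing caveat about the slack subspaces and fixed normalizations is indeed where the care is needed, and you have identified it correctly.
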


We also require an auxiliary result which provides error bounds for implementing a block encoding for a matrix whose rows and columns can be prepared within $\ell_{2}$ error $\epsilon$. It follows from the second part of Lemma 23 in \cite{CGJ18} which provides an analysis for the QRAM data structure in \cite{KP17} for the approximate case.  
\begin{Lemma} \label{lepsilon} 
\cite{CGJ18, KP17} Let $A \in \R^{n\times n}$ be a matrix such that the unitaries $U: \ket{i, 0} \to \ket{i, A_{i}}$ and $V: \ket{0, j} \to \frac{1} { \norm{A}_{F} } \sum \norm{A_{i}} \ket{i,j}$ can be implemented to error $\epsilon$ in time $T_{A}$, then an $(\norm{A}_{F}, O(\log n), \epsilon)$ block encoding for $A$ can be implemented in time $T_{A}$. 
\end{Lemma}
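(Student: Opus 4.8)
The plan is to show that the standard block-encoding construction from a QRAM data structure, which is exact when the row and column preparation unitaries are exact, degrades gracefully when those unitaries are only $\epsilon$-accurate. Recall the exact construction: if $U$ prepares $\ket{i}\ket{A_i/\norm{A_i}}$ and $V$ prepares $\frac{1}{\norm{A}_F}\sum_i \norm{A_i}\ket{i}\ket{j}$ (with a suitable encoding of the norm on the first register paired with $j$), then the product $U^\dagger(\mathbbm{1}\otimes \text{SWAP})V$ has top-left block equal to $A/\norm{A}_F$, because the $(i,j)$ entry of that block is an inner product $\langle A_i/\norm{A_i}, \text{(column-side state)}\rangle$ that reconstructs $A_{ij}/\norm{A}_F$. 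So the first step is just to write this down precisely, matching the notation of Lemma~\ref{lepsilon}, and identify the resulting unitary $W$ whose top-left block is exactly $\widetilde A = A/\norm{A}_F$ in the ideal case.

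Next I would pass to the approximate case. Let $\widetilde U,\widetilde V$ be the actual implementations, with $\norm{(\widetilde U - U)\ket{i,0}}\le \epsilon$ and $\norm{(\widetilde V-V)\ket{0,j}}\le \epsilon$ for every $i,j$ (this is what "implemented to error $\epsilon$" means here). Form $\widetilde W = \widetilde U^\dagger(\mathbbm{1}\otimes\text{SWAP})\widetilde V$ and let $\widetilde B$ be its top-left block. The goal is $\norm{\norm{A}_F\,\widetilde B - A}\le \epsilon$, equivalently $\norm{\widetilde B - \widetilde A}\le \epsilon/\norm{A}_F$; but actually since the claimed block encoding parameter is $(\norm{A}_F, O(\log n),\epsilon)$, what we need is $\norm{\norm{A}_F \widetilde B - A} = \norm{A}_F\norm{\widetilde B - \widetilde A}\le\epsilon$ — so one should track whether the hypothesis gives column-normalized error $\epsilon$ or error $\epsilon$ after scaling; I would follow the bookkeeping of Lemma 23 in \cite{CGJ18} and state it so the constant works out. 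The key estimate is: for each computational basis state $\ket{0}\ket{i}\ket{0}$ (or however the input register is laid out),
\[
\norm{(\widetilde W - W)\ket{0,i,0}} \le \norm{\widetilde U^\dagger - U^\dagger}\text{-type terms} \le 2\epsilon,
\]
obtained by a triangle inequality splitting $\widetilde U^\dagger(\text{SWAP})\widetilde V - U^\dagger(\text{SWAP})V$ into $\widetilde U^\dagger(\text{SWAP})(\widetilde V - V) + (\widetilde U^\dagger - U^\dagger)(\text{SWAP})V$ and using unitarity of SWAP and $V$. Then the matrix-element difference $|\langle 0,j,0|(\widetilde W - W)|0,i,0\rangle| \le \norm{(\widetilde W-W)\ket{0,i,0}}$, and summing or taking operator norm over the top-left block gives $\norm{\widetilde B - \widetilde A}\le 2\epsilon$; rescaling the accuracy demand on $\widetilde U,\widetilde V$ by a factor $2$ (absorbed into $\epsilon$ or $T_A$) yields the stated bound. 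The running time is unchanged, $T_A$, since $\widetilde W$ is built from a constant number of calls to $\widetilde U,\widetilde U^\dagger,\widetilde V$ plus a SWAP.

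The main obstacle — really the only subtlety — is the norm bookkeeping: whether the $\epsilon$ in the hypothesis controls the per-column error before or after dividing by $\norm{A}_F$, and making sure the operator-norm bound on the block (as opposed to an entrywise or Frobenius bound) comes out with the right constant. A naive entrywise argument would lose a factor of $n$; the correct approach is to bound $\norm{\widetilde B - \widetilde A}$ directly as an operator by noting $\widetilde B - \widetilde A = (\Pi_{\text{out}}\otimes\langle 0|)(\widetilde W - W)(\Pi_{\text{in}}\otimes\ket 0)$ and using $\norm{\widetilde W - W}\le 2\epsilon$ from the column-wise bound (which upgrades to an operator-norm bound on the full unitaries since a bound $\norm{(\widetilde U - U)\ket{i,0}}\le\epsilon$ for all $i$ gives $\norm{(\widetilde U - U)(\mathbbm1\otimes\ket0)}\le\epsilon$, hence on the relevant subspace). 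I would therefore cite Lemma 23 of \cite{CGJ18} for the precise statement and present the argument above as the justification, keeping it to a short paragraph since the paper treats this as auxiliary.
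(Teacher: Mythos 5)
The paper never proves this lemma: it is imported as a black box, with the text only remarking that it ``follows from the second part of Lemma 23 in \cite{CGJ18}''. Your reconstruction of the exact construction --- that $U^{\dagger}(\mathrm{SWAP})V$ has top-left block $A/\norm{A}_{F}$ because the $(i,j)$ matrix element is the inner product of the row state $\ket{i, A_{i}}$ with the column state $\frac{1}{\norm{A}_{F}}\sum_{i}\norm{A_{i}}\ket{i,j}$ --- is exactly the standard argument behind the cited result, so at that level you and the paper are aligned, and citing Lemma 23 of \cite{CGJ18} for the quantitative statement is what the paper itself does.

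The genuine gap is in the perturbation step you supply as justification. You claim that $\norm{(\widetilde U - U)\ket{i,0}}\leq \epsilon$ for every $i$ ``upgrades'' to $\norm{(\widetilde U - U)(\mathbbm{1}\otimes\ket{0})}\leq\epsilon$ in operator norm. That is false: columnwise bounds only give the Frobenius-type bound $\sqrt{n}\,\epsilon$ on that restriction, and the loss occurs even for differences of unitaries. For example, with $U=I$ and $\widetilde U = I-(1-e^{i\phi})\ket{\psi}\bra{\psi}$ where $\ket{\psi}=n^{-1/2}\sum_{i}\ket{i,0}$, each column error is $|1-e^{i\phi}|/\sqrt{n}$ while the error on $\ket{\psi}$ is $|1-e^{i\phi}|$. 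Hence your bound $\norm{\widetilde W - W}\leq 2\epsilon$, and with it the spectral-norm bound on the block, does not follow from the hypotheses as you state them: you either need to read ``implemented to error $\epsilon$'' as an operator-norm bound on the implementing unitaries (in which case your triangle-inequality computation is fine and essentially is the proof), or you must do the finer accounting of \cite{CGJ18}, where the precision of the stored amplitudes is chosen so that the aggregated error is controlled. Relatedly, you flag but never resolve whether the conclusion should be $\norm{\widetilde B - A/\norm{A}_{F}}\leq\epsilon$ or $\norm{\norm{A}_{F}\widetilde B - A}\leq\epsilon$; under the paper's convention ($\norm{\mu\widetilde A - A}\leq\delta$) an operator-norm error $2\epsilon$ on the block becomes $2\epsilon\norm{A}_{F}$, so as written your argument does not yield the stated $(\norm{A}_{F}, O(\log n), \epsilon)$ parameters without rescaling the hypothesis. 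The fix is the one you half-make: fix the error model precisely and let Lemma 23 of \cite{CGJ18} carry the quantitative bookkeeping, rather than asserting the columnwise-to-operator-norm upgrade.
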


\subsection{Quantum data structures} 
For the quantum interior point method we require that the input matrices $A^{(k)}, k \in [m]$ and $B$ for the SDP input are stored in appropriate QRAM data structures so that we can perform multiplication or inversion with these matrices with the guarantees in Theorem \ref{qlsa}. 

Let $\mathcal{A}_{i,j} \in \R^{m}$ be the vector with entries 
$\mathcal{A}_{i,j,k} = (A^{(k)})_{i,j}$ for $k \in [m]$. We also need to prepare the states $\ket{\mathcal{A}_{i,j}}$ for $i, j \in [n]$. 
The SDP input matrices $A^{(k)}, k \in [m]$ are stored as a tensor $T\in \R^{ n \times n \times m} $. We provide efficient QRAM data structures that enable 
the above operations to be carried out in time $\widetilde{O}(1)$. We show next that this can be done using the same algorithms as in \cite{KP17, KP16} and with the 
same running time guarantees. 

Let $T \in \R^{n \times n \times m}$ be a 3-dimensional tensor. Let $\mathcal{S}(T):= \{ T_{:, :, k} \;|\; k \in [m] \}$ be the set of matrices that can be obtained by fixing the 
third index of $T$ and let $V(T) := \{ T_{i, j, :} \;|\; i, j \in [m] \}$ be the set of vectors that can be obtained by fixing the first two indices. We have the following corollary extending Theorem \ref{ds} for the storage of tensors. 

\begin{corollary} \label{cds} 
There are efficient QRAM data structures for storing the tensor $T\in \R^{n \times n \times m}$ such that given quantum access to these data structures, there are $(\mu(M), O(\log n), \epsilon)$ efficient block encodings for all $M \in \mathcal{S}(T)$ for $\mu(M)=  \min_{p\in [0,1]} (\norm{M}_{F}, \sqrt{s_{2p}(M)s_{(1-2p)}(M^{T})})$ and $\widetilde{O}(1)$ time preparation procedures $\ket{i, j, 0} \to \ket{i, j, v_{ij}}$ for all $v_{ij} \in V(T)$. 
\end{corollary}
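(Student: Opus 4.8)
The plan is to reduce the statement to the two constructions already supplied by Theorem~\ref{ds}, by regarding the tensor $T$ through its two natural slicings: the $m$ matrix slices that make up $\mathcal{S}(T)$, and the $n^2$ length-$m$ fibers that make up $V(T)$. I would maintain two separate QRAM data structures, one of each kind, and build both of them simultaneously during a single interleaved pass over the stream of entries $(i,j,k,T_{ijk})$.

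For the first part I would store each matrix slice $M=T_{:,:,k}$, $k\in[m]$, in the matrix data structure of \cite{KP16,KP17} underlying Theorem~\ref{ds}, with the associated binary trees carrying one extra register that holds the slice index $k$. Inserting an entry $T_{ijk}$ then updates only the $O(\log n)$ nodes along one root-to-leaf path in the tree for row $i$ of slice $k$, together with the matching node of the row-norm tree, so the per-entry update cost stays $O(\text{polylog}(nm))$ and the whole structure (over all $n^2m$ entries) is built in a single pass with the efficiency demanded by Definition~\ref{defer}. Quantum access to it then implements, for each fixed $k$, the unitaries $\ket{i,0}\to\ket{i,M_i}$ and $\ket{0,j}\to\frac{1}{\norm{M}_F}\sum_i\norm{M_i}\ket{i,j}$ in time $O(\text{polylog}(nm))$ (and, by keeping the $k$-register in superposition, the controlled analogues as well); feeding these to Lemma~\ref{lepsilon} gives the claimed $(\norm{M}_F,O(\log n),\epsilon)$ block encoding of $M$, where $\epsilon$ only accounts for the finite precision of the stored amplitudes. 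To upgrade $\norm{M}_F$ to the sharper factor $\mu(M)=\min_{p\in[0,1]}(\norm{M}_F,\sqrt{s_{2p}(M)s_{1-2p}(M^T)})$ I would, exactly as in \cite{KP16,KP17}, store the powers $M_{ij}^{2p}$ and $M_{ij}^{1-2p}$ at the leaves and take the best $p$; this is merely a different instantiation of the same tree and leaves the single-pass construction intact.

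For the second part I would store the $n^2$ fibers $v_{ij}=T_{i,j,:}\in\R^{m}$ in the vector data structure of Theorem~\ref{ds}(1), keyed by the pair $(i,j)\in[n]\times[n]$, which directly yields the preparation procedure $\ket{i,j,0}\to\ket{i,j,v_{ij}}$ in time $\widetilde{O}(1)$. Since each entry $T_{ijk}$ touches only the $O(\log m)$ nodes on one root-to-leaf path of the tree for $(i,j)$, this structure too is built within the same single pass, so the combined construction still meets the efficiency of Definition~\ref{defer}.

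The only thing that really needs checking is that attaching the extra index — the slice $k$ in the first structure, the pair $(i,j)$ in the second — to the data structures of \cite{KP16,KP17} preserves both their single-pass polylogarithmic-update efficiency and the amplitude-encoding property invoked by Theorem~\ref{ds} and Lemma~\ref{lepsilon}. Since every stored entry lies on exactly one root-to-leaf path in each structure, I expect this to be routine bookkeeping rather than a genuine obstacle; the real content is just confirming that the analysis of \cite{KP16,KP17,CGJ18} carries over verbatim under the indexing modification. With exact arithmetic one could take $\epsilon=0$, recovering Theorem~\ref{ds} in the one-matrix case.
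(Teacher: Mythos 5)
Your proposal is correct and follows essentially the same route as the paper: maintain separate data structures of the Theorem~\ref{ds} type for each matrix slice in $\mathcal{S}(T)$ and each fiber in $V(T)$, note that every entry $(i,j,k,T_{ijk})$ belongs to exactly one slice and one fiber so a single pass with constant (polylogarithmic) overhead per entry suffices, and then invoke the guarantees of Theorem~\ref{ds} (and Lemma~\ref{lepsilon}) for the block encodings and state preparation. The extra bookkeeping detail you supply about the indexed binary trees and the general $\mu(M)$ factor is consistent with the paper's (much terser) argument.
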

\begin{proof} 

The algorithm maintains separate data structures for all matrices in $\mathcal{S}(T)$ and vectors in $V(T)$. 
A single entry $(T_{ijk}, i,j,k)$ belongs to exactly one matrix in $S(T)$ and one vector in $V(T)$. On receiving the entry $(T_{ijk}, i,j,k)$ the algorithm updates 
the two data structures corresponding $M_{k}$ and $v_{ij}$. This incurs a constant overhead 
over Theorem \ref{ds} and thus achieves the same running time and guarantees. 
\end{proof}

\section{Tomography for efficient vector states} \label{tom} 

We present in this section an algorithm for tomography of pure states with real amplitudes when we have an efficient unitary to prepare the states. The tomography algorithm 
will be used to recover classical information from the $d=O(n^{2})$ dimensional states corresponding to the solutions of the Newton linear system in each step of the interior point method. 

Let $U$ be a unitary operator that creates copies of the vector state $\ket{x}$ for a unit vector $x\in \R^{d}$ in time $T_{U}$. We also assume that we can apply the controlled version of $U$ in the same time. Our tomography algorithm outputs a unit vector $\widetilde{x}$ such that $\norm{  \widetilde{x} - x}_{2}  < \sqrt{7} \delta$ with probability $1-1/poly(d)$. The algorithm uses $N=\widetilde{O}(d/\delta^{2})$ calls to $U$ and runs in time $O(T_{U}N)$, thus the sample and time complexity for our algorithm are both $\widetilde{O}(d/\delta^{2})$ when $T_U$ is polylogarithmic.  Such a  linear time tomography algorithm for real pure states does not follow from existing results. 

A sample-efficient tomography algorithm for mixed states that requires $O(d/\delta^{2})$ copies of $\rho$ to 
obtain estimate $\rho'$ with the guarantee that $\norm{ \rho - \rho'}_{F} \leq \delta$ was given in O'Donnell and Wright \cite{OW16}. The time complexity of this approach is high as it uses a computationally expensive state estimation procedure \cite{K06}. Another approach to tomography is through compressed sensing \cite{GLFBE10}. While this is sample-efficient for low-rank (including pure states), it is also computationally expensive as it solves an SDP to perform the reconstruction. 

We note that our tomography problem is easier than the more general problems addressed in \cite{OW16, GLFBE10}, since we assume we can apply the unitary that prepares the state and its controlled version. Note also that for our purposes we only need to deal with real amplitudes, though complex amplitudes could be dealt with in a similar way. 
We next state our tomography algorithm and 
establish its correctness. 

\begin{algorithm} [H]
\caption{Vector state tomography algorithm. } \label{alg:tom} 
\begin{algorithmic}[1]
\REQUIRE Access to a unitary $U$ such that $U\ket{0} = \ket{x}= \sum_{i \in [d]} x_{i} \ket{ i}$ and to its controlled version. 
\begin{enumerate}
\item {\bf Amplitude estimation}
\begin{enumerate} 
\item Measure $N= \frac{36 d \ln d}{\delta^{2}}$ copies of $\ket{x}$ in the standard basis and obtain estimates $p_{i} = \frac{n_{i}}{N}$ where $n_{i}$ 
is the number of times outcome $i$ is observed. 
\item Store $\sqrt{p_{i}}, i \in [d]$ in QRAM data structure so that $\ket{p} = \sum_{i \in [d]} \sqrt{p_{i}} \ket{i}$ can be prepared efficiently. 
\end{enumerate}
\item {\bf Sign estimation}
\begin{enumerate}
\item Create $N=\frac{36 n \ln n}{\delta^{2}}$ copies of the state $ \frac{1}{ \sqrt{ 2}} \ket{0} \sum_{i \in [d]} x_{i} \ket{ i} +  \frac{1}{ \sqrt{ 2}} \ket{1} \sum_{i \in [d]}  \sqrt{p_{i}} \ket{i}$ using a control qubit. 
\item Apply a Hadamard gate on the first qubit of each copy of the state to obtain $\frac{1} { 2 }  \sum_{i \in [d]} [ (x_{i} + \sqrt{p_{i}}) \ket{0, i} + (x_{i} - \sqrt{p_{i}}) \ket{1, i} ].$ 
\item Measure each copy in the standard basis and maintain counts $n(b,i)$ of the number of times outcome $\ket{b, i}$ is observed for 
$b \in {0,1}$. 
\item Set $\sigma_{i} = 1$ if $n(0,i)> 0.4 p_{i}N$ and $-1$ otherwise. 
\end{enumerate}
\item Output the unit vector $\widetilde{x}$ with $\widetilde{x}_{i} = \sigma_{i} \sqrt{p_{i}}$.  
\end{enumerate}
\end{algorithmic}
\end{algorithm}
\noindent We will use the following version of the multiplicative Chernoff bounds for the analysis. 
\begin{fact} \label{mcb} \cite{AV79} 
Let $X_{i}$ for $i \in [m]$ be independent random variables such that $X_{i} \in [0,1]$ and let $X= \sum_{i \in [m]} X_{i}$. Then, 
\begin{enumerate} 
\item For $0<\beta<1$, $\Pr [ X < (1-\beta) \E[X]) \leq e^{ -\frac{\beta^{2} \E[X]}{2}}$. 
\item For $\beta>0$, $\Pr [ X > (1+\beta) \E[X]) \leq e^{ -\frac{\beta^{2} \E[X]}{(2+\beta)}}$. 
\end{enumerate} 
Combining the two bounds for $0< \beta <1$ we have $\Pr [ |X - \E[X]| \geq \beta \E[X] ] \leq e^{ -\beta^{2} \E[X]/3}$.  
\end{fact}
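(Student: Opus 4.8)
The plan is to prove the two one-sided bounds by the standard exponential-moment (Chernoff) method and then obtain the two-sided bound by a union bound, using that $2+\beta<3$ when $0<\beta<1$. The common starting point is Markov's inequality applied after exponentiating: for any $t>0$ one has $\Pr[X>a]\le e^{-ta}\,\E[e^{tX}]$ and $\Pr[X<a]\le e^{ta}\,\E[e^{-tX}]$. Since the $X_i$ are independent, the moment generating function factorizes, $\E[e^{\pm tX}]=\prod_{i\in[m]}\E[e^{\pm tX_i}]$, so the argument reduces to bounding one factor.

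The per-variable estimate is where $X_i\in[0,1]$ enters. By convexity of $s\mapsto e^{ts}$ on $[0,1]$ (a chord bound), $e^{ts}\le 1+(e^t-1)s$ for $s\in[0,1]$, hence with $\mu_i:=\E[X_i]$ we get $\E[e^{tX_i}]\le 1+(e^t-1)\mu_i\le e^{(e^t-1)\mu_i}$, the last step using $1+y\le e^y$. Taking the product over $i$ and writing $\mu:=\E[X]=\sum_{i\in[m]}\mu_i$ yields the key bound $\E[e^{tX}]\le e^{(e^t-1)\mu}$, valid for every real $t$; in particular the same computation with $-t$ gives $\E[e^{-tX}]\le e^{(e^{-t}-1)\mu}$.

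For the lower tail (part~1) I would first weaken the bound via $e^{-t}-1\le -t+\tfrac{t^2}{2}$ (valid for $t\ge 0$), so that $\Pr[X<(1-\beta)\mu]\le e^{t(1-\beta)\mu}e^{\mu(-t+t^2/2)}=e^{\mu(-t\beta+t^2/2)}$; taking $t=\beta\in(0,1)$ makes the exponent equal to $-\beta^2\mu/2$, which is exactly the claim. For the upper tail (part~2) I would instead optimize directly: $\Pr[X>(1+\beta)\mu]\le e^{-t(1+\beta)\mu+\mu(e^t-1)}$ is minimized at $t=\ln(1+\beta)$, giving $\big(e^{\beta}/(1+\beta)^{1+\beta}\big)^{\mu}$. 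It then remains to check the scalar inequality $e^{\beta}/(1+\beta)^{1+\beta}\le e^{-\beta^2/(2+\beta)}$, i.e., after taking logarithms, $(1+\beta)\ln(1+\beta)\ge \beta+\tfrac{\beta^2}{2+\beta}=\tfrac{2\beta(1+\beta)}{2+\beta}$, which reduces to $\ln(1+\beta)\ge \tfrac{2\beta}{2+\beta}$. This holds for all $\beta\ge 0$ because both sides agree at $\beta=0$ and the derivative of the difference is $\tfrac{1}{1+\beta}-\tfrac{4}{(2+\beta)^2}=\tfrac{\beta^2}{(1+\beta)(2+\beta)^2}\ge 0$.

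Finally, for the combined statement with $0<\beta<1$, I would union-bound the two deviation events: since $2<3$ and $2+\beta<3$, both $e^{-\beta^2\mu/2}$ and $e^{-\beta^2\mu/(2+\beta)}$ are at most $e^{-\beta^2\mu/3}$, so $\Pr[|X-\E[X]|\ge\beta\E[X]]$ is at most $2e^{-\beta^2\E[X]/3}$, and the stated form suppresses the harmless constant factor $2$ (it costs only a constant factor in the sample size wherever the bound is invoked). I expect the only genuinely non-mechanical step to be the elementary calculus inequality $\ln(1+\beta)\ge 2\beta/(2+\beta)$ used in part~2; everything else is straightforward manipulation of Markov's inequality, independence, and $1+y\le e^y$.
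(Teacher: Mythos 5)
Your proof is correct, and it is the standard exponential-moment derivation: the paper itself offers no proof of this Fact, citing Angluin--Valiant \cite{AV79}, so there is no in-paper argument to compare against. Each step checks out: the chord bound $e^{ts}\le 1+(e^t-1)s$ on $[0,1]$ and $1+y\le e^y$ give $\E[e^{\pm tX}]\le e^{(e^{\pm t}-1)\E[X]}$; the choice $t=\beta$ together with $e^{-t}\le 1-t+t^2/2$ yields part~1; and the optimal $t=\ln(1+\beta)$ plus your calculus inequality $\ln(1+\beta)\ge 2\beta/(2+\beta)$ (verified by the derivative computation $\frac{1}{1+\beta}-\frac{4}{(2+\beta)^2}=\frac{\beta^2}{(1+\beta)(2+\beta)^2}\ge 0$) yields part~2. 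The one point worth emphasizing is the final combined bound: your union bound honestly produces $2e^{-\beta^2\E[X]/3}$, and the factor $2$ is not merely cosmetic --- the statement as printed, without the $2$, is false in general (e.g.\ a single Bernoulli$(1/2)$ variable with $\beta=0.9$ has deviation probability $1$ while $e^{-\beta^2\E[X]/3}<1$). So your version is the correct one; the paper's phrasing is a common abuse, and as you note the extra factor is harmless where the Fact is invoked (it only changes failure probabilities like $1/d^3$ to $2/d^3$ in Lemma~\ref{signs} and Theorem~\ref{thm:tom}).
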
 
\noindent We first prove an auxiliary lemma that shows that the sign estimation procedure in Step 2 of Algorithm \ref{alg:tom} succeeds with high probability for all sufficiently large $x_{i}$.  

\begin{Lemma} \label{signs} 
Let $S:=\{ i \in [d] \;|\; x_{i}^{2} \geq \frac{\delta^{2}}{d}  \}$, then $\sigma_{i} = sgn(x_{i})$ for all $i\in S$ with probability at least $(1-1/d^{0.83})$. 
\end{Lemma}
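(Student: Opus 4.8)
The plan is to fix an index $i \in S$, bound $\Pr[\sigma_i \ne sgn(x_i)]$ strictly below $d^{-1.83}$, and then union-bound over the at most $d$ indices of $S$ to obtain the claimed $1 - d^{-0.83}$. The one consequence of $i \in S$ that I use throughout is that $\mu_1 := \E[n_i] = N x_i^2 \ge 36\ln d$, since $N = 36 d\ln d/\delta^2$ and $x_i^2 \ge \delta^2/d$.

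For the amplitude-estimation phase (Step 1), $n_i$ is a sum of $N$ independent $\mathrm{Bernoulli}(x_i^2)$ variables, so Fact~\ref{mcb} with a constant relative deviation $\beta$ shows that the event $\mathcal{E}_1 := \{ p_i \in (1\pm\beta) x_i^2 \}$ — equivalently $\bigl|\sqrt{p_i}/|x_i| - 1\bigr| \le \beta$, since $\sqrt{1-\beta}\ge 1-\beta$ and $\sqrt{1+\beta}\le 1+\beta$ — fails with probability at most $e^{-\beta^2\mu_1/2} + e^{-\beta^2\mu_1/(2+\beta)} = d^{-\Omega(\beta^2)}$. I condition on $\mathcal{E}_1$, which in particular fixes $p_i$, for the rest of the argument.

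For the sign-estimation phase (Step 2), after the Hadamard a single copy yields outcome $\ket{0,i}$ with probability $r_i := \tfrac14(x_i+\sqrt{p_i})^2$, so $n(0,i) \sim \mathrm{Binomial}(N, r_i)$ and the test compares $n(0,i)$ with $0.4 p_i N$. If $x_i > 0$ then $r_i = \tfrac14\bigl(1+\sqrt{p_i}/|x_i|\bigr)^2 x_i^2 \ge \tfrac14(1+\sqrt{1-\beta})^2 x_i^2$ whereas $0.4 p_i N \le 0.4(1+\beta) x_i^2 N$; for $\beta$ a small enough constant the threshold lies a fixed fraction below $\E[n(0,i)] = r_i N = \Omega(x_i^2 N) = \Omega(\ln d)$, and the lower-tail bound of Fact~\ref{mcb} gives $\Pr[ n(0,i) \le 0.4 p_i N] \le d^{-\Omega(1)}$, so $\sigma_i = +1 = sgn(x_i)$. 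If $x_i < 0$ then $r_i = \tfrac14\bigl(\sqrt{p_i}/|x_i| - 1\bigr)^2 x_i^2 \le \tfrac14\beta^2 x_i^2$, which is tiny compared with $0.4 p_i N \ge 0.4(1-\beta) x_i^2 N$; since $\E[n(0,i)] = r_i N$ may itself be tiny here, I would avoid the multiplicative $(1+\gamma)\mu$ form of Chernoff and instead use the Poisson-type tail $\Pr[\mathrm{Binomial}(N,r_i) \ge k] \le (eNr_i/k)^k$ with $k = 0.4 p_i N = \Theta(x_i^2 N) = \Theta(\ln d)$; as $eNr_i/k = e r_i/(0.4 p_i)$ is a constant bounded away from $1$, this is $d^{-\Omega(1)}$ and so $\sigma_i = -1 = sgn(x_i)$.

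Summing the two failure probabilities gives $\Pr[\sigma_i \ne sgn(x_i)] \le d^{-\Omega(\beta^2)} + d^{-\Omega(1)}$, and choosing $\beta$ appropriately — large enough to control Step 1, small enough to control the $x_i > 0$ subcase of Step 2 — together with the constants hard-wired into the algorithm (the $36$ in $N$ and the $0.4$ in the threshold) makes every exponent exceed $1.83$; the union bound over $|S| \le d$ then yields the stated bound. The delicate point, and the reason the algorithm's constants are what they are, is that for borderline indices $N x_i^2$ is only $\Theta(\ln d)$, so every Chernoff exponent is $\Theta(\ln d)$ with only moderate constants: this forces a careful trade-off of $\beta$ between the precision of the Step-1 estimate and the lower-tail concentration in the positive case of Step 2 (and, for the sharpest constant, a separate treatment of the rare event that $p_i$ deviates far from $x_i^2$). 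By contrast, the $x_i < 0$ case is easy once one remembers to use a Poisson-type tail bound rather than a multiplicative one anchored to a possibly vanishing mean.
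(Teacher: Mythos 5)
Your overall strategy is the same as the paper's: a Chernoff bound on the Step-1 counts to control $p_i$, conditioning on that event, a Chernoff bound on $n(0,i)$ to recover the sign, and a union bound over $S$. Your explicit treatment of the $x_i<0$ case (an upper tail for $n(0,i)$, whose mean may be tiny, via a Poisson-type bound) is in fact more careful than the paper, which only declares that case ``similar''; note also that part 2 of Fact~\ref{mcb} with a large relative deviation would do the same job. The one genuine problem is quantitative: with a single deviation parameter $\beta$ and your Step-2 comparison anchored to $x_i^2$, you cannot make ``every exponent exceed $1.83$''. Your Step-1 upper tail forces $36\beta^2/(2+\beta)\geq 1.83$, i.e.\ $\beta\gtrsim 0.35$, and at that $\beta$ your Step-2 positive-case exponent is $18(f-g)^2/f$ with $f=\frac14(1+\sqrt{1-\beta})^2\approx 0.82$ and $g=0.4(1+\beta)\approx 0.54$, which is only about $1.7$; balancing the two constraints over $\beta$ tops out near $1.78$, hence a final bound of roughly $1-d^{-0.78}$, short of the stated $1-d^{-0.83}$.

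The fix is exactly the refinement you relegate to a parenthesis, and it is how the paper proceeds: compare $\E[n(0,i)]$ and the decision threshold both to $p_i$ rather than to $x_i^2$, and treat the Step-1 deviation event separately with a generous constant. Conditioning on $|p_i-x_i^2|\leq x_i^2/2$ (i.e.\ $\beta=1/2$, per-index failure $d^{-3}$, so this event costs essentially nothing in the union bound) gives $|x_i|\geq\sqrt{2p_i/3}$, hence $\E[n(0,i)]\geq \frac{(1+\sqrt{2/3})^2}{4}\,p_iN\geq 0.82\,p_iN\geq 14.7\ln d$, while the test threshold is $0.4\,p_iN$; a single lower-tail Chernoff step with relative deviation $1/2$ then gives per-index failure $e^{-\E[n(0,i)]/8}\leq d^{-1.83}$, and the union bound yields the claimed $1-d^{-0.83}$. (Equivalently, keeping your single $\beta\approx 0.35$ but bounding the mean by $\frac14(1+(1+\beta)^{-1/2})^2p_iN$ against the exact threshold $0.4\,p_iN$ pushes your Step-2 exponent well above $1.83$.) So nothing conceptual is missing; what is missing is the correct anchoring of the Step-2 concentration, without which the specific constant in Lemma~\ref{signs} is not recovered.
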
 

\begin{proof} 
Let $Z_{ij}, i\in [d], j \in N$ be the indicator random variable for the event that the $j$-th measurement outcome in step 1 of Algorithm \ref{alg:tom} is $i$
and let $Z_{i} = \sum_{j \in N} Z_{ij}$. Note that $Z_{i} =N p_{i}$ and $\E[Z_{i}]= N x_{i}^{2}$. Applying the multiplicative Chernoff bound (Fact \ref{mcb}) with $X=Z_{i}$,  
we have that $ \Pr [ |x_{i}^{2} - p_{i} | \geq \beta x_{i}^{2}   ] \leq e^{ -\beta^{2}(x_{i}^{2} N) /3 }$
for all $i \in [d]$.

Using the fact that $x_{i}^{2} N \geq 36 \ln d$ for all $i\in S$ and choosing $\beta = 1/2$ we have, 
\als{ 
\Pr [ |x_{i}^{2} - p_{i} | \geq x_{i}^{2}/2   ] \leq \frac{1}{d^{3}} 
} 
for a fixed $i \in S$. By the union bound, the event $A_1$ that $|x_{i}^{2} - p_{i} | \leq x_{i}^{2}/2$ or equivalently  $\sqrt{2p_{i} /3} \leq x_{i} \leq \sqrt{2 p_{i} }$ holds for all $i \in S$ with probability at least $(1-1/d^{2})$. 
Let us condition on this event.

We next show that the algorithm obtains the correct signs for all $i \in S$ with high probability. We provide the argument for the case when the sign is positive (i.e. when we need to output $\sigma_i=1$), the other case is similar
with $\mathbb{\E}[n(0,i)]$ replaced by $\mathbb{\E}[n(1,i)]$. 
 We have, 
\als{ 
\mathbb{\E}[n(0,i)] &= N\frac{(x_{i} + \sqrt{p_{i} } )^{2} }{4}  \geq  N\frac{ (\sqrt{2/3}+1)^{2}} {4}p_{i} \geq 0.82 p_{i} N.   
} 
Further as $i \in S$ and $p_{i} \geq x_{i}^{2}/2$ we have that $0.82 p_{i} N \geq 14.7 \ln d$.  
Using the multiplicative Chernoff bound $\Pr [ n(0,i)  \leq (1/2) \mathbb{\E}[n(0,i)]  ] \leq e^{ -\mathbb{\E}[n(0,i)] /8 }$, we 
conclude that $n(0,i) \geq 1/2 \mathbb{\E}[n(0,i)] = 0.41 p_{i} N $ with probability at least $(1-1/d^{1.83})$, and in this case $\sigma_{i}$ correctly determines the sign of $x_i$. 
By the union bound, the signs are determined correctly for all $i \in S$ with probability at least $(1- 1/d^{0.83})$, the claim follows. 
\end{proof}

\noindent The following theorem establishes the correctness of Algorithm \ref{alg:tom}. 
\begin{theorem} \label{thm:tom} 
Algorithm \ref{alg:tom} produces an estimate $\widetilde{x} \in \R^{d}$ with $\norm{\widetilde{x}}_{2}=1$ such that $\norm{\widetilde{x}  - x }_{2} \leq \sqrt{7}\delta$ with probability at least $(1-1/d^{0.83})$. 
\end{theorem}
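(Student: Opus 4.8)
The plan is to establish the two assertions in turn: that $\widetilde{x}$ is a unit vector, and that it lies within $\sqrt{7}\delta$ of $x$. The first is immediate, since $\widetilde{x}_i = \sigma_i\sqrt{p_i}$ and $\sum_{i\in[d]} p_i = \frac{1}{N}\sum_i n_i = 1$ give $\norm{\widetilde{x}}_2^2 = \sum_i p_i = 1$. For the second I would split the error according to the set of ``large'' coordinates $S = \{i\in[d] : x_i^2 \ge \delta^2/d\}$ from Lemma \ref{signs} and its complement $\overline{S}$:
\[
\norm{\widetilde{x} - x}_2^2 \;=\; \sum_{i\in S}(\widetilde{x}_i - x_i)^2 \;+\; \sum_{i\in\overline{S}}(\widetilde{x}_i - x_i)^2,
\]
and bound each sum by $O(\delta^2)$ on an appropriate high-probability event.

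For $i\in S$, Lemma \ref{signs} lets me condition on $\sigma_i = \mathrm{sgn}(x_i)$ for every such $i$, so that $(\widetilde{x}_i - x_i)^2 = (\sqrt{p_i} - |x_i|)^2$, and I would use the elementary identity $(\sqrt{p_i}-|x_i|)^2 = \frac{(p_i - x_i^2)^2}{(\sqrt{p_i}+|x_i|)^2} \le \frac{(p_i-x_i^2)^2}{x_i^2}$. A fixed multiplicative Chernoff bound on $p_i$ (as used for the signs, with deviation $\tfrac12$) is too crude here, since it only gives $\sum_{i\in S}(\sqrt{p_i}-|x_i|)^2 \le \tfrac14\sum_i x_i^2 = O(1)$. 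Instead I would take a \emph{coordinate-dependent} deviation: since $\E[Np_i] = Nx_i^2$, Fact \ref{mcb} yields $\Pr[\,|p_i - x_i^2|\ge \beta_i x_i^2\,] \le e^{-\beta_i^2 Nx_i^2/3}$, and I choose $\beta_i$ with $\beta_i^2 Nx_i^2 = 3c\ln d$ for a constant $c>1$; this is admissible (in particular $\beta_i\le 1$) because $x_i^2\ge\delta^2/d$ together with $N = 36\,d\ln d/\delta^2$ force $\beta_i^2 \le c/12$. On the intersection of these events over $i\in S$, which holds with probability at least $1 - d^{\,1-c}$, every term satisfies $(\sqrt{p_i}-|x_i|)^2 \le \beta_i^2 x_i^2 = 3c\ln d/N$, hence
\[
\sum_{i\in S}(\widetilde{x}_i - x_i)^2 \;\le\; |S|\cdot\frac{3c\ln d}{N} \;\le\; d\cdot\frac{3c\ln d}{N} \;=\; \frac{c}{12}\,\delta^2,
\]
which is precisely where the choice $N = \widetilde{O}(d/\delta^2)$ is spent.

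For $i\in\overline{S}$ I would drop the sign entirely and use $(\widetilde{x}_i - x_i)^2 = (\sigma_i\sqrt{p_i} - x_i)^2 \le (\sqrt{p_i}+|x_i|)^2 \le 2p_i + 2x_i^2$. By definition of $S$, $\sum_{i\in\overline{S}} x_i^2 < |\overline{S}|\,\delta^2/d \le \delta^2$; and $\sum_{i\in\overline{S}} p_i = M/N$, where $M$ (the number of the $N$ samples landing outside $S$) is stochastically dominated by a $\mathrm{Binomial}(N,\delta^2)$ variable of mean $N\delta^2 = 36\,d\ln d$, so $M\le 2N\delta^2$, i.e. $\sum_{i\in\overline S}p_i \le 2\delta^2$, except with probability $e^{-12 d\ln d}$ by Fact \ref{mcb}. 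Thus $\sum_{i\in\overline{S}}(\widetilde{x}_i - x_i)^2 \le 2(2\delta^2) + 2\delta^2 = 6\delta^2$. Adding the two parts with $c$ a constant slightly above $1.83$ gives $\norm{\widetilde{x}-x}_2^2 \le \big(\tfrac{c}{12}+6\big)\delta^2 < 7\delta^2$, and a union bound over the failure events — choosing $c$ and re-examining the constants in Lemma \ref{signs} so that the three failure probabilities together do not exceed $1/d^{0.83}$ — preserves the stated success probability. The crux of the proof is the large-coordinate estimate: one has to notice that no single multiplicative accuracy serves all coordinates simultaneously, and that the quantity actually governing the error is $\sum_{i\in S}\beta_i^2 x_i^2 = |S|\cdot 3c\ln d/N$, which collapses to $O(\delta^2)$ only because of the chosen sample size; the rest is routine constant-tracking and a standard tail bound for the negligible mass on $\overline{S}$.
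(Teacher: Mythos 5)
Your proof is correct and follows essentially the same route as the paper: split the coordinates into $S$ and $\overline{S}$, condition on the signs from Lemma~\ref{signs}, use a coordinate-dependent Chernoff deviation so each large coordinate contributes $O(\delta^{2}/d)$ to the squared error, and bound the total probability mass on $\overline{S}$ by $2\delta^{2}$ via a tail bound. Your parametrized deviation $\beta_i^2 N x_i^2 = 3c\ln d$ reduces to the paper's choice $\beta_i = \delta/(\sqrt{d}\,|x_i|)$ at $c=12$, and any $c$ safely inside $(1,12)$ (e.g.\ $c=11$, making that failure event a lower-order term) recovers the stated $\sqrt{7}\delta$ bound and success probability with the same level of constant-tracking as the paper.
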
 

\begin{proof} 
As shown in the proof of Lemma \ref{signs}, the multiplicative Chernoff bound \ref{mcb} implies that 
$ \Pr [ |x_{i}^{2} - p_{i} | \geq \beta x_{i}^{2}   ] \leq e^{ -x_{i}^{2} N\beta^{2}/3 }$
for all $i \in [d]$ and for all $0< \beta <1$. Using the factorization $ |x_{i}^{2} - p_{i} | = (|x_{i}| - \sqrt{p_{i}}) (|x_{i}| + \sqrt{p_{i}})$, the Chernoff bound can be rewritten as
\als{  
\Pr [ | |x_{i}| - \sqrt{p_{i}} | \geq \beta \frac{x_{i}^{2}}{ |x_{i}| + \sqrt{p_{i}}}    ] \leq e^{ -x_{i}^{2} N\beta^{2}/3 }. 
} 
As $\sqrt{p_{i}} \geq 0$ for all $i \in [d]$ we have $\frac{\beta x_{i}^{2} }{ |x_{i}|} \geq \frac{\beta x_{i}^{2}}{ |x_{i}| + \sqrt{p_{i}}}$. It follows that $\Pr [ | |x_{i}| - \sqrt{p_{i}} | \geq \beta |x_{i}|   ] \leq e^{ -N(\beta x_{i})^{2}/3 }$ for all $i\in [d]$ and for $0< \beta < 1$. 
For $i\in S$, choosing $\beta_{i} = \frac{\delta}{ \sqrt{d} |x_{i}|}<1 $ we obtain, 
\als{ 
\Pr [ | |x_{i}| - \sqrt{p_{i}} | \geq \frac{\delta}{\sqrt{d}}    ] \leq \frac{1}{d^{12}}. 
}
By the union bound the event  $A_{2}$ that $| |x_{i}| - \sqrt{p_{i}} | \leq \frac{\delta}{\sqrt{d}}$ for all $i \in S$ occurs with probability at least $1- \frac{1}{d^{11}}$. Conditioning on $A_{2}$, we have the bound $\sum_{i \in S} (|x_{i}| - \sqrt{p_{i}})^{2} \leq \delta^{2}$. 


We can now bound the error for the algorithm conditioned on event $A_{1}$ that the signs are 
determined correctly for all $i\in S$ (Lemma \ref{signs}) and on $A_2$, and have
\all{ 
\sum_{i \in [d]} (x_{i} - \sigma(i) \sqrt{p_{i}} )^{2} &= \sum_{i \in S }  (|x_{i}| - \sqrt{p_{i} })^{2} + \sum_{i \in \overline{S} }  (|x_{i}| + \sqrt{p_{i} })^{2} \nl 
&\leq \delta^{2} + 2 \sum_{ i \in \overline{S} }  (x_{i}^{2} + p_{i}) \nl 
& \leq 3 \delta^{2} + 2 \sum_{ i \in \overline{S} } p_{i} 
} {final} 
For the second inequality, we used that $\sum_{i \in \overline{S} } x_{i}^{2}  \leq \frac{\delta^{2}}{d}. |\overline{S}| \leq \delta^{2}$. It therefore suffices to show that 
$\sum_{ i \in \overline{S} } p_{i} \leq 2\delta^{2}$ with high probability. 

Part 2 of the multiplicative Chernoff bound yields that $\Pr[ \sum_{ i \in \overline{S} } p_{i} \geq (1+ \beta)  \sum_{ i \in \overline{S} } x_{i}^{2}] \leq e^{ -\frac{\beta^{2}}{(2+\beta)}   \sum_{ i \in \overline{S} } x_{i}^{2} N}$ for all $\beta>0$. Choosing $\beta= \frac{\delta^{2}}{ \sum_{i \in \overline{S}} x_{i}^{2}} > 1$ we have,
$ 
\Pr[ \sum_{ i \in \overline{S} } p_{i} \geq  \sum_{ i \in \overline{S} } x_{i}^{2} + \delta^{2} ] \leq e^{ - 12 d \log d}
$. Thus, $\sum_{ i \in \overline{S} } p_{i} \leq  \sum_{ i \in \overline{S} } x_{i}^{2} + \delta^{2} \leq 2\delta^2$ with overwhelming probability. Substituting in equation 
\eqref{final} we obtain that with probability at least $1-1/d^{0.83}$ (ignoring lower order terms), we have $\norm{ \widetilde{x} - x}_{2}^{2} \leq 7\delta^{2}$, the theorem follows.  
\end{proof}

The success probability for our vector state tomography algorithm can be boosted to $1-1/d^{c}$ by increasing the number of samples $N$ to $Cd \ln d/\delta^{2}$ 
for suitable constants $c,C$. In the interior point method we perform tomography for $\widetilde{O}(d^{1/4})$ iterations, so Theorem \ref{thm:tom} ensures that all 
the tomography results will be correct with high probability. In order to extend this approach to all pure states instead of the sign one would need to estimate the phase $e^{i \theta_{i} } x_{i} \ket{i}$ to sufficient accuracy. 

The quantum tomography algorithm is used for learning the output of a quantum linear system solver, that is the unitary $U$ in algorithm \ref{alg:tom} is not perfect
but produces a state $\ket{\overline{x}}$ such that $\norm{ \ket{x} - \ket{\overline{x}}} \leq \epsilon$, equivalently it produces a density matrix $\overline{\rho_{x}}$ such 
that the trace-distance between the $\rho_{x}=\ket{x}\bra{x}$ 
and $\overline{\rho_{x}}$ is $O(\epsilon)$. 
As long as the error $\epsilon$ is $o(\delta^2/d)$, the trace-distance between the states $\rho_{x}^{\otimes d/\delta^2}$ and  $(\overline{ \rho_{x}})^{\otimes d/\delta^2}$ remains close to $0$ and hence any algorithm 
with input $(\overline{ \rho_{x}})^{\otimes d/\delta^2}$ will have the same guarantees as the error-free algorithm \ref{alg:tom}. 

The complexity of the linear system solver scales as $\log (1/\epsilon)$ in the error parameter by Theorem \ref{qlsa}, hence the precision can be boosted to have error $\delta^{2}/d^3$ at the cost of a logarithmic overhead in the dimension $d$. We can therefore assume that the guarantees in Theorem \ref{thm:tom} hold when the tomography algorithm is used for reconstructing the solutions to the Newton linear system. 

Note that if we want to estimate a non-unit vector $x$ and we have an error $(1\pm \delta)$ estimate $\eta_{x}$ for $\norm{x}$ and a unitary that outputs $\ket{x}$ that corresponds to the unit vector $x/\norm{x}$, then we can first use tomography to get $\widetilde{x}$ with $\norm{ \widetilde{x} - \frac{x}{\norm{x}}} \leq \delta $ and then we have that $\norm{ \eta_{x} \widetilde{x} - x} \leq 2\delta \norm{x}. $

\section{The classical interior point method} \label{cipm} 

We start by providing the details of the classical interior point method for SDPs and elements of its analysis based on \cite{BN15}. We assume the bit complexity is constant, so we hide some logarithmic factors. This method has the following complexity: (i) For Linear Programming in $n$ dimensions with $m$ constraints, the algorithm has running time $O(n^{2} (m+n)^{3/2} \log (1/\epsilon))$. (ii) For Semi-Definite Programming over $n\times n$ matrices with constraint $\sum_{k \in [m]} x_{k} A^{(k)} \succeq B$, the algorithm has running time $O(  n^{0.5} m^{3} + n^{2.5}m^{2} + n^{3.5}m)$. 

The running time of the best known method for solving SDPs \cite{LWS15} is $O(m^{3} + mn^{\omega} + m^{2}ns \log (mnR/\epsilon) )$ where the sparsity $s$ is upper bounded by $n$. The running time is still a  large polynomial in the worst case, namely $O(n^{6})$ for the case $m=O(n^{2})$. 
\subsection{Primal and dual SDPs and the central path}

We consider a pair of primal and dual SDPs having the following form, 
\al{ 
Opt(P)&= \min_{x \in \R^{m}}  \{ c^{t} x \;|\; \sum_{k \in [m]} x_{k} A^{(k)}  \succeq B \}  \nl 
Opt(D)&= \max_{Y \succeq 0 }  \{ Tr(BY) \;|\; Y \succeq 0, Tr(YA^{(j)})=c_{j}   \}
} 
We assume that the primal and dual SDPs are strictly feasible, that is thet have solutions lying in the interior of the cone of positive semi-definite matrices. 
Define $L= \Span_{k \in [m]} (A^{(k)}) $ to be the span of the matrices $A^{(k)}$, and let $L^{\perp}$ be the orthogonal complement of $L$. We assume without loss 
of generality that the matrices $A^{(k)}$ are linearly independent. Let $C$ be an arbitrary dual feasible solution, then the SDP pair above can be written in the 
following more symmetric form, 
\all{ 
Opt(P')&= \min_{S \succeq 0 }  \{ Tr(CS) + Tr(BC) \;|\; S \in (L - B)   \}  \nl 
Opt(D)&= \max_{Y \succeq 0}  \{ Tr(BY) \;|\; Y \in (L^{\perp} + C)    \}
} {sdp} 
The primal and dual objective functions are in fact symmetric as $B,C$ are constants for the primal formulation. The strict feasibility of the SDPs and the conic duality theorem  imply that strong duality holds \cite{BN15}, thus there are feasible solutions with $Opt(P')=Opt(D')$.

Let $(S, Y)$ be a pair of solutions for the primal and dual SDPs \eqref{sdp}. The duality gap $\Delta(S, Y)$ is the difference between the primal and dual objective values. 
The duality gap can be computed using the relation $Tr((S+B)(Y-C) )=0$, 
\al{ 
\Delta(S,Y)  =  Tr(CS)  +Tr(BC) - Tr(BY) = Tr( SY ) 
} 
A pair of optimal solutions $(S, Y)$ has duality gap $0$. It satisfies $Tr(SY)=0$, as $S,Y$ are positive semidefinite this also implies that $SY=YS=0$. 

The logarithmic barrier is defined as $K(X) = -\log (\det(X))$. We give expressions for the first two derivatives of the logarithmic barrier in the interior of the psd-cone. The first derivative $\nabla K(X) \in \R^{n\times n}$ while the second derivative $\nabla^{2} K(X)$ can be viewed as a function $\R^{n\times n} \to \R^{n\times n}$, it therefore suffices to evaluate $\nabla^{2} K(X)(H)$ for any $H\in \R^{n\times n}$. 
\all{ 
\nabla K(X) &= -X^{-1} \nl 
\nabla^{2} K(X) (H) &= X^{-1} H X^{-1}  
} {mder} 
The central path for a pair of primal-dual SDPs consists of the optimal solutions to the following pair of convex programs parametrized by a positive constant $\nu \in \R_{+}$. 
If the primal and dual SDPs are strictly feasible, it follows that the central path is well defined and unique \cite{BN15}. 
\all{ 
Opt(P_\nu)&= \min_{S \succeq 0 }  \{ Tr(CS) + \nu K(S) \;|\; S \in (L - B)   \}  \nl 
Opt(D_\nu)&= \max_{Y \succeq 0}  \{ Tr(BY) - \nu K(Y) \;|\; Y \in (L^{\perp} + C)    \}
}{centpath}  
The following claim characterizes the solutions lying on the central path for a given value of $\nu$. 

\begin{claim} \label{cp} 
The optimal solutions $(S_{\nu}, Y_{\nu})$ on the central path satisfy $S_{\nu} Y_{\nu} = Y_{\nu} S_{\nu} = \nu I$. 
\end{claim}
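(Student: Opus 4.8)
The plan is to use the Lagrangian / KKT conditions for the two convex programs in \eqref{centpath}. Consider the primal program $Opt(P_\nu) = \min_{S \succ 0}\{ Tr(CS) + \nu K(S) \;|\; S \in (L-B)\}$. The feasible set $S \in (L-B)$ is the affine subspace $\{ S : S+B \in L\}$, so any feasible $S$ can be written $S = -B + \sum_k x_k A^{(k)}$ with $x \in \R^m$. Substituting, the problem becomes the unconstrained (over the interior of the cone) convex minimization of $f(x) = Tr(C(-B + \sum_k x_k A^{(k)})) + \nu K(-B+\sum_k x_k A^{(k)})$ in the variable $x$. Setting $\nabla_x f = 0$ and using $\nabla K(X) = -X^{-1}$ from \eqref{mder}, the stationarity condition reads $Tr(C A^{(k)}) - \nu\, Tr(S^{-1} A^{(k)}) = 0$ for all $k \in [m]$, i.e. $Tr\big((C - \nu S^{-1}) A^{(k)}\big) = 0$ for every $k$. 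This says exactly that $C - \nu S^{-1} \perp L$, equivalently $C - \nu S^{-1} \in L^{\perp}$, hence $\nu S^{-1} \in L^{\perp} + C$.

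Now I set $Y := \nu S^{-1}$. The computation just performed shows $Y \in L^{\perp}+C$, so $Y$ is feasible for the dual program $Opt(D_\nu)$; and $Y \succ 0$ since $S \succ 0$. I would then check that $Y$ is in fact the optimizer of $Opt(D_\nu)$: running the identical Lagrangian argument on the dual program (parametrize $Y = C + W$ with $W \in L^{\perp}$, differentiate $Tr(BY) - \nu K(Y)$, use $\nabla K(Y) = -Y^{-1}$) gives the stationarity condition $Tr\big((B - \nu Y^{-1}) Z\big) = 0$ for all $Z \in L^{\perp}$, i.e. $B - \nu Y^{-1} \in L$, i.e. $\nu Y^{-1} \in L - B$. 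Uniqueness of the central path point (asserted in the text just before \eqref{centpath}, from \cite{BN15}), together with the fact that $S = \nu Y^{-1}$ satisfies $S \in L-B$ and $S \succ 0$, identifies this $S$ with $S_\nu$ and correspondingly $Y$ with $Y_\nu$. Therefore $S_\nu Y_\nu = S_\nu (\nu S_\nu^{-1}) = \nu I$, and likewise $Y_\nu S_\nu = \nu I$, which is the claim.

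An alternative, perhaps cleaner, route avoids invoking uniqueness: write down the KKT system for \emph{both} programs simultaneously. Primal feasibility is $S \in L-B$, dual feasibility is $Y \in L^{\perp}+C$, and the stationarity conditions derived above are precisely $C - \nu S^{-1} \in L^{\perp}$ and $B - \nu Y^{-1} \in L$. One observes that the pair $(S,Y)$ with $Y = \nu S^{-1}$ and $S$ the (unique) primal central path point satisfies all four conditions, and by convexity the KKT conditions are necessary and sufficient for optimality, so $(S_\nu, Y_\nu) = (S, \nu S^{-1})$; then $S_\nu Y_\nu = Y_\nu S_\nu = \nu I$ follows immediately. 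I would present whichever version is shorter given what has already been established in the paper.

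The main obstacle is being careful about the domain and the differentiation of $K$: one must argue that the optimum of $P_\nu$ is attained in the \emph{interior} of the psd cone (so that $K$ is differentiable there and the first-order condition applies), which follows from strict feasibility together with $K(X) \to +\infty$ as $X$ approaches the boundary of the cone — a standard barrier-method fact, and the reason \cite{BN15} can be cited. The rest — computing $\nabla K$, translating "$\perp$ all $A^{(k)}$" into "$\in L^{\perp}$", and the one-line matrix manipulation $S(\nu S^{-1}) = \nu I$ — is routine.
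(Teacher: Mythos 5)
Your proposal is correct and follows essentially the same route as the paper: first-order optimality conditions for the two convex programs over their affine feasible sets, then setting $Y := \nu S_{\nu}^{-1}$ and reading off dual feasibility from primal optimality ($C-\nu S_{\nu}^{-1}\in L^{\perp}$) and dual optimality from primal feasibility ($S_{\nu}\in L-B$), with sufficiency by convexity. Only note the small sign slip in your dual stationarity condition: since $\nabla K(Y)=-Y^{-1}$, the condition is $B+\nu Y^{-1}\in L$ (not $B-\nu Y^{-1}\in L$), which is exactly what yields your stated conclusion $\nu Y^{-1}\in L-B$.
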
 
\begin{proof}  
Let $S$ be a primal feasible solution. If the gradient of the primal objective function $C+ \nu \nabla K(S)$ evaluated at $S$ has a non zero projection onto $L$, then incrementing $S$ in the direction of the the projected gradient improves the primal objective value. As \eqref{centpath} is a convex program, it follows that $S$ is a primal optimal solution if and only if it is feasible and $C+ \nu \nabla K(S) \in L^{\perp}$. Similarly a dual solution $Y$ is optimal if and only if it is feasible and $B - \nu \nabla  K(Y) \in L$. 

Let $S_{\nu}$ be the optimal primal solution on the central path for $\nu>0$ and define $Z= \nu S_{\nu}^{-1}$. It suffices to show that $Z$ is the dual optimal solution. By the primal optimality of $S_{\nu}$ we have $C- \nu S_{\nu}^{-1} \in L^{\perp}$ which implies that $Z \in L^{\perp} + C$ is dual feasible. The gradient of the dual objective function evaluated at $Z$ is $B - \nu \nabla  K(\nu S_{\nu}^{-1} ) = B +  S_{\nu} $. By the primal feasibility of $S_{\nu}$ we have $S_{\nu} \in L- B$ which implies that $Z$ is dual optimal. 
\end{proof} 
An ideal interior point algorithm would follow the central path in the direction $\nu \to 0$. Of course, the actual method does not always follow the path, but tries to stay close to it. 
We therefore need a measure to quantify the distance of a solution pair from the central path. Define
\all{ 
d(S, Y, \nu) = \norm{ \nu^{-1} Y - S^{-1}  }_{S}  =  \norm{ \nu^{-1} S - Y^{-1}  }_{Y}
}{dist} 
where the Hessian norm is given by $\norm{X}_{Y} = \sqrt{ Tr(X^{T} (\nabla^{2} K(Y))^{-1} X )} $.
Note that $d(S, Y, \nu)=0$ if $S,Y$ lie on the central path. 
The distance can be computed using equation \eqref{mder} as, 
\al{ \label{norm}
\norm{ \nu^{-1} Y -S^{-1}  }_{S}^{2} &= Tr((\nu^{-1} Y - S^{-1})^{T}  (\nabla^{2} K(S))^{-1}   (\nu^{-1}Y - S^{-1})) \nl 
&= Tr( (\nu^{-1}Y - S^{-1}) S ( \nu^{-1}Y - S^{-1}) S) \nl 
&= Tr( (\nu^{-1} YS - I)^{2} ) = Tr( (I - \nu^{-1} S^{1/2} Y S^{1/2})^{2} ) \nl 
&= \norm{ I - \nu^{-1} S^{1/2} Y S^{1/2}} _{F} ^{2}
} 
For the computation in the second step we used the fact that for an arbitrary matrix $Z$, we have $(\nabla^{2} K(S))^{-1}  Z = S ZS$ as from equation \eqref{mder} it follows that $(\nabla^{2} K(S)) (SZS) = Z$. The following claim shows that the duality gap can be bounded in terms of the distance to the central path, in particular it shows that if $d(S, Y, \nu) \leq \eta$
for some constant $\eta \in [0,1]$, then $Tr(SY) \leq 2\nu n$. 
\begin{claim} \label{dgap} 
For all $\nu>0$, the duality gap and distance from central path are related as,  
\als{ 
 \nu ( n- \sqrt{n} d(S, Y, \nu))  \leq Tr(SY) \leq \nu ( n+ \sqrt{n} d(S, Y, \nu)) 
} 
\end{claim}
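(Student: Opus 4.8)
The plan is to work directly from the closed form for the distance derived in equation \eqref{norm}, namely $d(S,Y,\nu) = \norm{ I - \nu^{-1} S^{1/2} Y S^{1/2}}_{F}$, and to relate both quantities to the eigenvalues of the single symmetric matrix $M := \nu^{-1} S^{1/2} Y S^{1/2}$. Since $S \succeq 0$ and $Y \succeq 0$, the matrix $S^{1/2} Y S^{1/2}$ is positive semidefinite, so $M$ is symmetric with real eigenvalues $\mu_{1}, \ldots, \mu_{n}$. The first step is to observe the two identities $Tr(M) = \nu^{-1} Tr(S^{1/2} Y S^{1/2}) = \nu^{-1} Tr(SY)$ (using cyclicity of the trace) and $d(S,Y,\nu)^{2} = Tr((I-M)^{2}) = \sum_{i \in [n]} (1-\mu_{i})^{2}$, the latter being exactly the content of \eqref{norm} rewritten spectrally.

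Next I would apply the Cauchy--Schwarz inequality to the vector $(1-\mu_{1}, \ldots, 1-\mu_{n})$ against the all-ones vector, giving
\al{
\vnorm{ \, n - Tr(M) \, } = \vnorm{ \sum_{i \in [n]} (1 - \mu_{i}) } \leq \sqrt{n} \left( \sum_{i \in [n]} (1-\mu_{i})^{2} \right)^{1/2} = \sqrt{n} \; d(S,Y,\nu).
}
Rearranging this two-sided inequality yields $n - \sqrt{n}\, d(S,Y,\nu) \leq Tr(M) \leq n + \sqrt{n}\, d(S,Y,\nu)$, and multiplying through by $\nu > 0$ together with the identity $\nu\, Tr(M) = Tr(SY)$ gives precisely the claimed bounds.

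I do not expect a genuine obstacle here: the only point requiring care is justifying that $M$ is symmetric (hence diagonalizable with real spectrum) so that $Tr((I-M)^{2}) = \sum_i (1-\mu_i)^2$, which follows from the positive semidefiniteness of $S$ and $Y$. The remark in the statement that $d(S,Y,\nu) \leq \eta$ for $\eta \in [0,1]$ implies $Tr(SY) \leq 2\nu n$ is then immediate from the upper bound, since $\sqrt{n}\, \eta \leq n$ for $n \geq 1$.
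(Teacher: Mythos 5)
Your proof is correct and follows essentially the same route as the paper: both pass to the eigenvalues of $S^{1/2}YS^{1/2}$, use the spectral form of $d(S,Y,\nu)$ from \eqref{norm}, and apply Cauchy--Schwarz to compare the trace deviation with the Frobenius deviation. Your version is marginally tidier in that a single Cauchy--Schwarz application on $\sum_i(1-\mu_i)$ gives both bounds at once, whereas the paper first invokes the triangle inequality and then Cauchy--Schwarz separately for each direction, but this is a cosmetic difference rather than a different argument.
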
 
\begin{proof} 
 Let $\lambda_{i}$ be the eigenvalues for $S^{1/2} YS^{1/2}$, then $Tr(SY) = \sum_{i\in [n]} \lambda_{i}$ and $d(S,Y, \nu) = \sum_{i \in [n]} ( 1- \nu^{-1} \lambda_{i})^{2}$
 can be related as follows, 
 \als{ 
 Tr(SY) = \sum_{i \in [n]} \lambda_{i} &\leq n \nu  + \sum_{i \in [n]} |\lambda_{i} - \nu| \nl 
 &\leq n \nu  + \sqrt{n} \en{ \sum_{i \in [n]} (\nu - \lambda_{i})^{2} } ^{1/2} \nl 
 &=  \nu   n + \nu \sqrt{n} d(S, Y, \nu)  
 } 
 Similarly,  $n \nu  \leq \sum_{i \in [n]} \lambda_{i} + \sum_{i \in [n]} |\nu - \lambda_{i}|$ by the triangle inequality and arguing as above we obtain that 
 $ \nu ( n- \sqrt{n} d(S, Y, \nu))  \leq Tr(SY)$. 
\end{proof}

\subsection{The Newton linear system} 

The interior point method starts with a pair of feasible solutions $(S,Y)$ for the primal dual SDPs \eqref{sdp} with duality gap $Tr(SY)=\nu n$ and 
$d(S, Y, \nu) \leq  \eta$ for a constant $\eta \leq 0.1$. A single step of the method updates the solution to $(S' = S + dS, Y' = Y + dY)$ such that 
$Tr(S'Y') = \nu' n$ for  
$\nu' = (1-\chi/\sqrt{n})\nu$ where $\chi \leq \eta$ is a positive constant. The updates $(S', Y')$ are computed by solving a system of linear equations called the Newton linear system which we define next. 

Let $G(S', Y', \nu') :=  S'Y' - \nu' I=0$ be an additional non-linear constraint. If this constraint is satisfied then we are on the central path $(S_{\nu'}, Y_{\nu'})$ by Lemma \ref{cp}.

The constraint $G(S', Y', \nu')=0$ can be linearized by considering the Taylor expansion of $G(S, Y, \nu')=0$ at $(S,Y)$ and setting the first order terms to $0$. 
That is, we replace $G(S', Y', \nu')=0$ by the linear constraint $dS \part{G(S, Y, \nu')}{S}  +  \part{G(S, Y, \nu')}{Y} dY + G(S, Y, \nu')= 0$ to obtain the Newton linear system, 
\all{ 
dS \in L, \;\;\; &dY \in L^{\perp}  \nl 
 dS Y + SdY &= \nu' I - SY
} {newton} 
Given a basis for $L^{\perp}$ the constraints $dS \in L$ can be written as a collection of linear equations of the form $Tr(T_{k}dS)=0$ where $T_{k}$ are a set of basis vectors 
 for $L^{\perp}$. Similarly the constraint $dY \in L^{\perp}$ can be written as a set of linear constraints given a basis for $L$. For the quantum interior point method, we have a basis for $L$ but it is computationally expensive to compute a basis for $L^{\perp}$, we will show how to solve the Newton linear system in this setting. 

The analysis of the classical interior point method shows that the Newton linear system has a unique solution. Further, the updated solution $(S'= S+ dS, Y'= Y+dY)$ is positive definite and for $\nu'= (1- \chi/\sqrt{n})\nu$ (where $\eta = \chi \leq 0.1$ are constants) the updated solution remains close to the central path, that is it satisfies $d(S', Y', \nu') \leq \eta$. It follows that $O(\sqrt{n} \log (n/\epsilon))$ iterations are required to obtain $\nu=O(\epsilon/n)$ and thus a duality gap of $\epsilon$ by Claim \ref{dgap}. 
We provide an analysis for classical interior point method in Section \ref{clan} that takes into account errors made by the quantum algorithm in solving the Newton linear system at each step of the method.

\subsection{The initial point} 
The interior point method requires an initial point that is feasible for the primal dual pair of SDPs and has a bounded duality gap. 
It may be easy to create such a pair of feasible solutions for some problems. In general, it is known that the initial point
can be found by solving Newton linear systems that are identical to the ones used for the actual interior point method.
We sketch this reduction, further details can be found in \cite{B15}. 

Consider a family of SDPs parameterized by $C$ having the form $\min_{S \succeq 0} \{ Tr(C S) \;|\; S \in L-B \}$. 
It follows from \eqref{centpath} that as $\nu \to \infty$, the central paths of all SDPs in this family converge 
to the minimizer of $K(S), S \succeq 0, S \in L-B$. The minimizer is the same for all SDPs in the family and is called the 
analytic center. Following the central path for any SDP in the family in the direction $\nu \to \infty$ converges to the analytic center.

Suppose we have some matrix $T \in L- B, T \succeq 0$. Notice that by Claim \ref{cp}, $T$ lies on the central path with $\nu=1$ 
for the primal-dual SDP pair with primal objective function $\min_{S \succeq 0} \{ Tr(T^{-1} S) \;|\; S \in L-B \}$, as $T^{-1}$ is a feasible dual solution and 
$T^{-1} T =I$.  An initial point close to the analytic center can be found 
by starting with $T$ and then following the central path for $O(\sqrt{n} \log n)$ iterations, with $\nu \to (1+ O(1/\sqrt{n})) \nu$ for each step. 

Finding such an initial point does not offer any additional difficulty, even for the quantum case as the Newton linear system depends on $(L, L^{\perp}, S, Y)$ 
but not on $(B,C)$. The quantum algorithm from Section \ref{qipm} will also be able to find an initial point and we can assume that 
we are given an initial solution $(S_{0}, Y_{0})$ such that the duality gap $Tr(S_{0} Y_{0}) = poly(n)$.

\begin{algorithm}[H]
\caption{Classical interior point method.} \label{qmat}
\begin{algorithmic}[1]
\REQUIRE Matrices $A^{(k)}$ with $k \in [m]$, $B \in \R^{n\times n}, c \in \R^{m}$ in memory, precision $\epsilon>0$.  \\
\begin{enumerate} 
\item Find feasible initial point $(S_{0}, Y_{0},\nu_0)$ close to the analytic center. 
\item Starting with $(S_{0}, Y_{0},\nu_0) $ repeat the following steps $O(\sqrt{n} \log (Tr(S_{0} Y_{0})/\epsilon))$ times. 
\begin{enumerate}
\item Solve the Newton linear system $\big[ dS \in L, dY \in L^{\perp}, S dY + dSY = (1- \frac{1}{ 10 \sqrt{n}} )\nu I - SY \big]$ to get 
$(dS, dY)$. 
\item Update $S \leftarrow S+ dS, Y \leftarrow Y + dY$, $\nu \leftarrow Tr(SY)/n$.
\end{enumerate}
\item Output $(S,Y)$.
\end{enumerate} 
\end{algorithmic}
\end{algorithm}

\section{Analysis of the approximate interior point method} \label{clan} 
We provide a convergence analysis for the approximate interior point method, which will be used for the analysis of the quantum interior point method we will describe in the next section. Our analysis follows the analysis for the exact classical interior point method in \cite{BN15}, where we extend it to the case where the Newton linear system is solved only approximately to obtain a solution
$\overline{dS \oplus dY}$ such that $\norm{ \overline{dS \oplus dY} - dS \oplus dY  }_{F} \leq \frac{\xi}{\norm{Y \oplus Y^{-1}}_{2}}$. Recall that $dS \oplus dY$ is the block diagonal matrix with blocks $dS$ and $dY$. 

Note that the above guarantee also implies that $\norm { dY - \overline{dY}  } _{F} \leq \frac{\xi}{ \norm{ Y \oplus Y^{-1} }_{2} }$ and $\norm{ dS - \overline{dS}  }_{F} \leq \frac{\xi}{ \norm{Y \oplus Y^{-1} }_{2}}$. Further, we have that $\norm{Y \oplus Y^{-1} }_{2} = \max ( \norm{Y}_{2}, \norm{Y^{-1}}_{2}) $ so we also have the 
approximation guarantees $\norm { dY - \overline{dY}  } _{F} \leq \frac{\xi}{ \norm{ Y^{-1} }_{2} }$ and $\norm{ dS - \overline{dS}  }_{F} \leq \frac{\xi}{ \norm{Y}_{2}}$. The main claim for the analysis of the approximate interior point method is as follows. 

\begin{theorem} \label{mainthm} 
Let $\chi \leq \eta = 0.1, \xi \leq 0.01$ be positive constants and let $(S,Y)$ be a pair of SDP solutions with $Tr(SY)=\nu n$ and $d(S,Y, \nu) \leq \eta$. Let $\nu'= (1 - \frac{\chi}{\sqrt{n}})\nu$, then the Newton linear system given by  
\als{ 
dS \in L, \;\;\; &dY \in L^{\perp}  \nl 
 dS Y + SdY &= \nu' I - SY
} 
has a unique solution $(dS, dY)$. Let $\norm { dY - \overline{dY}  } _{F} \leq \frac{\xi}{ \norm{Y^{-1}}_{2} } $ and $\norm{ dS - \overline{dS}  }_{F} \leq \frac{\xi}{\norm{Y}_{2}}$ be approximate solutions to the Newton linear system and let $(\overline{S}= S+\overline{dS}  , \overline{Y} = Y + \overline{dY})$ be the updated solution. Then, the following statements hold, 
\enum{ 
\item The updated solution is positive definite, that is $\overline{S} \succ 0$ and $\overline{Y} \succ 0$. 
\item The updated solution satisfies $d(\overline{S}, \overline{Y}, \overline{\nu}) \leq \eta$ and $Tr(\overline{S}\;\overline{Y}) = \overline{ \nu} n$ for $\overline{\nu} = ( 1- \frac{\alpha}{\sqrt{n}})  \nu$ for a constant $0 < \alpha \leq \chi$. 
} 
\end{theorem}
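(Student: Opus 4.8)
The plan is to follow the structure of the exact analysis in \cite{BN15} and carefully propagate the tomography error $\xi$ through each estimate. First I would establish existence and uniqueness of the Newton system solution: since $dS\in L$, $dY\in L^\perp$, and $L\oplus L^\perp$ spans the whole space with $dS\mapsto dS\,Y$ and $dS\mapsto S\,dY$ giving an invertible linear map on the relevant subspaces (this uses $S,Y\succ 0$), the system $dS\,Y + S\,dY = \nu' I - SY$ has a unique solution. This part is standard and I would cite the classical argument rather than redo it. A convenient reformulation is to conjugate: writing $\widehat{dS} = S^{-1/2}\,dS\,S^{-1/2}$ and $\widehat{dY} = S^{1/2}\,dY\,S^{1/2}$, the Newton equation becomes a Lyapunov-type equation in the eigenbasis of $S^{1/2}YS^{1/2}$, which makes the norm bookkeeping cleaner.

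Next I would prove part 1, positive definiteness of $\overline{S}$ and $\overline{Y}$. For the exact update, $\|S^{-1/2}dS\,S^{-1/2}\|_F$ is controlled (it is $O(\eta + \chi)$) using $d(S,Y,\nu)\le\eta$ and equation \eqref{norm}; since the operator norm is bounded by the Frobenius norm, the exact $S' = S+dS$ satisfies $S'\succ 0$. For the approximate update I would use $\|dS - \overline{dS}\|_F \le \xi/\|Y\|_2$ together with Fact~\ref{frobfact} to convert this into a bound on $\|S^{-1/2}(\overline{dS}-dS)S^{-1/2}\|$ of order $\xi$ (the extra factor $\|Y\|_2$ gets absorbed because, on the central path, $S \approx \nu Y^{-1}$, so $S^{-1}$ scales like $Y/\nu$; this is exactly why the error tolerance was chosen with a $\|Y\oplus Y^{-1}\|_2$ in the denominator). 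Adding the $O(\eta+\chi) + O(\xi) < 1$ perturbation to $I$ keeps $S^{-1/2}\overline{S}S^{-1/2} \succ 0$, hence $\overline{S}\succ 0$; the argument for $\overline{Y}$ is symmetric, using the dual form of the distance in \eqref{dist} and the bound on $\|dY - \overline{dY}\|_F$.

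For part 2 I would bound $d(\overline{S},\overline{Y},\overline{\nu})$ by first bounding the distance for the \emph{exact} update $d(S',Y',\nu')$ — this is the classical lemma from \cite{BN15}, giving $d(S',Y',\nu') \le \eta$ for suitable $\chi\le\eta=0.1$ — and then bounding the perturbation $|d(\overline{S},\overline{Y},\cdot) - d(S',Y',\cdot)|$ in terms of $\xi$. Using $d(S,Y,\nu) = \|I - \nu^{-1}S^{1/2}YS^{1/2}\|_F$ from \eqref{norm}, the perturbation involves terms like $\|S'^{1/2}Y'S'^{1/2} - \overline{S}^{1/2}\,\overline{Y}\,\overline{S}^{1/2}\|_F$, which I would expand and bound using the individual Frobenius bounds on $\overline{dS}-dS$ and $\overline{dY}-dY$, Fact~\ref{isospec}, Fact~\ref{frobfact}, and Lemma~\ref{frobineq} to handle the square-root/conjugation factors; the key is that each replacement of $S',Y'$ by $\overline{S},\overline{Y}$ costs $O(\xi)$ once the norm scalings are tracked. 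Then I would compute $\overline{\nu} = Tr(\overline{S}\,\overline{Y})/n$: for the exact update $Tr(S'Y') = \nu' n = (1-\chi/\sqrt n)\nu n$ exactly, and the approximate update perturbs this trace by $O(\xi\nu)$, so $\overline{\nu} = (1 - \alpha/\sqrt n)\nu$ for some $0 < \alpha \le \chi$ (the lower bound $\alpha>0$ needs $d(\overline S,\overline Y,\overline\nu)\le\eta$ plus Claim~\ref{dgap} to rule out the gap failing to decrease). Finally, combining, one checks that $\eta = 0.1$, $\chi\le\eta$, $\xi\le 0.01$ are small enough that all the $O(\cdot)$ constants add up to keep $d(\overline S,\overline Y,\overline\nu)\le\eta$, closing the induction.

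The main obstacle I anticipate is the perturbation bound in part 2: getting a clean $O(\xi)$ estimate on the change in $d(\cdot,\cdot,\cdot)$ requires carefully tracking how the conjugation by $S^{1/2}$ (resp. the Hessian norm weighting by $Y$) interacts with the Frobenius-norm error guarantees, and this is precisely where the choice of normalizing the tomography error by $\|Y\oplus Y^{-1}\|_2$ earns its keep — getting the scalings to cancel rather than accumulate factors of $\|Y\|_2$ or $\kappa$ is the delicate step. The positive-definiteness claims and the uniqueness are comparatively routine given the classical analysis.
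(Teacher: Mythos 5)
Your plan is sound in outline and follows the same overall strategy as the paper: rescale so that the hypotheses $\norm{dS-\overline{dS}}_F\le\xi/\norm{Y}_2$ and $\norm{dY-\overline{dY}}_F\le\xi/\norm{Y^{-1}}_2$ turn into plain $O(\xi)$ Frobenius errors, reuse the exact analysis of \cite{BN15}, and check that $\eta=0.1$, $\chi\le\eta$, $\xi\le 0.01$ leave enough slack. The differences are in the execution. The paper scales by $Q=Y^{1/2}$, sending $(S,Y)\to(Y^{1/2}SY^{1/2},I)$ and then diagonalizing, which makes the error bookkeeping exact, $\norm{Y^{1/2}(dS-\overline{dS})Y^{1/2}}_F=\norm{(dS-\overline{dS})Y}_F\le\xi$, and it explicitly normalizes $\nu=1$; your conjugation by $S^{\pm 1/2}$ is the mirror-image choice and works equally well, but note that your ``order $\xi$'' claim for $\norm{S^{-1/2}(\overline{dS}-dS)S^{-1/2}}_F$ actually comes out as $O(\xi/\nu)$, so you need the same $\nu=1$ normalization the paper makes explicit. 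For part 2 the paper does not take your exact-update-then-perturb route: it bounds $d(\overline S,\overline Y,\overline\nu)$ directly, using Fact \ref{isospec} to rewrite it as $\overline\nu^{-1}\norm{(\overline S-\overline\nu\,\overline Y^{-1})\overline Y}_F$ and splitting $\overline S-\overline\nu\,\overline Y^{-1}$ into three explicitly bounded pieces with the $\xi$ errors folded into each; this avoids any perturbation estimates for matrix square roots of the updated matrices, which is exactly where your route would need additional (doable but fiddly) Lipschitz bounds for $X\mapsto X^{1/2}$. Conversely, your treatment of the trace is more direct than the paper's: the exact identity $Tr(S'Y')=\nu' n$ (valid because $Tr(dS\,dY)=0$) plus a perturbation bound replaces the paper's double application of Claim \ref{dgap}; just note the perturbation is $O(\xi\sqrt{n}\,\nu)$ rather than $O(\xi\nu)$, which is still at the right scale since the exact drop in the trace is $\chi\nu\sqrt{n}$, and the conclusion $\alpha>0$ then needs $\xi$ small relative to $\chi$, matching the regime the paper works in.

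One genuine soft spot to fix: the bound $\norm{S^{-1/2}dS\,S^{-1/2}}_F=O(\eta+\chi)$ is not a consequence of $d(S,Y,\nu)\le\eta$ and equation \eqref{norm} alone. It is the content of the paper's Lemma \ref{frobbound}, whose proof needs the Newton equation itself together with the orthogonality $Tr(dS\,dY)=0$ and Cauchy--Schwarz (the centrality bound only controls the eigenvalues $s_i\in[1-\eta,1+\eta]$ that enter that argument). Since you intend to cite the classical analysis, this is acceptable, but it should be invoked as that lemma rather than attributed to closeness to the central path; the same lemma is also what you need to make the positive-definiteness step quantitative.
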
 

\subsection{Proof of Theorem \ref{mainthm}} \label{proof} 
First, in the exact case, we would have $\xi=0$ and also $\alpha = \chi$. Since we now have an approximation error, the updated solution still remains close to the path, but the trace $Tr(SY)$ drops by a factor slightly less than in the exact case, since $\alpha \leq \chi$. Nevertheless, the convergence rate remains the same. 
We follow the proof in \cite{BN15} and show that it suffices to prove Theorem \ref{mainthm} for the special case where $S,Y$ are diagonal matrices, and moreover $Y=I$. Given 
approximate solutions $(\overline{dS}, \overline{dY})$ for the Newton linear system, we need to take into account the scaling of the approximation errors when we map the matrices $(S,Y)$ to this special case.

The scaling symmetry of the cone $S_{n}^{+}$ of positive semidefinite matrices corresponding to an invertible matrix $Q$ is 
given by the map $X \to QXQ$. A scaling by $Q$ when applied to the SDP pair \eqref{sdp} yields, 
\all{ 
Opt(\widehat{P})&= \min_{S \succeq 0 }  \{ Tr( Q^{-1} CQ^{-1} S  )  \;|\; S \in Q(L - B)Q   \}  \nl 
Opt(\widehat{D})&= \max_{Y \succeq 0}  \{ Tr( QBQ Y ) \;|\; Y \in Q^{-1} (L^{\perp} + C) Q^{-1}    \}
} {qsdp}  
\noindent The duality gap and the distance to the central path are invariant under scalings. Thus a solution pair $(S,Y)$ for the SDP \eqref{sdp} 
when scaled by $Q$ yields a solution $(\widehat{S}, \widehat{Y})= (QSQ, Q^{-1} Y Q^{-1})$ for the scaled SDP \eqref{qsdp} 
such that $Tr(SY)=Tr(\widehat{S} \widehat{Y})$ and $d(\widehat{S},\widehat{Y}, \mu)=d(S,Y, \mu)$. Further, the solution 
$(dS, dY)$ to the Newton linear system for SDP \eqref{sdp} when scaled by $Q$ is the solution for the Newton linear 
system for the scaled SDP \eqref{qsdp}. 

We first apply a scaling by $Q=Y^{1/2}$, the solutions for the scaled SDP are $(Y^{1/2} S Y^{1/2}, I)$
and the solutions to the scaled Newton linear system are $(\widehat{dS}, \widehat{dY}) =  ( Y^{1/2} dS  Y^{1/2} , Y^{-1/2} dY  Y^{-1/2} ) $.
We compute the approximation error for the scaled Newton linear system
using the bounds  $\norm { dY - \overline{dY}  } _{F} \leq \frac{\xi}{ \norm{Y^{-1}}_{2} }$ 
and $\norm{ dS - \overline{dS}  }_{F} \leq \frac{\xi}{\norm{Y}_{2}}$ in the statement of Theorem \ref{mainthm}.
\als{ 
\norm{\widehat{dS} - Y^{1/2} \overline{dS}  Y^{1/2} }_F = \norm{ Y^{1/2} (dS - \overline{dS} ) Y^{1/2} }_{F} = \norm{ (dS - \overline{dS} ) Y }_{F}  \leq  \norm{ dS - \overline{dS}  }_{F}  \norm{Y}_{2}  \leq  \xi  
} 
Similarly we also have that $\norm { \widehat{dY} - Y^{-1/2} \overline{dY}  Y^{-1/2} } _{F} \leq \xi$. Hence, the approximation guarantees in Theorem \ref{mainthm} imply that 
the approximation error for the scaled Newton linear system is at most $\xi$.  

We now see how to make the two matrices diagonal. Let $B$ be the eigenbasis for $Y^{1/2}SY^{1/2}$ so that the scaled matrices are diagonal in the basis $B$. 
The statements in Theorem \ref{mainthm} do not depend on the basis used to write the matrices $Y$ and $S$, we can therefore apply a basis change by $B$ to diagonalize $S$.

To sum up, similar to the exact case, it suffices to prove Theorem \ref{mainthm} under the assumptions that $S,Y$ are diagonal matrices, $Y=I$ and 
the guarantees $\norm { dY - \overline{dY}  } _{F} \leq \xi$ and $\norm{ dS - \overline{dS}  }_{F} \leq \xi$. Note that we changed notation for convenience 
and $(dS, dY)$ now represent the solutions for the scaled Newton linear system. We also further assume that $\nu=1$ for convenience, the same proof also goes through for a general $\nu$. 

Let $s_{i}$ for $i \in [n]$ be the diagonal entries of $S$. The relations $Tr(SY) = \nu n$ and 
$d(S,Y, \nu)^{2} = \norm{Y^{1/2}SY^{1/2}-\nu I}_{F}^{2}  \leq \eta^{2}$ for $\nu=1$ imply the following constraints 
on the $s_{i}$, 
\all{ 
\sum_{i \in [n] } s_{i} = n \;, \;\; 
\sum_{i \in [n]} (s_{i} - 1)^{2} \leq \eta^{2}  
} {thetabound} 
It follows that $ s_{i} \in [1-\eta, 1+\eta]$ for all $i \in [n]$. 

The Newton linear system for the scaled SDP \eqref{qsdp} has a unique solution, this also implies that the Newton linear system for the original SDP \eqref{sdp} has a unique solution. 
The proof is from \cite{BN15} and is included for completeness. 
\begin{Lemma} \label{n1} 
The Newton linear system $dS \in L,  dY \in L^{\perp}$, 
$(d S)_{ij} +  s_{i} (d Y)_{ij}  = (\nu' I - S)_{ij}$ has a unique solution. 
\end{Lemma}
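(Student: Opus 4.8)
The plan is to prove this by the standard two‑step argument for a linear system: first show the associated homogeneous system has only the trivial solution (which gives uniqueness), and then exhibit a solution (existence). The one fact I will lean on throughout is that the constraints \eqref{thetabound} force $s_i\in[1-\eta,1+\eta]$ with $\eta=0.1$, so in particular $s_i>0$ for every $i$, i.e. $S=\mathrm{diag}(s_i)\succ 0$; this strict positivity is exactly what makes both steps go through.

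\textbf{Uniqueness.} Suppose $(dS,dY)$ solves the homogeneous system: $dS\in L$, $dY\in L^\perp$, and $(dS)_{ij}+s_i(dY)_{ij}=0$ for all $i,j$, so $(dS)_{ij}=-s_i(dY)_{ij}$. I would take the Frobenius inner product of $dS$ with $dY$. On one hand, since $dS\in L$ and $dY\in L^\perp$ this inner product vanishes. On the other hand $\langle dS,dY\rangle=\sum_{i,j}(dS)_{ij}(dY)_{ij}=-\sum_{i,j}s_i(dY)_{ij}^2$. Equating the two gives $\sum_{i,j}s_i(dY)_{ij}^2=0$, and since every $s_i>0$ this forces $(dY)_{ij}=0$ for all $i,j$, hence $dY=0$ and then $dS=0$ from the equation. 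Thus the homogeneous system has only the trivial solution.

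\textbf{Existence.} To produce a solution I would eliminate $dS$ and solve for $dY\in L^\perp$ alone. Write $R=\nu' I-S$ and let $F$ be the linear map on matrices with $(F(W))_{ij}=s_iW_{ij}$, so the Newton equation is $dS+F(dY)=R$. Projecting this equation orthogonally onto $L^\perp$ and using $dS\perp L^\perp$ turns it into $\Pi_{L^\perp}(F(dY))=\Pi_{L^\perp}(R)$, an equation for $dY$ inside $L^\perp$. The operator $\mathcal M:=\Pi_{L^\perp}\circ F$ restricted to $L^\perp$ is positive definite: for $W\in L^\perp$ one has $\langle\mathcal M(W),W\rangle=\langle F(W),W\rangle$ (the $L$‑component of $F(W)$ is killed by $W\in L^\perp$), and $\langle F(W),W\rangle=\sum_{i,j}s_iW_{ij}^2>0$ whenever $W\neq 0$. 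Hence $\mathcal M$ is invertible on $L^\perp$, there is a unique $dY\in L^\perp$ with $\mathcal M(dY)=\Pi_{L^\perp}(R)$, and setting $dS:=R-F(dY)$ gives $\Pi_{L^\perp}(dS)=\Pi_{L^\perp}(R)-\mathcal M(dY)=0$, so $dS\in(L^\perp)^\perp=L$; by construction $(dS,dY)$ then solves the Newton system. Combined with the uniqueness step this proves the lemma. (Alternatively, once uniqueness is in hand one can simply note the system is square and conclude existence for free, which is essentially the route in \cite{BN15}.)

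\textbf{Main obstacle.} The delicate point is purely bookkeeping: one must be careful about the ambient inner‑product space in which $L$ and $L^\perp$ live and verify that the reconstructed $dS$ really lands in $L$ (equivalently, that the square‑system dimension count is valid in the reduced diagonal setting we have been placed in). The positivity $s_i>0$, coming straight from the near‑central‑path bound \eqref{thetabound}, is what powers both the nondegeneracy of $\mathcal M$ and the homogeneous argument, so beyond that no further estimates are needed.
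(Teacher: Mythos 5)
Your proof is correct and its core is exactly the paper's argument: show the homogeneous system has only the trivial solution by pairing $(dS)_{ij}=-s_i(dY)_{ij}$ with the orthogonality $\langle dS,dY\rangle=0$ and the positivity $s_i\in[1-\eta,1+\eta]$, forcing $dS=dY=0$. Your explicit existence construction via the projected operator $\Pi_{L^\perp}\circ F$ is a valid extra, but the paper (as you note in your parenthetical) simply uses that the system is square, so the trivial-kernel argument already yields existence and uniqueness.
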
 
\begin{proof} 
It suffices to show that the homogeneous linear system with the right-hand side of equation $(d S)_{ij} +  s_{i} (d Y)_{ij}  = (\nu' I - S)_{ij}$ set to $0$ has only the trivial solution. 
The equation can be written as $(dY)_{ij} = -  \frac{1}{s_{i}} (dS)_{ij}$ which implies that $Tr(dY dS) = - \sum_{i,j \in [n]}  \frac{1}{s_{i}} (dS)_{ij}^{2}$. 
Further, $Tr(dY dS)=0$, since $dY \in L$ and  $dS \in L^{\perp}$. 

The coefficients $-1/s_{i}$ are non-zero as they belong to the interval $[1/(1-\eta), 1/(1+\eta)]$ for a constant $\eta \in [0,1]$, we therefore conclude that $dS =0$. It also follows that $dY=0$ as $(dY)_{ij} = - \frac{1}{s_{i}} (dS)_{ij}$. 

\end{proof} 

\noindent The following Lemma will be used for establishing the two claims in the statement of Theorem \ref{mainthm}. The proof is same as that in the classical case and 
is included for completeness.

\begin{Lemma} \label{frobbound} 
If $(dY, dS)$ are the solutions to the scaled Newton linear system for $\nu'= (1- \chi/\sqrt{n})\nu$ then $\norm{ dY }_{F} \leq \frac{ \sqrt{ \eta^{2}+ \chi^{2}}}{1-\eta}$ and  $\norm{ dS }_{F} \leq \sqrt{\eta^{2} + \chi^{2}}$. 
\end{Lemma}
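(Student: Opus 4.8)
The plan is to carry out the analysis in the reduced setting that the surrounding text has already set up: $S = \mathrm{diag}(s_1,\dots,s_n)$ with $\sum_i s_i = n$ and $\sum_i (s_i-1)^2 \le \eta^2$ (hence $s_i \in [1-\eta,1+\eta]$ and $\norm{S-I}_2 \le \eta$), $Y = I$, $\nu = 1$, and the scaled Newton system written as $dS + S\,dY = R$ with $R := \nu' I - S$ a diagonal matrix and $\nu' = 1 - \chi/\sqrt{n}$. The only structural facts I will use beyond this are that $dS \in L$, $dY \in L^\perp$, so $Tr(dS\,dY)=0$ (orthogonality of $L$ and $L^\perp$), and that $S \succeq (1-\eta)I \succ 0$; uniqueness of $(dS,dY)$ is already handled by Lemma \ref{n1} and is not needed here.

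The first step is to bound $\norm{R}_F$. Writing $\nu' - s_i = (\nu'-1) + (1 - s_i)$ and expanding, the cross term $2(\nu'-1)\sum_i(1-s_i)$ vanishes because $\sum_i s_i = n$, and $(\nu'-1)^2 n = \chi^2$, so $\norm{R}_F^2 = \sum_i(\nu'-s_i)^2 = \chi^2 + \sum_i (s_i-1)^2 \le \eta^2 + \chi^2$. Next, to bound $\norm{dY}_F$, I would take the Frobenius inner product of $dS + S\,dY = R$ with $dY$: the term $\langle dS, dY\rangle = Tr(dS\,dY)$ is zero, the term $\langle S\,dY, dY\rangle = Tr\!\big(S (dY)^2\big) \ge (1-\eta)\norm{dY}_F^2$ since $(dY)^2 \succeq 0$, and Cauchy--Schwarz bounds the right-hand side by $\norm{R}_F\norm{dY}_F$. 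This gives $\norm{dY}_F \le \norm{R}_F/(1-\eta) \le \sqrt{\eta^2+\chi^2}/(1-\eta)$, the first claim.

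For $\norm{dS}_F$ I would exploit the orthogonality more sharply. Write $S\,dY = dY + (S-I)\,dY$, so that $dS + dY = R - (S-I)\,dY$; since $dS \perp dY$ in the Frobenius inner product, $\norm{dS}_F^2 + \norm{dY}_F^2 = \norm{R - (S-I)dY}_F^2 \le \big(\norm{R}_F + \eta\norm{dY}_F\big)^2$, using $\norm{(S-I)dY}_F \le \norm{S-I}_2\norm{dY}_F \le \eta\norm{dY}_F$. Treating $t := \norm{dY}_F$ as a free nonnegative variable, this rearranges to $\norm{dS}_F^2 \le \norm{R}_F^2 + 2\eta\norm{R}_F\,t - (1-\eta^2)t^2$, and maximizing the right-hand side over $t$ yields $\norm{dS}_F^2 \le \norm{R}_F^2/(1-\eta^2) \le (\eta^2+\chi^2)/(1-\eta^2)$, which matches the claimed $\sqrt{\eta^2+\chi^2}$ up to a $1+O(\eta^2)$ factor; a sharper accounting of the lower-order terms, which vanish identically when $S=I$ (in which case the Pythagorean identity gives $\norm{dS}_F^2 + \norm{dY}_F^2 = \norm{R}_F^2$ exactly), should remove this factor.

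The main obstacle, and the place where the hypothesis $d(S,Y,\nu)\le\eta$ is essential, is controlling the ``twisted'' term $S\,dY$: it does not lie in $L^\perp$, so $dS$ and $S\,dY$ are not orthogonal and one cannot directly split $R$ along $L \oplus L^\perp$. The resolution is that $Tr(dS\,dY)=0$ lets us replace $S$ by $S-I$ in the offending inner products, and $\norm{S-I}_2 \le \eta$ precisely because we are $\eta$-close to the central path, so every estimate that would be exact for $S=I$ survives with an $O(\eta)$ perturbation. Once that is in place the rest is bookkeeping with Cauchy--Schwarz and the bound on $\norm{R}_F$.
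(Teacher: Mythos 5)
Your bound on $\norm{dY}_F$ is, in substance, exactly the paper's argument: the paper multiplies the entrywise equation $(dS)_{ij}+s_i(dY)_{ij}=(\nu'I-S)_{ij}$ by $(dY)_{ij}$ and sums, which is your Frobenius inner product with $dY$; the cross term is killed by $Tr(dS\,dY)=0$, the left side is lower bounded by $(1-\eta)\norm{dY}_F^2$, Cauchy--Schwarz is applied to the right side, and $\sum_i(\nu'-s_i)^2\leq \eta^2+\chi^2$ is computed precisely as you do, using $\sum_i(1-s_i)=0$. So the first half of the lemma is correct and matches the paper.

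For $\norm{dS}_F$ you take a different route, and it does not reach the stated constant: with $R:=\nu' I-S$, the Pythagorean split $\norm{dS}_F^2+\norm{dY}_F^2=\norm{R-(S-I)dY}_F^2$ followed by maximizing over $t=\norm{dY}_F$ gives $\norm{dS}_F\leq \norm{R}_F/\sqrt{1-\eta^2}$, and this is tight within your decomposition: the maximizing value $t=\eta\norm{R}_F/(1-\eta^2)$ is below your proven bound $\norm{R}_F/(1-\eta)$ on $\norm{dY}_F$, so feeding that bound back in (or any ``sharper accounting of lower-order terms'') cannot remove the $(1-\eta^2)^{-1/2}$ factor; the closing promise that it ``should'' disappear is therefore unsubstantiated as it stands. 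The paper gets the exact constant differently: it asserts the relation $\norm{dS}_F^2=\sum_i(\nu'-s_i)(dS)_{ii}$ ``analogously'' to the $dY$ case and applies one Cauchy--Schwarz, giving $\norm{dS}_F\leq\sqrt{\eta^2+\chi^2}$ on the nose. Note, however, that the cross term discarded there is $\sum_{ij}s_i(dS)_{ij}(dY)_{ij}$, which is not $Tr(dS\,dY)$ and does not vanish from orthogonality alone, so the exact constant rests on a cancellation your analysis (correctly) does not assume. Practically the discrepancy is immaterial: for $\eta=0.1$ your constant is larger by a factor at most $1.006$, and the only downstream uses (Lemmas \ref{l2} and \ref{death}) just need $\norm{dS}_F$ and $\norm{dY}_F$ below roughly $0.15$, which your bounds deliver. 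The concrete fix is either to state the second inequality with the extra $(1-\eta^2)^{-1/2}$ factor and propagate it (nothing later breaks), or to justify the paper's cross-term cancellation; do not leave it as an unproven ``bookkeeping'' claim.
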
 
\begin{proof} 
The scaled Newton linear system has the constraints $dS \in L,  dY \in L^{\perp}$, 
$(d S)_{ij} +  s_{i} (d Y)_{ij}  = (\nu' I - S)_{ij}$. 
Multiplying the latter equation by $dY_{ij}$ and summing up over $i, j \in [n]$ we have, 
\all{ 
\sum_{ij} s_{i} (dY)_{ij}^{2} = \sum_{i}   (\nu' - s_{i} )  (dY)_{ii}
} {ysum} 
since $Tr(dSdY) =0$ due to the orthogonality of $(dS, dY)$. 
We can bound $\norm{dY}_{F}$ using the relation derived above and the fact that $(1-\eta) \leq s_{i}$ for all $ i \in [n]$, 
\all{ 
(1-\eta ) \norm{ dY }_{F}^{2} &\leq \sum_{ij} s_{i}  (dY)_{ij}^{2} = \sum_{i}  (\nu' - s_{i} )  (dY)_{ii} \nl 
&\leq \en{ \sum_{i}  (\nu' - s_{i} )^{2} }^{1/2} \en{ \sum_{i} (dY)_{ii}^{2}}^{1/2} 
} {yineq} 
The second line follows from the Cauchy-Schwarz inequality. Substituting $\nu'= 1- \chi/\sqrt{n}$ and using the bounds  $\sum_{i \in [n]} (1-s_{i} )=0$ and 
$\sum_{i} (1-s_{i} )^{2} \leq \eta^{2}$ from equation \eqref{thetabound}, 
\als{ 
 \sum_{i}  (\nu' - s_{i} )^{2} &\leq  \sum_{i}  (1- s_{i} - \chi/\sqrt{n})^{2}  =   \sum_{i}  (1- s_{i} )^{2}    + \chi^{2}   \leq \eta^{2} + \chi^{2}.   
} 
Substituting into equation \eqref{yineq}, it follows that $\norm{dY}_{F} \leq \frac { \sqrt{\eta^{2} + \chi^{2}} } { 1-\eta}$ as claimed. In order to bound $\norm{ dS}_{F}$, we use the relation $\sum_{ij}  (dS)_{ij}^{2} = \sum_{i}  (\nu' - s_{i} ) (dS)_{ii}$ analogous to equation \eqref{ysum}. Applying Cauchy-Schwarz as in equation \eqref{yineq}, we have that 
$\norm{dS}_{F}^{2} \leq \sqrt{\eta^{2} + \chi^{2}} \norm{dS}_{F} $.   
\end{proof} 
\noindent We are now ready to prove part [1.] of Theorem \ref{mainthm}, namely that the updated solutions $\overline{S}, \overline{Y}$ are positive definite.

\begin{Lemma} \label{l2} 
The matrices $\overline{Y} = I + \overline{dY} , \overline{S} =S + \overline{dS} $ are positive definite for parameters $\chi \leq \eta = 0.1$ and $\xi <0.01$. 
\end{Lemma}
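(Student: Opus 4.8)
The plan is to control the spectra of the perturbations $\overline{dS}$ and $\overline{dY}$ by their Frobenius norms and then argue that they are too small to destroy positive definiteness of $S$ and $I$ respectively. First I would invoke Lemma~\ref{frobbound}, which in the scaled diagonal setting gives $\norm{dS}_{F} \leq \sqrt{\eta^{2}+\chi^{2}}$ and $\norm{dY}_{F} \leq \frac{\sqrt{\eta^{2}+\chi^{2}}}{1-\eta}$. Combining with the approximation guarantees $\norm{dS-\overline{dS}}_{F}\leq\xi$ and $\norm{dY-\overline{dY}}_{F}\leq\xi$ via the triangle inequality yields $\norm{\overline{dS}}_{F}\leq\sqrt{\eta^{2}+\chi^{2}}+\xi$ and $\norm{\overline{dY}}_{F}\leq\frac{\sqrt{\eta^{2}+\chi^{2}}}{1-\eta}+\xi$.

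Next I would use that for any symmetric matrix the spectral norm is at most the Frobenius norm, so every eigenvalue of $\overline{dS}$ (resp.\ $\overline{dY}$) lies in $[-\norm{\overline{dS}}_{F},\norm{\overline{dS}}_{F}]$ (resp.\ the analogous interval for $\overline{dY}$). For $\overline{Y}=I+\overline{dY}$, it then suffices that $\norm{\overline{dY}}_{F}<1$: with $\eta=\chi=0.1$ and $\xi<0.01$ we get $\norm{\overline{dY}}_{F}\leq \frac{\sqrt{0.02}}{0.9}+0.01 < 1$, hence $\lambda_{\min}(\overline{Y})\geq 1-\norm{\overline{dY}}_{F}>0$. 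For $\overline{S}=S+\overline{dS}$, I would use that $S$ is diagonal with entries $s_{i}\in[1-\eta,1+\eta]$, so $S\succeq(1-\eta)I$; it then suffices that $\norm{\overline{dS}}_{F}<1-\eta$, and indeed $\norm{\overline{dS}}_{F}\leq\sqrt{0.02}+0.01 < 0.9 = 1-\eta$, giving $\lambda_{\min}(\overline{S})\geq (1-\eta)-\norm{\overline{dS}}_{F}>0$.

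Since the statements of Theorem~\ref{mainthm} are invariant under the scaling by $Q=Y^{1/2}$ and the basis change to the eigenbasis of $Y^{1/2}SY^{1/2}$ (as established in Section~\ref{proof}), positive definiteness of $\overline{S},\overline{Y}$ in the reduced case implies it for the original matrices; conjugation by an invertible matrix preserves the sign of the spectrum. I do not expect any genuine obstacle here: the argument is a one-step perturbation bound, and the only thing to check is that the explicit constants $\eta=\chi=0.1$, $\xi<0.01$ indeed make $\sqrt{\eta^{2}+\chi^{2}}+\xi$ smaller than both $1-\eta$ and $1$, which they comfortably do. The only mild subtlety worth stating carefully is that the bounds from Lemma~\ref{frobbound} are for the \emph{exact} solution $(dS,dY)$ of the scaled system, so the triangle-inequality step using $\xi$ must be applied before converting to spectral norms.
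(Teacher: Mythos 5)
Your proposal is correct and follows essentially the same route as the paper: both use the Frobenius bounds from Lemma~\ref{frobbound}, dominate spectral norms by Frobenius norms, and observe that the additional $\xi<0.01$ perturbation from $(\overline{dS},\overline{dY})$ cannot destroy the eigenvalue margins $1-\frac{\sqrt{\eta^{2}+\chi^{2}}}{1-\eta}$ and $(1-\eta)-\sqrt{\eta^{2}+\chi^{2}}$. Your closing remark about invariance of positive definiteness under the scaling/basis change is a fine extra check, already covered by the reduction in Section~\ref{proof}.
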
 
\begin{proof} 
The Frobenius norm bound for $dS, dY$ proved in Lemma  \ref{frobbound} also implies the same bound on the corresponding spectral norms. 
The smallest eigenvalue of $(I + dY)$ is at least $1-\frac { \sqrt{\eta^{2} + \chi^{2}} } { 1-\eta} >0.84$ while that for $S  + dS$ is at least $\min s_{i} - \sqrt{\eta^{2} + \chi^{2}} \geq (1-\eta) - \sqrt{\eta^{2} + \chi^{2}} > 0.75$ where we used that $\min s_{i} \geq 1-\eta$ and $ \chi \leq \eta < 0.1$.

The additive error due to the approximations $\norm { dY - \overline{dY}  } _{F} \leq \xi$ and $\norm{ dS - \overline{dS}  }_{F} \leq \xi$ is at most $0.01$, so the 
matrices $\overline{Y}, \overline{S}$ are positive definite. 

\end{proof}

\noindent In order to prove part [2.] of Theorem \ref{mainthm}, we first show that the updated solutions are also $\eta$ close to the central path for parameters 
$ \chi \leq \eta < 0.1$. 

\begin{Lemma} \label{death} 
The distance to central path is maintained, that is $d(\overline{S}, \overline{Y}, \overline{\nu})  
< \eta$ for $\overline{\nu} = (1- \alpha/\sqrt{n})\nu $, for any $0 < \alpha \leq 0.1$ and constants $ \chi \leq \eta = 0.1$, $\xi <0.01$ and $\nu = 1$. 
\end{Lemma}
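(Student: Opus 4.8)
The plan is to carry out the computation in the reduced setting already established above: $S$ is diagonal with entries $s_i\in[1-\eta,1+\eta]$, $\overline{S}=S+\overline{dS}$, $\overline{Y}=I+\overline{dY}$, $\nu=1$, $\nu'=1-\chi/\sqrt{n}$, and the exact Newton solution $(dS,dY)$ (Lemma~\ref{n1}) satisfies $\norm{dS-\overline{dS}}_F\le\xi$ and $\norm{dY-\overline{dY}}_F\le\xi$. Set $E_S=\overline{dS}-dS$, $E_Y=\overline{dY}-dY$. The first step is the residual identity for the updated complementarity product: expanding $\overline{S}\,\overline{Y}=(S+\overline{dS})(I+\overline{dY})$ and substituting the matrix form $S+dS+S\,dY=\nu'I$ of the Newton equation $(dS)_{ij}+s_i(dY)_{ij}=(\nu'I-S)_{ij}$ yields
\[
\overline{S}\,\overline{Y}-\nu'I \;=\; dS\,dY + (S+dS)E_Y + E_S(I+dY+E_Y)\;=:\;R .
\]
Then I would bound $\norm{R}_F$ by a small absolute constant, using $\norm{ABC}_F\le\norm{A}_2\norm{B}_F\norm{C}_2$ together with the Frobenius bounds $\norm{dS}_F\le\sqrt{\eta^2+\chi^2}$, $\norm{dY}_F\le\sqrt{\eta^2+\chi^2}/(1-\eta)$ from Lemma~\ref{frobbound}, the bounds $\norm{E_S}_F,\norm{E_Y}_F\le\xi$, and $\norm{S}_2\le1+\eta$.

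The second step converts this into a statement about the eigenvalues of $\overline{S}\,\overline{Y}$. Since $\overline{S},\overline{Y}\succ0$ by Lemma~\ref{l2}, $\overline{S}\,\overline{Y}$ is similar (via $\overline{S}^{\pm1/2}$) to the positive definite symmetric matrix $\overline{S}^{1/2}\overline{Y}\,\overline{S}^{1/2}$, so its eigenvalues $\overline{\lambda}_1,\dots,\overline{\lambda}_n$ are real and positive; conjugating the residual identity by $\overline{S}^{-1/2}$ gives $\overline{S}^{1/2}\overline{Y}\,\overline{S}^{1/2}-\nu'I=\overline{S}^{-1/2}R\,\overline{S}^{1/2}$, hence
\[
\Big(\sum_i(\overline{\lambda}_i-\nu')^2\Big)^{1/2}=\norm{\overline{S}^{-1/2}R\,\overline{S}^{1/2}}_F\le\kappa(\overline{S})^{1/2}\,\norm{R}_F=:C ,
\]
where $\kappa(\overline{S})$ is within a small factor of $1$ by the eigenvalue bounds on $\overline{S}=S+\overline{dS}$ coming from Lemma~\ref{frobbound} and Lemma~\ref{l2}. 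By equation~\eqref{norm} (applied with the roles of the two slots as needed), $d(\overline{S},\overline{Y},\overline{\nu})=\norm{I-\overline{\nu}^{-1}\overline{S}^{1/2}\overline{Y}\,\overline{S}^{1/2}}_F=\overline{\nu}^{-1}\big(\sum_i(\overline{\nu}-\overline{\lambda}_i)^2\big)^{1/2}$.

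The last step estimates $\sum_i(\overline{\nu}-\overline{\lambda}_i)^2$. Writing $\overline{\nu}-\overline{\lambda}_i=(\overline{\nu}-\nu')-(\overline{\lambda}_i-\nu')$ and taking $\overline{\nu}=\frac1n Tr(\overline{S}\,\overline{Y})=\frac1n\sum_i\overline{\lambda}_i$ — the natural choice, which pins down $\alpha$ through $\overline{\nu}=(1-\alpha/\sqrt n)\nu$ — the cross term vanishes and $\sum_i(\overline{\nu}-\overline{\lambda}_i)^2=\sum_i(\overline{\lambda}_i-\nu')^2-n(\overline{\nu}-\nu')^2\le C^2$, so $d(\overline{S},\overline{Y},\overline{\nu})\le C/\overline{\nu}<\eta$ after plugging in $\nu=1$, $\chi\le\eta=0.1$, $\xi\le0.01$, $\overline{\nu}\ge1-\eta/\sqrt n$. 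More generally, for $\overline{\nu}=(1-\alpha/\sqrt n)\nu$ one has $\overline{\nu}-\nu'=(\chi-\alpha)\nu/\sqrt n$, so $\sqrt n\,|\overline{\nu}-\nu'|$ is a small constant and $d(\overline{S},\overline{Y},\overline{\nu})\le\overline{\nu}^{-1}\big(\sqrt n\,|\overline{\nu}-\nu'|+C\big)$, again below $\eta$ in the relevant range of $\alpha$. I expect the genuine work to be the bookkeeping of constants in this final step: the slack below $\eta=0.1$ is thin, so one must keep $\norm{R}_F$ truly small and, crucially, cancel the first-order $\overline{\nu}-\nu'$ contribution by centering at $\overline{\nu}=Tr(\overline{S}\,\overline{Y})/n$ rather than bounding it crudely.
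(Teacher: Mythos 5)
Your argument is correct in substance but takes a genuinely different route from the paper. The paper works with $d(\overline{S},\overline{Y},\overline{\nu})=\overline{\nu}^{-1}\norm{(\overline{S}-\overline{\nu}\,\overline{Y}^{-1})\overline{Y}}_F$ and splits $\overline{S}-\overline{\nu}\,\overline{Y}^{-1}$ into three pieces $Z_1+Z_2+Z_3$, bounding the inverse $(I+\overline{dY})^{-1}$ via its eigenvalues; this costs the terms $\eta\rho$, $\rho^2/(1-\rho)$ and several $\xi/(1-\rho)$ contributions and lands at $0.099$, just under $\eta$. You instead never touch $\overline{Y}^{-1}$: the residual identity $\overline{S}\,\overline{Y}-\nu'I=dS\,dY+(S+dS)E_Y+E_S(I+dY+E_Y)$ is a correct consequence of the matrix form $S+dS+S\,dY=\nu'I$ of the scaled Newton equation, the similarity $\overline{S}^{-1/2}(\overline{S}\,\overline{Y}-\nu'I)\overline{S}^{1/2}$ legitimately converts this into an $\ell_2$ bound on the eigenvalues $\overline{\lambda}_i-\nu'$ at the cost of a factor $\kappa(\overline{S})^{1/2}\approx 1.3$ (justified by Lemmas \ref{frobbound} and \ref{l2}), and the numbers work out with more room than the paper's: $\norm{R}_F\lesssim 0.047$, so $C\lesssim 0.06$ and $d\lesssim 0.067<\eta$. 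The variance decomposition when centering at $\overline{\nu}=Tr(\overline{S}\,\overline{Y})/n$ is a nice touch the paper does not have, and it matches what the algorithm actually does (and what part 2 of Theorem \ref{mainthm} asserts).

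The one loose point is your closing sentence about general $\alpha$: the crude bound $d\le\overline{\nu}^{-1}(\sqrt{n}\,|\overline{\nu}-\nu'|+C)$ has $\sqrt{n}\,|\overline{\nu}-\nu'|=|\chi-\alpha|$, which can be as large as $0.1$ within the stated range (e.g.\ $\chi=0.1$ and $\alpha$ near $0$), so the bound is then about $0.16$ and does not give $d<\eta$ for every $\alpha\in(0,0.1]$; "again below $\eta$ in the relevant range" only holds when $|\chi-\alpha|$ is a few hundredths. You should not feel singled out, however: the paper's proof has exactly the same unacknowledged issue, since in bounding $Z_1$ it substitutes the constraint $(dS)_{ij}=(\overline{\nu}I-S)_{ij}-s_i(dY)_{ij}$, which is the Newton equation only when $\overline{\nu}=\nu'$, i.e.\ $\alpha=\chi$; for $\alpha\neq\chi$ an extra term $(\nu'-\overline{\nu})I$ of Frobenius norm $|\chi-\alpha|$ appears and is missing from the bound \eqref{db}. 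So your proof covers the case the algorithm actually needs (the trace-centered $\overline{\nu}$, where the cross term vanishes) at least as rigorously as the paper, and is honest about where the general-$\alpha$ statement becomes delicate; to fully match the literal statement you would need to restrict $|\chi-\alpha|$ or sharpen the offset analysis, and the same is true of the paper's own argument.
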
 

\begin{proof} 
The distance $d(\overline{S}, \overline{Y}, \overline{\nu}) = \norm { (\overline{\nu})^{-1}(\overline{S})^{1/2} \overline{Y} (\overline{S})^{1/2} - I}_{F}$ by definition \eqref{norm}. We can write the identity matrix as $I= (\overline{S}) ^{-1/2} (\overline{S})^{1/2}$ as $\overline{S}$ is a
positive definite matrix by Lemma \ref{l2}. It follows from Lemma  \ref{isospec} that $d(\overline{S}, \overline{Y}, \overline{\nu}) = \norm{(\overline{\nu})^{-1} \overline{S}\overline{Y} -  I }_{F}$. Further, using $I= (\overline{Y})^{-1}\overline{Y} $ we have that $d(\overline{S}, \overline{Y}, \overline{\nu})  = (\overline{\nu})^{-1} \norm{ (\overline{S}- \overline{\nu} \overline{Y}^{-1} ) \overline{Y} }_{F}$.

We have that $\norm{\overline{Y}}_2 = \norm{ I + dY }_2 \leq (1+ \rho )$ for $\rho := \frac{\sqrt{\eta^{2} + \chi^{2}}}{1-\eta}$ using 
Lemma  \ref{frobbound}. Hence, it suffices to upper bound $ \norm{ \overline{S}- \overline{\nu} \overline{Y}^{-1} }_{F}$, since using that $\norm{AB}_F \leq \norm{A}_F\norm{B}_2$, we have $ (\overline{\nu})^{-1} \norm{ (\overline{S}- \overline{\nu} \overline{Y}^{-1} ) Y^{'} }_{F} \leq \frac{1+\rho}{\overline{\nu}} \norm{ \overline{S}- \overline{\nu} Y^{-1}  }_{F}$. 

We split $Z= (\overline{S} - \overline{\nu} \overline{Y}^{-1})$ into a sum of three terms and then use the triangle  inequality to bound $\norm{Z}_{F}$. 
\al{ 
Z &=  (S + \overline{dS}  - \overline{\nu} (I+ \overline{d Y} ) ^{-1} )  \nl 
&= (S+ \overline{dS}  -\overline{\nu} I + \overline{dY} ) + (\overline{\nu}-1) \overline{dY}   + \overline{\nu} ( I - \overline{dY}  - (I + \overline{dY} )^{-1} ) \nl 
&:= Z_1 + Z_2 + Z_3
} 
We have the guarantees $\norm{ dS - \overline{dS}  }_{F} \leq \xi$ and $\norm{ dY - \overline{dY}  }_{F} \leq \xi$. 
We next bound the Frobenius norms of the individual terms $Z_1, Z_2$ and $Z_3$ in the above decomposition. 
\enum{ 
\item By the triangle inequality $\norm{Z_{1}}_{F} =\norm{(S+ dS -\overline{\nu} I + dY)}_{F} + 2\xi$, it therefore suffices to bound the Frobenius norm for 
$\widetilde{Z_{1}} := (S+ dS -\overline{\nu} I + dY)$. The entries $(\widetilde{Z_{1}})_{ij}$ can be computed explicitly using the Newton linear system constraint $(dS)_{ij}= (\overline{\nu} I - S )_{ij} - s_{i} (dY)_{ij}$ we have, 
\als{ 
(\widetilde{Z_{1}})_{ij} &= (S + dS + dY - \overline{\nu} I )_{ij} \nl 
&= (S + dY - \overline{\nu} I )_{ij}  - s_{i}  (dY)_{ij}+ (\overline{\nu} I - S )_{ij}\nl 
&= (1 - s_{i} ) (dY)_{ij} \leq \eta  (dY)_{ij}. 
}
Together with Lemma  \ref{frobbound} this implies that $\norm{Z_{1}}_{F} \leq \eta \rho + 2\xi$. 

\item $\norm{Z_{2}}_{F} \leq \frac{\chi}{\sqrt{n}} (\rho + \xi)$ using Lemma  \ref{frobbound}  and the fact $\norm{ dY - \overline{dY}  }_{F} \leq \xi$. 

\item The largest eigenvalue of $dY$ is at most $\rho$, we therefore have $\norm{ (I+ \overline{dY} )^{-1} - (I+ dY)^{-1} }_{F} \leq \frac{\xi}{1-\rho}$. 
By the triangle inequality $\norm{Z_{3}}_{F}  \leq \overline{\nu} (\norm{ ( I - dY - (I + dY)^{-1} ) }_{F} + \xi + \frac{\xi}{1-\rho})$. 
Let $\lambda_{i}$ be the eigenvalues of $dY$, then 
\al{ 
\overline{\nu} \norm{ ( I - dY - (I + dY)^{-1} ) }_{F} &= \overline{\nu} \en{ \sum_{i} \en{     (1- \lambda_{i}) - \frac{1}{ (1+\lambda_{i})}  }^{2}}^{1/2}  \nl 
&\leq \en{ \sum_{i} \frac{\lambda_{i}^{4}}{(1+ \lambda_{i})^{2}} }^{1/2}  \leq \frac{\rho}{(1-\rho)}   \en{ \sum_{i} \lambda_{i}^{2} }^{1/2}   \nl 
& \leq \frac{\rho^{2}}{(1-\rho)} 
} 
In the second line we used that $\overline{\nu} <1$ and that the maximum absolute value of $| \lambda_{i} | \leq \rho$, in the third line we used the Frobenius norm bound from Lemma  \ref{frobbound}. 
} 

Combining the three bounds we have that, 
\all{
d(\overline{S}, \overline{Y}, \overline{\nu})   \leq  \frac{ (1+\rho)}{ \overline{\nu}} \en{ \eta \rho + 2\xi +  \frac{\chi}{\sqrt{n}} (\rho + \xi) } +  (1+\rho) \en{ \frac{\rho^{2} + \xi}{(1-\rho)} + \xi } 
} {db} 
If $\chi \leq \eta = 0.1$ and $\xi < 0.01$, and for any $0 < \alpha \leq 0.1$, the right hand side is approximately $0.05 + 4.34 \xi + o(1/\sqrt{n})$ which is less than $\eta$ for large enough $n$. We therefore have that $d(\overline{S}, \overline{Y}, \overline{\nu})  \leq 0.099 < \eta$. 
\end{proof} 

We note that we have not tried to optimize the values $\eta, \chi$, and $ \xi$. One should be able to get a better bound for $\xi$. We are now ready to prove the last part of Theorem \ref{mainthm}. 
\begin{Lemma} 
For $\xi <0.01$, the updated solution satisfies $Tr(\overline{S}\overline{Y}) = ( 1- \frac{\alpha}{\sqrt{n}})  n$  for some constant $0.001 \leq \alpha \leq 0.1$. 
\end{Lemma}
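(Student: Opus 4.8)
The plan is to work, as reduced in Section~\ref{proof}, under the assumptions that $S$ and $Y=I$ are diagonal, $\nu=1$, $\nu'=1-\chi/\sqrt{n}$, and the approximate Newton solutions satisfy $\norm{dS-\overline{dS}}_{F}\le\xi$ and $\norm{dY-\overline{dY}}_{F}\le\xi$. Writing $E_{S}:=\overline{dS}-dS$ and $E_{Y}:=\overline{dY}-dY$ (so $\norm{E_{S}}_{F},\norm{E_{Y}}_{F}\le\xi$), I would expand
\[ \overline{S}\,\overline{Y} = (S+dS+E_{S})(I+dY+E_{Y}) \]
and use linearity of the trace to split $Tr(\overline{S}\,\overline{Y})$ into the exact part $Tr\big((S+dS)(I+dY)\big)$ plus the five cross terms $Tr(SE_{Y})$, $Tr(E_{S})$, $Tr(dS\,E_{Y})$, $Tr(E_{S}\,dY)$ and $Tr(E_{S}E_{Y})$.

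For the exact part: $Tr(S\cdot I)=Tr(S)=n$ by the first constraint in \eqref{thetabound}; $Tr(dS\cdot I)=Tr(dS)$; $Tr(dS\,dY)=0$ by the orthogonality $dS\in L$, $dY\in L^{\perp}$ (exactly as used in Lemma~\ref{n1}); and summing the diagonal entries of the Newton equation $(dS)_{ij}+s_{i}(dY)_{ij}=(\nu'I-S)_{ij}$ gives $Tr(dS)+\sum_{i}s_{i}(dY)_{ii}=\sum_{i}(\nu'-s_{i})=(\nu'-1)n=-\chi\sqrt{n}$, where $\sum_{i}s_{i}(dY)_{ii}=Tr(S\,dY)$ since $S$ is diagonal. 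Hence the exact part equals $n-\chi\sqrt{n}=\nu'n$, recovering the standard fact that an exact Newton step brings the gap to exactly $\nu'n$.

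For the five error terms I would apply the trace Cauchy--Schwarz inequality $|Tr(AB)|\le\norm{A}_{F}\norm{B}_{F}$: from $\norm{S}_{F}^{2}=\sum_{i}s_{i}^{2}=n+\sum_{i}(s_{i}-1)^{2}\le n+\eta^{2}$ we get $|Tr(SE_{Y})|\le\xi\sqrt{n+\eta^{2}}$; $|Tr(E_{S})|=|Tr(I\,E_{S})|\le\xi\sqrt{n}$; and the remaining three, $|Tr(dS\,E_{Y})|$, $|Tr(E_{S}\,dY)|$, $|Tr(E_{S}E_{Y})|$, are each $O(\xi)$ using the bounds $\norm{dS}_{F}\le\sqrt{\eta^{2}+\chi^{2}}$ and $\norm{dY}_{F}\le\sqrt{\eta^{2}+\chi^{2}}/(1-\eta)$ of Lemma~\ref{frobbound} together with $\norm{E_{S}}_{F},\norm{E_{Y}}_{F}\le\xi$. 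Collecting everything, $Tr(\overline{S}\,\overline{Y})=n-\chi\sqrt{n}+\gamma$ with $|\gamma|\le(2+o(1))\xi\sqrt{n}$. Defining $\alpha:=\chi-\gamma/\sqrt{n}$ gives $Tr(\overline{S}\,\overline{Y})=(1-\alpha/\sqrt{n})n$ with $|\alpha-\chi|\le(2+o(1))\xi<0.021$ for large $n$; since $\xi<0.01$, a suitable choice of the constant $\chi\le\eta=0.1$ yields $0.001\le\alpha\le0.1$, and in particular $\alpha>0$, so the duality gap strictly decreases.

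The computation is entirely routine; the only step needing attention is the bookkeeping of which error terms scale as $\xi\sqrt{n}$ — namely $Tr(SE_{Y})$ and $Tr(E_{S})$, because $S$ and $I$ have Frobenius norm $\Theta(\sqrt{n})$ — versus which are merely $O(\xi)$. This is exactly what keeps the drop at the scale $(1-\Theta(1)/\sqrt{n})\nu$, and it is the reason the tomography precision $\xi$ may be taken to be a constant independent of $n$ while the $\widetilde{O}(\sqrt{n})$ iteration bound is preserved.
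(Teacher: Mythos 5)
Your core computation is correct but follows a genuinely different route from the paper. You expand $Tr(\overline{S}\,\overline{Y})$ directly in the scaled coordinates, use the diagonal Newton equation together with $Tr(dS\,dY)=0$ to show that the exact part equals $\nu'n = n-\chi\sqrt{n}$, and control the five error terms by the trace Cauchy--Schwarz inequality, correctly isolating $Tr(SE_Y)$ and $Tr(E_S)$ as the only contributions of size $\xi\sqrt{n}$. The paper does none of this: it sandwiches $Tr(\overline{S}\,\overline{Y})$ by applying Claim~\ref{dgap} twice, once with $\nu=(1-0.1/\sqrt{n})$ for the upper bound and once with $\nu=(1-0.03/\sqrt{n})$ for the lower bound, feeding in the distance bound \eqref{db} already established in Lemma~\ref{death}. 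Your route yields the sharper structural statement $\alpha=\chi\pm(2+o(1))\xi$, while the paper's route simply recycles the central-path machinery and lands in $[0.001,0.1]$ directly.

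The one genuine issue is the constant-matching at the end. The method as analyzed fixes $\chi=0.1$ (step (2a) of Algorithm~\ref{qmat} uses $\nu'=(1-\frac{1}{10\sqrt{n}})\nu$), and with $\chi=0.1$ and $\xi<0.01$ your bound gives $\alpha\in[0.079,\,0.121]$: a far better lower bound than $0.001$, but an upper bound of roughly $0.12>0.1$. The upper bound $\alpha\le 0.1$ is not decorative; it is what lets Lemma~\ref{death}, stated for $0<\alpha\le 0.1$, be invoked at the realized value $\overline{\nu}=Tr(\overline{S}\,\overline{Y})/n$ so that the invariant $d(S,Y,\nu)\le\eta$ carries over to the next iteration. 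Your escape hatch --- ``a suitable choice of $\chi$'', e.g.\ $\chi\approx 0.05$, which gives $\alpha\in[0.029,0.071]$ --- does repair the statement and is consistent with the other lemmas (they only assume $\chi\le 0.1$), but it silently changes the algorithm's step-size constant; the alternative is to keep $\chi=0.1$ and re-verify Lemma~\ref{death} for $\alpha$ up to about $0.12$ (which the bound \eqref{db} should tolerate, but this must be checked and the lemma restated). Either fix should be made explicit; as written, your final sentence glosses over the only point where your argument and the paper's stated constants disagree.
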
 
\begin{proof}

We use Lemma \ref{dgap} with $\nu=(1- 0.1/\sqrt{n})$ to obtain the upper bound,  
$$Tr(\overline{S}\overline{Y}) \leq  \en{ 1- \frac{0.1}{\sqrt{n}} }   \en{  1 + \frac{d(\overline{S}, \overline{Y}, \nu)}{\sqrt{n}} } n.$$  
The proof of Lemma \ref{death} shows that $d(\overline{S}, \overline{Y}, \nu) \leq 0.099$ so we have
$$
Tr(\overline{S}\overline{Y}) \leq  \en{ 1- \frac{0.001}{\sqrt{n}} } n
$$
We use again Lemma \ref{dgap} with $\nu=(1- 0.03/\sqrt{n})$ to get the lower bound  $Tr(\overline{S}\overline{Y}) \geq (1- \frac{0.03+ d(\overline{S}, \overline{Y}, \nu)}{\sqrt{n}})n \geq (1- \frac{0.1}{\sqrt{n}}) n$, where the last step follows by computing the upper bound on $d(\overline{S}, \overline{Y}, \nu)$ given by equation \eqref{db}. It follows that $Tr(\overline{S}\overline{Y}) = ( 1- \frac{\alpha}{\sqrt{n}})  n$ for some 
$0.001 \leq  \alpha \leq 0.1$. 
\end{proof}

\section{The quantum interior point method} \label{qipm}

In this section we present the quantum interior point method. We first provide in Section \ref{sec:fact} a factorization for the Newton linear system matrix that is used for constructing the Newton linear system. We provide the quantum interior point Algorithm \ref{alg:qipm} and its implementation in Sections \ref{sec:block} and \ref{sec:nls}. We consider the special case of linear programs in Section \ref{sec:lp}. 

\subsection{Factorizing the Newton matrix} \label{sec:fact}  

Let us fix some notation. The vectors $j$ are equal to $e_j$, i.e. the vectors with 1 at position $j$. For vectors $u,v$, let $u \otimes v$ be their tensor product and let $u \circ v$ be their concatenation. For a matrix $A \in \mathcal{R}^{n \times n}$, we denote $vec(A)$ the $n^2$-dimensional vector that corresponds to the vectorized matrix. For matrices $A,B$, when we use $A \circ B$, then we mean the vector concatenation of the vectorized matrices, and similar for $u \circ A$.   


The constraints for the Newton linear system in equation \eqref{newton} are $dS \in L, dY \in L^{\perp}, dSY + SdY = \nu^{\prime} I - SY$. 
The constraint $dY \in L^{\perp}$ can be easily written as a set of linear constraints as we have a spanning set for $L=Span_{k \in [m]} (A^{(k)})$. 
As it is computationally expensive to compute a spanning set for $L^{\perp}$, we instead introduce the variables $x_{k},dx_{k}, k \in [m]$ such that $S = \sum_{k} x_{k} A^{(k)} -B$ and $dS= \sum_{k \in [m]} dx_{k} A^{(k)} \in L$.

We will also be using that $(S,Y)$ are symmetric matrices, so in particular $S^{i} = S_{i}$ and $Y^{i} = Y_{i}$ for all $i \in [n]$. 
The following claim computes the entries the Newton linear system matrix with variables $(dx, dY)$.

\begin{claim} \label{nentry} 
The Newton linear system can be written as $M ( dx \circ dY) = ( \nu'I - SY \circ 0^{m})$ where $M \in \R^{ (m+n^{2}) \times (m+n^{2})}$ has entries explicitly given as, 
\all{ 
\left [ \begin{matrix}
(A^{(1)} Y)_{11}  &  \ldots & (A^{(m)} Y)_{11}   & (1 \otimes S_{1})^T  \\
\vdots & \vdots & \vdots & \ddots \\
(A^{(1)} Y)_{ij}  &  \ldots & (A^{(m)} Y)_{ij}  &  (j \otimes S_{i})^T   \\
\vdots &  \vdots & \vdots & \ddots \\
(A^{(1)} Y)_{nn} &  \ldots &  (A^{(m)} Y)_{nn} & ( n \otimes S_{n})^T  \\
0 &  \ldots &  0  & ( vec(A^{(1)}) )^T  \\
\vdots & \vdots & \vdots & \ddots \\
0 &  \ldots &  0  &( vec(A^{(m)}) )^T  \\
\end{matrix} \right ] 
\left [ \begin{matrix}
 dx_{1} \\
\ldots   \\
dx_{k}  \\
\ldots  \\
 dx_{m} \\  
  dY_{11} \\
\ldots   \\
dY_{ij} \\
\ldots  \\
 dY_{nn} \\   
\end{matrix} \right ] 
= \left [ \begin{matrix}
 (\nu' I - SY)_{11} \\
\vdots   \\
 (\nu' I - SY)_{ij}   \\
\vdots  \\
  (\nu' I - SY)_{nn}  \\  
 0 \\
\ldots  \\
 0 \\   
\end{matrix} \right ] 
} {newexpl} 
Note that the rows and columns of $M$ have been split into blocks of size $(m, n^{2})$ in the above equation. 
\end{claim}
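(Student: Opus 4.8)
The plan is to substitute the parametrization $dS = \sum_{k\in[m]} dx_k A^{(k)}$ into the Newton system of \eqref{newton} and then read off the system row by row, treating the $n^2$ matrix entries $(i,j)$ of the equation $dSY + SdY = \nu' I - SY$ as the first block of rows and the $m$ orthogonality constraints as the second block. First I would note that writing $dS = \sum_k dx_k A^{(k)}$ automatically enforces $dS \in L$, so that constraint disappears and is replaced by the new unknowns $dx \in \R^m$; this is exactly why $M$ acts on $dx \circ dY$ rather than on $dS \circ dY$.

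For the first block, fix $(i,j)$ and expand the $(i,j)$ entry of $dSY + SdY = \nu'I - SY$. The term $(dSY)_{ij} = \sum_k dx_k (A^{(k)}Y)_{ij}$ is linear in $dx$ and contributes the coefficients $(A^{(k)}Y)_{ij}$ in the first $m$ columns of the row indexed by $(i,j)$. For the term $(SdY)_{ij} = \sum_{\ell} S_{i\ell} (dY)_{\ell j}$, I would use that $S$ is symmetric, so $S_{i\ell} = (S_i)_\ell$, and rewrite this sum as an inner product of $vec(dY)$ with the vector $j \otimes S_i$ (whose $(a,b)$ component is $(e_j)_a (S_i)_b$); here the symmetry of $dY$ makes the identification independent of whether $vec$ orders the pair as $(a,b) \mapsto (dY)_{ab}$ or $(dY)_{ba}$, which is the one bookkeeping point that actually needs care. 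The right-hand side of this row is simply $(\nu'I - SY)_{ij}$, which assembles into the top block $(\nu'I - SY \circ 0^m)$.

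For the second block of $m$ rows, I would translate $dY \in L^\perp$: since $\{A^{(k)}\}_{k\in[m]}$ span $L$ and the $A^{(k)}$ are symmetric, $dY \perp L$ is equivalent to $\mathrm{Tr}(A^{(k)} dY) = 0$ for all $k$, i.e. $\sum_{i,j}(A^{(k)})_{ij}(dY)_{ij} = (vec(A^{(k)}))^T vec(dY) = 0$. These rows have zero coefficients on the $dx$ variables and zero right-hand side, giving the bottom-left $0$ block and the bottom entries $0^m$ of the target vector. Collecting the two blocks yields the displayed matrix $M$ and completes the proof. I expect no genuine obstacle here beyond the vectorization/indexing consistency in the $(j \otimes S_i)^T$ term; the argument is essentially a transcription of the linear system once the substitution $dS = \sum_k dx_k A^{(k)}$ and the symmetry of $S, dY, A^{(k)}$ are in place.
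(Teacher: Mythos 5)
Your proposal is correct and follows essentially the same route as the paper's proof: substitute $dS=\sum_k dx_k A^{(k)}$, unpack the $n^2$ entries of $dSY+SdY=\nu'I-SY$ into the first block of rows with $(dSY)_{ij}=\sum_k dx_k(A^{(k)}Y)_{ij}$ and $(SdY)_{ij}=(j\otimes S_i)^T vec(dY)$ (using symmetry of $S$), and encode $dY\in L^{\perp}$ as the $m$ rows $Tr(A^{(k)}dY)=(vec(A^{(k)}))^T vec(dY)=0$. Your extra remark on the vectorization ordering being harmless by symmetry of $dY$ is a fine bookkeeping observation that the paper leaves implicit.
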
 
\begin{proof}

The constraints for the Newton linear system are $dS \in L, dY \in L^{\perp}, dSY + SdY = \nu^{\prime} I - SY$. As $L= Span( A^{(1)}, A^{(2)}, \ldots, A^{(m)})$, the constraint $dS \in L$ can be expressed as $dS= \sum_{k \in [m]}  dx_{k} A^{(k)} $ for scalars $dx_{k} \in \R$.
The constraint $dY \in L^{\perp}$ is equivalent to the $m$ linear equations $Tr(A^{(k)} dY) = \braket{A^{(k)}}{dY}= 0$, the last $m$ rows of the matrix $M$
in equation \eqref{newmat} represent these equations.  

The constraint $dS Y + S dY = \nu'I - SY$ can be unpacked into $n^{2}$ linear equations corresponding to the first $n^2$ rows of the matrix $M$. Let us consider the constraint corresponding to the $(i,j)$-th entry of 
$\nu'I - SY$, 
\al{ 
(\nu'I - SY)_{ij}  = (dSY)_{ij} + (SdY)_{ij}
}
For the first term we have
\al{
(dSY)_{ij} = (\sum_{k \in [m]} dx_{k} A^{(k)} Y)_{ij}   =   \sum_{k \in [m]} dx_{k} (A^{(k)} Y)_{ij} } 
which is exactly the contribution from the top-left block of $M$.
For the second term we have
\al{
(SdY)_{ij} = S_i^T \cdot dY_j = ( j \otimes S_i )^T vec(dY)
}
which is again what we have from the top-right block of $M$.
It follows that the matrix $M$ in equation \eqref{newexpl} represents the Newton linear system. 

\end{proof}

Our quantum interior point algorithm uses a factorization of the Newton matrix $M$ as a product of two matrices such that multiplication by these matrices 
can be performed efficiently given the data stored in the QRAM. In order to describe this factorization, we define the matrices  
$\widetilde{Z}$ and $\widehat{Z}$.

\begin{definition} \label{def:ylifted} 
Given $Z \in \R^{n\times n}$ the matrix $\widetilde{Z} \in \R^{n^{2} \times n^{2}}$ has rows given by $\widetilde{Z}_{ij} = ( i \otimes Z_{j})^T$ for $i, j \in [n]$.  
The matrix $\widehat{Z} \in \R^{n^{2} \times n^{2}}$ has rows given by $\widehat{Z}_{ij} = ( j \otimes Z_{i})^T$ for $i, j \in [n]$.  
\end{definition} 
\noindent Note that $\widetilde{Z}$ is block diagonal while $\widehat{Z}$ is equal to $\widetilde{Z}$ up to a permutation of rows. We next prove some useful properties of the matrices $\widetilde{Z}$ and $\widehat{Z}$ defined above.

\begin{claim} \label{ymult} 
Let symmetric matrices $Z, W \in \R^{n\times n}$.
\enum{ 
\item $\widetilde{Z} vec(W) = vec(WZ)$. 
\item $(\widetilde{Z})^{T}= \widetilde{Z^{T}}$ and $\widetilde{Z} \widetilde{W} = \widetilde{ZW}$. 
\item Let $N=\widetilde{Z}\cdot \widehat{W}$, then the $(i,j)$-th row of $N$ is $( Z_{j} \otimes W_{i})^T$ and the $(k,l)$-th column of $N$ is $( W_{l} \otimes  Z_{k})$. 
} 
\end{claim}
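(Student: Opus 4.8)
The plan is to verify each of the three claims by unwinding the definitions of $\widetilde{Z}$ and $\widehat{Z}$ from Definition~\ref{def:ylifted} and computing entries directly, using the standard Kronecker-product identity $(a \otimes b)^T \mathrm{vec}(M) = b^T M a$ (equivalently, $\langle a \otimes b, \mathrm{vec}(M)\rangle = b^T M a$) together with the symmetry hypothesis $Z = Z^T$, $W = W^T$, which lets us freely swap $Z_i = Z^i$ and $W_i = W^i$.

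For part 1, I would index the coordinates of $\widetilde{Z}\,\mathrm{vec}(W)$ by pairs $(i,j)$: the $(i,j)$-th entry is the inner product of the row $\widetilde{Z}_{ij} = (i \otimes Z_j)^T$ with $\mathrm{vec}(W)$, which by the identity above equals $Z_j^T W\, i = (Z_j^T W)_i = Z_j^T W_i^{\,\text{col }i}$; using $Z$ symmetric this is $(W Z)_{ij}$ after one reindexing, and since $\mathrm{vec}(WZ)$ has $(i,j)$-th entry $(WZ)_{ij}$, the two vectors agree. The only care needed is to be consistent about whether $\mathrm{vec}$ stacks rows or columns and whether the pair $(i,j)$ orders the Kronecker factors as (row-block, within-block); I will fix the convention $\mathrm{vec}(A)_{(i,j)} = A_{ij}$ with first index $i$ the ``outer'' Kronecker factor, matching how the rows of $\widetilde Z$ are written, and keep it throughout.

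For part 2, the transpose identity $(\widetilde{Z})^T = \widetilde{Z^T}$ follows by computing the $((i,j),(k,l))$ entry of $\widetilde Z$ — namely the $(k,l)$ coordinate of $(i \otimes Z_j)$, which is $\delta_{ik}(Z_j)_l = \delta_{ik} Z_{lj}$ — and comparing with the corresponding entry of $\widetilde{Z^T}$; transposition swaps $(i,j)\leftrightarrow(k,l)$ and one checks the two expressions match. The multiplicativity $\widetilde Z\,\widetilde W = \widetilde{ZW}$ is cleanest via part 1: for any symmetric $V$, $\widetilde Z\,\widetilde W\,\mathrm{vec}(V) = \widetilde Z\,\mathrm{vec}(VW) = \mathrm{vec}(VWZ) = \widetilde{ZW}\,\mathrm{vec}(V)$, and since the vectors $\mathrm{vec}(V)$ over symmetric $V$ do not span $\R^{n^2}$ I will instead either redo the computation for general (not necessarily symmetric) $V$ — part 1's proof in fact only needs $Z$ symmetric, not $W$ — or just compare entries of the two $n^2 \times n^2$ matrices directly. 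I expect to use the former: re-examining the part 1 computation, $\widetilde Z\,\mathrm{vec}(V) = \mathrm{vec}(VZ)$ holds for all $V$ when $Z = Z^T$, which is enough to conclude $\widetilde Z\,\widetilde W = \widetilde{ZW}$ whenever $Z,W$ are symmetric (note $ZW$ need not be symmetric, but $\widetilde{ZW}$ is still well-defined as the matrix with rows $(i \otimes (ZW)_j)^T$).

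For part 3, with $N = \widetilde Z \cdot \widehat W$ I would compute the $(i,j)$-th row of $N$ as $\sum_{(k,l)} (\widetilde Z)_{(i,j),(k,l)} (\widehat W)_{(k,l),:}$. Using $(\widetilde Z)_{(i,j),(k,l)} = \delta_{ik} Z_{lj}$ (from part 2's entry computation) this collapses the sum over $k$ to $k=i$, leaving $\sum_l Z_{lj}\,(\widehat W)_{(i,l),:} = \sum_l Z_{lj}\,(l \otimes W_i)^T$; pulling $W_i$ out of the second Kronecker slot, $\sum_l Z_{lj}\, l = Z_j$ (the $j$-th column, using symmetry to identify row/column), so the row is $(Z_j \otimes W_i)^T$ as claimed. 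The column statement follows by the analogous computation on columns, or simply by transposing: the $(k,l)$-th column of $N$ is the $(k,l)$-th row of $N^T = \widehat W^T \widetilde Z^T = \widehat{W^T}\,\widetilde{Z^T} = \widehat W\,\widetilde Z$ (using symmetry of $W, Z$ and the analogue of part 2 for $\widehat{\cdot}$), which by the same row computation is $(W_l \otimes Z_k)^T$, i.e. the desired column is $(W_l \otimes Z_k)$.

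The main obstacle is purely bookkeeping: keeping the Kronecker-factor ordering and the $\mathrm{vec}$ convention consistent across all three parts, and being careful that $\widetilde{\cdot}$ applied to a non-symmetric matrix (like $ZW$) is still the right object. There is no genuine difficulty — once the indexing convention is pinned down, each claim is a one-line entrywise check — so the proof will mostly consist of stating the convention precisely and then three short computations.
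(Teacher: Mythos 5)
Your overall route --- fix the indexing convention and verify each identity entrywise --- is the same as the paper's, and part 1, the transpose identity $(\widetilde{Z})^{T}=\widetilde{Z^{T}}$, and the direct row computation in part 3 all go through as you describe. Two of your steps, however, would fail as written. For the multiplicativity claim, your own part 1 says $\widetilde{M}\,\mathrm{vec}(V)=\mathrm{vec}(VM)$, i.e.\ right-multiplication by the tilded matrix; applied with $M=ZW$ this gives $\widetilde{ZW}\,\mathrm{vec}(V)=\mathrm{vec}(V\,ZW)$, not $\mathrm{vec}(VWZ)$, so your chain $\widetilde{Z}\widetilde{W}\,\mathrm{vec}(V)=\mathrm{vec}(VWZ)$ actually establishes $\widetilde{Z}\widetilde{W}=\widetilde{WZ}$. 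That is indeed what is true in general: using the block-diagonal description the paper relies on, $\widetilde{Z}=I\otimes Z^{T}$, hence $\widetilde{Z}\widetilde{W}=I\otimes Z^{T}W^{T}=I\otimes (WZ)^{T}=\widetilde{WZ}$, which coincides with $\widetilde{ZW}$ only when $Z$ and $W$ commute --- symmetry alone does not give this, and your fix of re-running the argument for general $V$ does not change the conclusion. So you should either prove the identity in the form $\widetilde{Z}\widetilde{W}=\widetilde{WZ}$ or add a commutation hypothesis; note the paper only ever invokes it for $\widetilde{Y}\widetilde{Y^{-1}}=\widetilde{I}=I$, where the factors do commute.

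Second, your transposition shortcut for the column statement in part 3 uses an ``analogue of part 2 for $\widehat{\cdot}$'', namely $(\widehat{W})^{T}=\widehat{W^{T}}$, and this is false. Since $\widehat{W}$ is $\widetilde{W}$ with rows permuted by the swap $P:(i,j)\mapsto(j,i)$, i.e.\ $\widehat{W}=P\widetilde{W}$, one has $(\widehat{W})^{T}=(I\otimes W)P$ while $\widehat{W^{T}}=P(I\otimes W)$; entrywise the two sides are $\delta_{li}W_{jk}$ versus $\delta_{jk}W_{il}$, which differ for generic symmetric $W$. Your stated fallback --- computing the $(k,l)$-th column of $N$ directly, exactly as you did for the rows --- is the correct move and is what the paper does: the $((i,j),(k,l))$ entry of $N=\widetilde{Z}\widehat{W}$ is $Z_{kj}W_{li}=(Z_{k})_{j}(W_{l})_{i}$ by symmetry, from which both the row $(Z_{j}\otimes W_{i})^{T}$ and the column $(W_{l}\otimes Z_{k})$ can be read off.
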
 

\begin{proof} 
For the first part, we show that the vectors $\widetilde{Z} vec(W)$ and $vec(WZ) \in\R^{n^{2}}$ are equal on all coordinates. For a fixed coordinate $i, j \in [n]$ we have, 
$$
(\widetilde{Z} vec(W))_{ij} = \widetilde{Z}_{ij} vec(W) = 
( i \otimes Z_{j})^T \cdot vec(W) = Z_{j}^T W_{i}  = (WZ)_{ij}  
$$ 

For the second part, we note that $\widetilde{Z}$ is a block diagonal matrix with $n$ distinct blocks of size $n\times n$ each equal to $Z$. From this description of 
$\widetilde{Z}$ and $\widetilde{W}$ it is clear that $(\widetilde{Z})^{T}= \widetilde{Z^{T}}$ and $\widetilde{Z} \widetilde{W} = \widetilde{ZW}$. 

For the third part, we will explicitly compute the entries of $N=\widetilde{Z}. \widehat{W}$. The $(i,j)$-th row of $\widetilde{Z}$ is given by $( i \otimes  Z_{j})^T$ 
while the $(k,l)$-th column of $ \widehat{W}$ is given by $( W_{l} \otimes k)$. The $(i,j),(k,l)$-th entry of $\widetilde{Z}.\widehat{W}$
is therefore equal to $W_{li}Z_{jk}=W_{il}Z_{jk}=W_{li}Z_{kj}$. 
Hence, the $(i,j)$-th row of $N$ is $( Z_{j} \otimes W_{i})^T$ and the $(k,l)$-th column is $( W_{l} \otimes Z_{k})$.


 
\end{proof}

We cannot work directly with the matrix $M$ as we do not have a block encoding for it since its entries are not explicitly stored in the QRAM. 
Instead we provide a factorization $M=M_{1} M_{2}$ such that 
block encodings for $M_{1}$ and $M_{2}$ can be efficiently implemented using the data stored in the QRAM. This approach is more 
efficient that implementing directly a block encoding for $M$. The following claim provides the desired factorization for $M$. 

\begin{claim} 
Let $\mathcal{A} \in \R^{m \times n^{2}}$ be the matrix such that the rows are equal to $\mathcal{A}_{k} = vec(A^{(k)})^{T}$ for $k \in [m]$, then the Newton linear system matrix $M$ can be factorized as follows: 
\all{ 
M = M_{1} M_{2} = \left ( \begin{matrix} 
\widetilde{Y} & 0 \\ 
0 & I_{m} \end{matrix} \right ). 
\left ( \begin{matrix} 
\mathcal{A}^{T} &\widetilde{Y^{-1}} \widehat{S} \\ 
0 &\mathcal{A} \end{matrix} \right )
} {eqfact} 
\end{claim}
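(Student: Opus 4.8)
The plan is to verify the claimed factorization $M = M_1 M_2$ by direct block-wise matrix multiplication, checking that each of the four blocks of the product agrees with the corresponding block of $M$ as given explicitly in Claim~\ref{nentry}. Since $M_1 = \widetilde{Y} \oplus I_m$ is block diagonal and $M_2$ is block upper triangular, the product $M_1 M_2$ has blocks
$\begin{pmatrix} \widetilde{Y}\mathcal{A}^T & \widetilde{Y}\,\widetilde{Y^{-1}}\widehat{S} \\ 0 & \mathcal{A} \end{pmatrix}$,
so the bottom-left and bottom-right blocks match $M$ immediately (the bottom-right block of $M$ is the matrix with rows $vec(A^{(k)})^T$, which is exactly $\mathcal{A}$, and the bottom-left is zero). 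The work is therefore entirely in the two top blocks.

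For the top-right block, I would use Claim~\ref{ymult}, part 2: $\widetilde{Y}\,\widetilde{Y^{-1}} = \widetilde{Y Y^{-1}} = \widetilde{I_n}$, which is the $n^2 \times n^2$ identity, so $\widetilde{Y}\,\widetilde{Y^{-1}}\widehat{S} = \widehat{S}$. By Definition~\ref{def:ylifted}, the $(i,j)$-th row of $\widehat{S}$ is $(j \otimes S_i)^T$, which is exactly the $(i,j)$-th row of the top-right block of $M$ in equation~\eqref{newexpl} (recalling $S$ is symmetric, so $S_i = S^i$). For the top-left block I need $\widetilde{Y}\mathcal{A}^T$. The $(i,j),k$ entry of this product is $\widetilde{Y}_{ij} \cdot \mathcal{A}^T_{\cdot,k} = (i \otimes Y_j)^T \cdot vec(A^{(k)}) = Y_j^T A^{(k)}_i = (A^{(k)} Y)_{ij}$, using symmetry of $Y$ and of $A^{(k)}$ as needed; this matches the top-left block of $M$. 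Assembling the four verified blocks gives $M_1 M_2 = M$.

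I do not anticipate a genuine obstacle here — the statement is a bookkeeping identity — but the step requiring the most care is keeping the index conventions straight: distinguishing rows indexed by pairs $(i,j)$ from the Kronecker-factor order in $i \otimes Z_j$ versus $j \otimes Z_i$ (i.e.\ $\widetilde{Z}$ vs.\ $\widehat{Z}$), and correctly invoking the symmetry of $S$, $Y$, and the $A^{(k)}$ to identify $A^{(k)}_i$ with the $i$-th column and $S_i$ with the $i$-th column. A clean way to organize the top-right computation is to observe $\widetilde{Y}\,\widetilde{Y^{-1}}\widehat{S} = \widehat{S}$ abstractly via part~2 of Claim~\ref{ymult} rather than computing entrywise, and then simply quote Definition~\ref{def:ylifted}; for the top-left block, the entrywise computation above via $vec(A^{(k)})$ is the shortest route. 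It is worth a one-line remark that $M_1$ is invertible (since $\widetilde{Y}$ is, $Y$ being positive definite), so this factorization is legitimate and the linear systems with $M_1$ and $M_2$ are the ones the quantum algorithm will actually solve.
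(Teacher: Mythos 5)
Your proposal is correct and follows essentially the same route as the paper's proof: block-wise multiplication, cancelling $\widetilde{Y}\,\widetilde{Y^{-1}}=I$ to recover $\widehat{S}$ in the top-right block, and identifying the columns of $\widetilde{Y}\mathcal{A}^{T}$ with $vec(A^{(k)}Y)$ (your entrywise computation is just the proof of part 1 of Claim~\ref{ymult} unrolled). The only cosmetic difference is that the paper cites Claim~\ref{ymult} for that last step rather than recomputing the entries.
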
 

\begin{proof} 
We note that $S, Y^{-1}, Y$ are symmetric matrices. 
Multiplying the two matrices $M_1$ and $M_2$ and using $\widetilde{Y}\widetilde{Y^{-1}}=I$ we get the matrix 
$$
\left ( \begin{matrix} 
\widetilde{Y}\mathcal{A}^{T} & \widehat{S} \\ 
0 &\mathcal{A} \end{matrix} \right ).
$$
It remains to show that $\widetilde{Y}\mathcal{A}^{T}$ is indeed equal to the corresponding part of $M$. By part 1 of Claim \ref{ymult} and noticing that the $k$-th column of $\mathcal{A}^T$ is equal to $vec(A^{(k)})$ we have that the $k$-th column of $\widetilde{Y}\mathcal{A}^{T}$ is equal to $\widetilde{Y} vec(A^{(k)}) = vec(A^{(k)}Y)$ as is the case for the matrix $M$. %
\end{proof}

\noindent We solve the Newton linear system by implementing block encodings for $M_{1}$ and $M_{2}$ and 
use the relation $M^{-1} = M_{2}^{-1} M_{1}^{-1}$. It is known that the matrix $M$ is invertible and there is a unique 
solution to the Newton linear system. It is also known that $Y$ is an invertible matrix 
for the interior point method, from this it follows that $M_{1}$ and $M_{2}$ are also invertible.

In the next section we show how to implement the block encoding for $M_{1}$ and $M_{2}$ required for the above Theorem and thereby obtain a quantum 
algorithm for solving the Newton linear system. 

\subsection{A quantum Newton linear system solver} \label{sec:block} 
In the previous section \ref{sec:fact}, we reduced the problem of solving the Newton linear system to implementing block encodings for the matrices $M_{1}, M_{2}$. 
In this section we show how to implement the block encodings for $M_{1}$ and $M_{2}$ required for Theorem \ref{qlsa1} and 
thus obtain a quantum algorithm for solving the Newton linear system. 

If $Z\in \R^{n\times n}$ is stored in a QRAM data structure then we can implement an efficient block encoding for the matrix $\widetilde{Z}$ with parameter $\mu(\widetilde{Z}) = \min_{p\in [0,1]} ( \sqrt{n} \norm{Z}_{F}, \sqrt{s_{2p}(Z), s_{1-2p} (Z^{T})})$ as $\widetilde{Z}$ which is a block diagonal matrix with $n$ copies of $Z$ on its diagonal blocks. 
An efficient $(\mu(\widetilde{Y}), O(\log n), 0)$ block encoding for $M_{1}$ can therefore be implemented using Theorem \ref{ds} as $Y$ is stored in the QRAM data structure. 

A block encoding for $M_{2}$ requires the ability to prepare the rows and columns of $\widetilde{Y^{-1}}$.  We compute the entries of $Y^{-1}$ classically and store them in the QRAM data structure in time $O(n^{\omega})$, where $\omega \leq 2.37$ is the matrix multiplication exponent. This pre-computation allows us to implement a 
$(\norm{M_{2}}_{F}, O(\log n), 0 )$ block encoding for $M_{2}$ in time $\widetilde{O}(1)$, that is it gives an efficient block encoding for $M_{2}$. We construct a block encoding with $\mu(M_{2}) = \norm{M_{2}}_{F}$, this can potentially be extended to the more general value for $\mu(M_{2})$ using techniques in \cite{KP17}. 

\begin{theorem} \label{nsystem} 
Let $M_{2} = \left ( \begin{matrix} 
\mathcal{A}^{T} & \widetilde{Y^{-1}} \widehat{S} \\ 
0 &\mathcal{A} \end{matrix} \right ) $, if $Y, S, A^{(k)}$ and $Y^{-1}$ are stored in QRAM data structures, then there is 
an efficient $(\norm{M_{2}}_{F}, O(\log n), 0 )$ block encoding for $M_{2}$. 
\end{theorem}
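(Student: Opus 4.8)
The plan is to apply Lemma~\ref{lepsilon} to the matrix $M_{2}$, so the main task is to exhibit efficiently-implementable unitaries $U: \ket{i,0} \to \ket{i, (M_{2})_{i}}$ preparing the rows of $M_{2}$ and $V: \ket{0,j} \to \frac{1}{\norm{M_{2}}_{F}} \sum_{i} \norm{(M_{2})_{i}} \ket{i,j}$ preparing the vector of row norms, both in time $\widetilde{O}(1)$. Once these are available, Lemma~\ref{lepsilon} gives a $(\norm{M_{2}}_{F}, O(\log n), 0)$ block encoding (the error is $0$ here since all the quantities involved are computed exactly and stored, rather than only approximated), which is exactly the claim.

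First I would examine the row structure of $M_{2}$ block by block, using the explicit form in equation~\eqref{eqfact}. The rows split into two groups of sizes $n^{2}$ and $m$. A row in the top block, indexed by $(i,j)\in[n]\times[n]$, is the concatenation of the $(i,j)$-th row of $\mathcal{A}^{T}$ with the $(i,j)$-th row of $\widetilde{Y^{-1}}\widehat{S}$. By the second and third parts of Claim~\ref{ymult}, $\widetilde{Y^{-1}}\widehat{S}$ has $(i,j)$-th row equal to $((Y^{-1})_{j} \otimes S_{i})^{T}$, a tensor product of a row of $Y^{-1}$ with a row of $S$. The $(i,j)$-th row of $\mathcal{A}^{T}$ is the vector $(\mathcal{A}_{1,(i,j)}, \ldots, \mathcal{A}_{m,(i,j)})$, i.e. the vector $\mathcal{A}_{i,j} \in \R^{m}$ whose $k$-th entry is $(A^{(k)})_{i,j}$; this is precisely the vector for which Corollary~\ref{cds} (storing the tensor $T$) provides an $\widetilde{O}(1)$-time state-preparation procedure. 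Thus the top-block row is a concatenation $a \circ b$ with $\ket{a} = \ket{\mathcal{A}_{i,j}}$ preparable by Corollary~\ref{cds} and $\ket{b} = \ket{(Y^{-1})_{j}} \otimes \ket{S_{i}}$ preparable from the QRAM data structures for $Y^{-1}$ and $S$ via Theorem~\ref{ds}; Lemma~\ref{concat} then prepares $\ket{a \circ b}$ in time $\widetilde{O}(1)$, provided we also have the norms $\norm{\mathcal{A}_{i,j}}$ and $\norm{(Y^{-1})_{j}}\norm{S_{i}}$, which the QRAM data structures store. For the bottom block, the row indexed by $k\in[m]$ is $0^{n^{2}} \circ vec(A^{(k)})^{T}$; the state $\ket{vec(A^{(k)})}$ is preparable directly from the QRAM data structure for the matrix $A^{(k)}$ (Theorem~\ref{ds}, treating $vec(A^{(k)})$ as a vector in $\R^{n^{2}}$, or equivalently the block-encoding state-prep for the matrix), and this gives the unitary $U$ on the bottom block.

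Next I would build $V$: the row norms of $M_{2}$ are, for the top block, $\norm{(M_{2})_{(i,j)}}^{2} = \norm{\mathcal{A}_{i,j}}^{2} + \norm{(Y^{-1})_{j}}^{2}\norm{S_{i}}^{2}$, and for the bottom block, $\norm{(M_{2})_{k}} = \norm{A^{(k)}}_{F}$. All of $\norm{\mathcal{A}_{i,j}}$, $\norm{(Y^{-1})_{j}}$, $\norm{S_{i}}$, $\norm{A^{(k)}}_{F}$ are maintained by the respective QRAM data structures (this is the standard content of the data structure of Theorem~\ref{ds}: internal nodes store squared $\ell_{2}$ norms of subtrees), so the combined vector of all $m+n^{2}$ row norms, suitably normalized, can itself be organized as a QRAM data structure — or one simply observes that the tree of squared norms needed for $V$ is assembled from the trees already present — and hence $\ket{V\text{-state}}$ is preparable in time $\widetilde{O}(1)$; and $\norm{M_{2}}_{F}$ is read off from the root. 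I would note the $O(n^{\omega})$ one-time classical pre-computation of $Y^{-1}$ and its insertion into the QRAM, as stated in the paragraph preceding the theorem, happens before any of this and is not counted in the $\widetilde{O}(1)$ per-access cost.

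The main obstacle is bookkeeping rather than conceptual: I must make sure the row-norm data structure for $V$ is genuinely efficient (Definition~\ref{defer}) and consistent with the indexing used by $U$, i.e. that the interleaving of the $(i,j)$ and $k$ index ranges into a single row index $[m+n^{2}]$ is handled uniformly, and that Lemma~\ref{concat}'s requirement of knowing the component norms exactly is met — which it is, since everything is stored classically to sufficient precision, giving error $\delta=0$ in the block encoding. One minor point to verify is that the zero rows in the bottom-left block cause no trouble: their contribution to the $V$-state amplitude is simply the genuine norm $\norm{A^{(k)}}_{F}$ coming from the bottom-right block, and Lemma~\ref{concat}/Theorem~\ref{ds} prepare $\ket{0^{n^{2}} \circ vec(A^{(k)})} = \ket{vec(A^{(k)})}$ (shifted into the second register) without issue. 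Assembling these pieces and invoking Lemma~\ref{lepsilon} completes the proof.
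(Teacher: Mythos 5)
Your proposal follows essentially the same route as the paper: apply Lemma~\ref{lepsilon}, identify the rows of the top block as $\mathcal{A}_{i,j} \circ (Y^{-1}_{j} \otimes S_{i})$ via part 3 of Claim~\ref{ymult}, prepare the pieces from the QRAM structures of Corollary~\ref{cds} and Theorem~\ref{ds}, glue them with Lemma~\ref{concat}, and read all required norms (including $\norm{M_2}_F$) off the stored data, so the substance of your argument matches the paper's.

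One difference worth flagging: the paper does not apply Lemma~\ref{lepsilon} to $M_2$ itself but to the symmetrized matrix $\overline{M_2}$ (as set up at the start of Section~\ref{elsa}, since $M_2$ is not symmetric and the linear-system machinery of Theorems~\ref{qlsa} and~\ref{qlsa1} assumes a symmetric matrix with eigenvalue bounds). For $\overline{M_2}$ the relevant state-preparation procedures are those for \emph{both} the rows and the columns of $M_2$, so the paper additionally prepares the columns: the first $m$ columns are $vec(A^{(k)})$ padded with zeros, and the $(k,l)$-th of the last $n^2$ columns is $(S_{l} \otimes Y^{-1}_{k}) \circ \mathcal{A}_{k,l}$, again by part 3 of Claim~\ref{ymult}, with the norms known for the same reasons. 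Your row-only construction does match the literal hypotheses of Lemma~\ref{lepsilon}, but to make the encoding usable downstream you should either add this (entirely analogous) column preparation or argue explicitly that a one-sided encoding of the non-symmetric $M_2$ can be symmetrized at the block-encoding level. Two trivial slips: the zero prefix of the bottom rows has length $m$, not $n^{2}$ (the bottom-left zero block is $m \times m$), and the zero entries sit in the bottom-left, not alongside the bottom-right $\mathcal{A}$ in the way your last remark suggests; neither affects the argument.
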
 

\begin{proof} 
We give an efficient implementation the unitaries $U$ and $V$ in Lemma \ref{lepsilon} for the symmetrized matrix $\overline{M_{2}}$. In order to implement $U$ we need preparation procedures for the rows and columns of $M_{2}$ and to implement $V$ we need to know the norms of the rows and columns of $M_{2}$. 
We provide these preparation and norm computation procedures for the rows and columns of $M_{2}$. 
\enum{ 
\item The first $m$ columns and the last $m$ rows of $M_{2}$ correspond to the quantum states $\ket{A^{(k)}}$ for $k\in [m]$. These can be prepared exactly as the $A^{(k)}$ are stored in QRAM. For a matrix $Z$ stored in the QRAM, we first prepare the vector of row norms $\ket{\overline{Z}} = \frac{1}{ \norm{Z}_{F}} \sum_{ i\in [n]} \norm{Z_{i}} \ket{i}$ and then apply 
the unitary $U: \ket{i, 0} \to \ket{i, Z_{i}}$ to it, that is $U \ket{\overline{Z}, 0} = \ket{Z}$. The time required for preparing these rows and columns is $\widetilde{O}(1)$. 
The norms of these rows/columns are $\norm{A^{(k)}}_{F}$ for $k \in [m]$, these norms are known exactly. 

\item We next describe the preparation procedure for the first $n^{2}$ rows of $M$. Let $\mathcal{A}_{i,j} \in \R^{m}$ be the vector with entries 
$\mathcal{A}_{i,j,k} = (A^{(k)})_{i,j}$ for $k \in [m]$. Then, the quantum state representing the $(ni+j)$-th row of $M$ for $i, j \in [n]$ is 
$\ket{\mathcal{A}_{i,j} \circ (Y^{-1}_{j} \otimes S_{i}) }$, where we used part 3 of \ref{ymult}. In order to prepare these vectors, we provide preparation procedures for the states 
$\ket{\mathcal{A}_{i,j}} $ and $ \ket{Y^{-1}_{j}}\ket{ S_{i}}$ and then use Lemma \ref{concat} to combine the results. 

The states $\ket{\mathcal{A}_{i,j}} $ can be prepared efficiently in time $\widetilde{O}(1)$ and their norms are known exactly as $A^{(k)}, k \in [m]$ are stored in the QRAM data structure given 
by Theorem \ref{cds}. The state $\ket{Y^{-1}_{j}}\ket{  S_{i}}$ can be prepared efficiently as $Y^{-1}$ and $S$ are stored in the QRAM. Further the norm of the vector $Y^{-1}_{j} \otimes S_{i}$ is the produce of the norms of  $Y^{-1}_{j}$ and $S_i$ which are known. Applying Lemma \ref{concat} it follows that $\ket{\mathcal{A}_{i,j} \circ  (Y^{-1}_{j} \otimes  S_{i}) }$ can be prepared in time $\widetilde{O}(1)$.

\item We now describe the preparation procedure for the last $n^{2}$ columns of $M$. The quantum states corresponding to the $(nk+l)$-th column of $M$ out of the 
last $n^{2}$ columns is $ \ket{ (S_{l} \otimes Y^{-1}_{k}) \circ \mathcal{A}_{i,j}}$, where $ \mathcal{A}_{i,j}$ is as defined above and we used part 3 of Claim \ref{ymult}. 

The preparation procedure for $ \ket{ (S_{l} \otimes Y^{-1}_{k}) \circ \mathcal{A}_{i,j}}$ is analogous to the procedure in part 2. 
Also, the norms of $\mathcal{A}_{i,j}$ and $S_{l} \otimes Y^{-1}_{k}$ 
are known, so Lemma \ref{concat} can be used to prepare $\ket{\mathcal{A}_{i,j} \circ  (S_{l} \otimes Y^{-1}_{k}) }$ in time $\widetilde{O}(1)$.

}

\noindent Given the efficient implementations of the unitaries $U$ and $V$ described above it follows that
we have an efficient $(\norm{M_{2}}_{F}, O(\log n), 0)$  block encoding for $M_{2}$.

\end{proof}

\paragraph{Finding $dS \circ dY$.}

The block encodings for $M_{1}, M_{2}$ can be used to solve the Newton linear system 
and obtain $M^{-1} \ket{ (\nu'I - SY) \circ 0^{m} }  = \ket{dx \circ dY}$. This can be used to recover $dx$ and $dY$ but it would then be expensive to construct $dS$ from $dx$. We instead give a procedure 
for transforming  $\ket{dx \circ dY}$ to  $\ket{dS \circ dY}$ before estimating the norms and performing tomography to the desired accuracy.

Define the matrix $M_{3}  \in \R^{2n^{2} \times (m+n^{2})}$ as $M_{3}  = \left ( \begin{matrix} 
\mathcal{A}^{T} & 0 \\ 
0 & I_{n^{2}} \end{matrix} \right )$, so that we have the equation $M_{3}  (dx \circ dY) = (dS \circ dY)$.  
Note that $M_{3} $ does not have full rank and is not invertible. As the $A^{(k)}, k \in [m]$ are 
stored in memory we also have an efficient $(\mu(M_{3} ), \log n, 0)$ block encoding for $M_{3} $. 
We can therefore apply the quantum transformation $M_{3}  (M_{1} M_{2})^{-1} \ket{ (\nu'I - SY) \circ 0^{m} }$ to obtain a 
state close to $\ket{dS\circ dY}$. Applying Theorem 
\ref{qlsa1} we obtain a Newton linear system solver with the following guarantees. 
\begin{theorem} \label{qnlss}  
There is a quantum algorithm that given $ \ket{( \nu'I - SY) \circ 0^{m} }$, outputs a state $\epsilon$-close to $\ket{dS \circ dY}$ in time 
$\widetilde{O}( \kappa(M_{3}  M^{-1}) (\mu(M_{1}) +  \mu(M_{2}) + \mu(M_{3}  ) ) \log (n/\epsilon))$ and a relative error $\epsilon$-estimate for $\norm{dS \circ dY}$ 
in time $\widetilde{O}( \frac{1}{\epsilon}\kappa(M_{3} M^{-1}) (\mu(M_{1}) +  \mu(M_{2}) + \mu(M_{3}  ) ) \log (n/\epsilon))$.
\end{theorem}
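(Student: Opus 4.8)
The plan is to combine the factorization $M = M_1 M_2$ (equation \eqref{eqfact}) with the auxiliary matrix $M_3$ and invoke the composition result of Theorem \ref{qlsa1}. Since the Newton system reads $M(dx \circ dY) = (\nu' I - SY)\circ 0^m$ and, by construction of $M_3$, we have $M_3(dx \circ dY) = dS \circ dY$, the target vector satisfies the identity
\[
dS \circ dY \;=\; M_3 M^{-1}\big((\nu' I - SY)\circ 0^m\big) \;=\; M_3\, M_2^{-1} M_1^{-1}\big((\nu' I - SY)\circ 0^m\big),
\]
where $M^{-1} = M_2^{-1} M_1^{-1}$ is legitimate because $Y$ being invertible forces both $M_1$ and $M_2$ to be invertible. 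Hence it suffices to apply the linear map $\mathcal{M} := M_3 M_2^{-1} M_1^{-1}$ to the given state $\ket{b}$ with $b = (\nu' I - SY)\circ 0^m$ and, separately, to estimate $\norm{\mathcal{M} b}$.

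First I would assemble efficient block encodings of the three factors. For $M_1 = \widetilde{Y} \oplus I_m$, Theorem \ref{ds} applied to $\widetilde{Y}$ (block diagonal with $n$ copies of $Y$, which is stored in the QRAM) gives an efficient $(\mu(M_1), O(\log n), 0)$ block encoding, $\mu(M_1) = \mu(\widetilde{Y})$. For $M_2$, Theorem \ref{nsystem} gives an efficient $(\norm{M_2}_F, O(\log n), 0)$ block encoding, using that $Y$, $S$, the $A^{(k)}$ and the precomputed $Y^{-1}$ all reside in the QRAM. For $M_3 = \mathcal{A}^T \oplus I_{n^2}$ (through its symmetrization), storage of the $A^{(k)}$ yields an efficient $(\mu(M_3), O(\log n), 0)$ block encoding. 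All three encodings are exact ($\delta_i = 0$) with implementation time $T_i = \widetilde{O}(1)$, so the precision hypothesis $\delta_i \leq \widetilde{O}(\epsilon/\kappa(M_i)^2)$ of Theorem \ref{qlsa1} holds trivially, and the standard normalization places the spectra of the matrices to be inverted in the interval required by Theorem \ref{qlsa}.

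Then I would invoke Theorem \ref{qlsa1} with the three matrices ordered so that the composed operator is $\mathcal{M} = M_3 M_2^{-1} M_1^{-1}$, i.e.\ $M_1$ and $M_2$ used in inverse mode and $M_3$ used in the forward direction only (so its lack of full rank is harmless). This outputs a state $\epsilon$-close to $\ket{\mathcal{M} b} = \ket{dS \circ dY}$ in time $\widetilde{O}\big(\kappa(\mathcal{M})(\mu(M_1)+\mu(M_2)+\mu(M_3))(T_1+T_2+T_3)\log(1/\epsilon)\big)$ and a relative-error $\epsilon$-estimate of $\norm{\mathcal{M} b} = \norm{dS \circ dY}$ in a factor $1/\epsilon$ more time; substituting $\kappa(\mathcal{M}) = \kappa(M_3 M^{-1})$ and $T_i = \widetilde{O}(1)$ (absorbing the $O(\log n)$ factors) gives exactly the claimed bounds. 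There is no genuine obstacle here; the only points to keep straight — both already handled by Theorem \ref{qlsa1} and its accompanying remark — are that the running time scales with the end-to-end condition number $\kappa(M_3 M^{-1})$ rather than with the product $\kappa(M_1)\kappa(M_2)$ (because amplitude amplification is deferred to the final step), and that $\mu(M_i^{-1})$ is read as $\mu(M_i)$ since the same block encoding is reused in inverse mode, making the dependence on the $\mu$'s additive rather than multiplicative.
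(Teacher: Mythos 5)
Your proposal is correct and follows essentially the same route as the paper: factor $M=M_1M_2$, use the block encodings of $M_1$, $M_2$ (Theorem \ref{nsystem}) and $M_3$ from the QRAM data, apply $M_3 M_2^{-1}M_1^{-1}$ to $\ket{(\nu'I-SY)\circ 0^m}$, and invoke the composition result (Theorem \ref{qlsa1}) to get both the state preparation and the norm estimate with the stated $\kappa(M_3M^{-1})$ and additive $\mu$ dependence. Nothing essential is missing.
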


\subsection{A quantum Interior Point method for SDPs} \label{sec:nls}

The quantum interior point method for SDPs is presented as Algorithm \ref{alg:qipm}. It has two parameters, the number of iterations $T$ and the accuracy $\delta$ 
which will be determined using the analysis in Section~\ref{clan}. In this section, we describe in more detail the implementation of the different steps of Algorithm \ref{alg:qipm} and 
bound the running time.

\begin{algorithm} 
\caption{Quantum interior point method.} \label{alg:qipm} 
\begin{algorithmic}[1]
\REQUIRE Matrices $A^{(k)}$ with $ k\in [m], B \in \R^{n\times n}$, vector $ c \in \R^{m}$ in QRAM, parameters $T, \delta>0$.  \\
\begin{enumerate} 
\item Find feasible initial point $(S,Y, \nu)=(S_{0}, Y_{0}, \nu_0)$ and store the solution in the QRAM.
\item Repeat the following steps for $T$ iterations. 
\begin{enumerate}
\item Compute matrices $Y^{-1}$ and $\nu I - SY$ classically and store in QRAM data structure.

\hspace{-1.1cm} {\bf Estimate $dS \circ dY$}\\

\item {\it Estimate norm of $dS \circ dY$.}  

Solve the Newton linear system using block encodings for $M_{1}, M_{2}$ and $M_{3} $ (Theorem \ref{qnlss} ) to find estimate $\overline{\norm{dS \circ dY}}$ such that 
with probability $1-1/poly(n)$, 
$$ | \overline{\norm{dS \circ dY}} - \norm{dS \circ dY} |  \leq \delta \norm{dS \circ dY}.$$

\item {\em Estimate $dS \circ dY$.}

Let $U_N$ the procedure that solves the Newton linear system using block encodings for $M_{1}, M_{2}$ and $M_{3} $ to produce states $\ket{dS \circ dY}$ to accuracy $\delta^2/n^3$ (Theorem \ref{qnlss}).

Perform vector state tomography with $U_N$ (Algorithm \ref{alg:tom}) and use the norm estimate from (b) to obtain the classical estimate $\overline{dY\circ dS}$ such that with 
probability $1-1/poly(n)$, 
$$\norm{ \overline{dS \circ dY} - dS \circ dY }_{2} \leq 2\delta \norm{dS \circ dY}_2.$$

\hspace{-1.1cm} {\bf Update solution}\\
\item Update $Y \leftarrow Y +  \overline{dY}$ and $S \leftarrow S+ \overline{dS}$ and store in QRAM. \\Update $\nu \leftarrow Tr(SY)/n$. 
\end{enumerate} 
\item Output $(S,Y)$.
\end{enumerate}
\end{algorithmic}
\end{algorithm}

Step (2a) consists of a classical matrix multiplication and inversion, it can be carried out in $O(n^{\omega})$ time where $\omega \leq 2.37$ is the matrix multiplication exponent. We note that in practice the time for this step is $O(n^{3})$ since simpler methods are commonly used in classical linear system solvers.

Step (2b) requires the preparation of the input state $\ket{ (\nu I - SY) \circ 0}$ that can be created in time $\widetilde{O}(1)$ as $\nu I - SY$ is stored in the QRAM. Then, we use Theorem \ref{qnlss} which in turn uses the block encodings for $M_{1}, M_2$ and $M_{3}$ given in Section \ref{sec:block}, to find the norm estimate. Let $T_N = \widetilde{O}( \kappa(M_{3}  M^{-1}) (\mu(M_{1}) +  \mu(M_{2}) + \mu(M_{3}) ) \log (n/\delta))$ be the running time for solving the Newton linear system, the time required for finding the norm estimate to relative error $\delta$ is $\frac{1}{\delta } T_{N}$.

Step (2c) invokes the vector state tomography Algorithm \ref{alg:tom} to reconstruct with high probability a vector such that $\norm{dS\circ dY - \overline{dS \circ dY} }_{2} \leq 2 \delta \norm{dS \circ dY}_{2}$ in time $\widetilde{O}(\frac{n^{2}}{\delta^{2}}T_N)$ by using also our estimate of the norm from Step (2b).  This time subsumes the running time of Step (2b). Note that we need that the error in $U_N$ is $o( \delta^2/n^2)$ and hence we fixed it to $\delta^2/n^3$. Since this error only appears inside a logarithm this does not affect the running time. 
Overall, the running time for producing a classical estimate of $dS\circ dY$ in Step (2c) is $\widetilde{O}(\frac{n^{2}}{\delta^{2}}\kappa(M_{3}M^{-1}) (\mu(M_{1}) +  \mu(M_{2}) + \mu(M_{3})) \log (n/\delta))$. Hence, we have the following corollary
\begin{corollary}\label{onestep}
One iteration of the quantum interior point method for SDPs has running time $\widetilde{O}(n^\omega + \frac{n^{2}}{\delta^{2}}\kappa(M_{3}M^{-1}) (\mu(M_{1}) +  \mu(M_{2}) + \mu(M_{3})) \log (n/\delta))$ and produces with probability $1-1/poly(n)$ an estimate to the Newton Linear System solution with
$$\norm{ \overline{dS \circ dY} - dS \circ dY }_{2} \leq \delta \norm{dS \circ dY}_2.$$ 
\end{corollary}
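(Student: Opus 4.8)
The plan is to assemble the running-time and error guarantees of the individual steps (2a)--(2d) of Algorithm \ref{alg:qipm} that were established earlier in this section, and then take a union bound over the failure events. Step (2a) is a purely classical matrix inversion and multiplication followed by $\widetilde{O}(n^2)$ insertions into the QRAM data structures of Theorem \ref{ds} and Corollary \ref{cds}; this costs $O(n^\omega)$ arithmetic operations in total and accounts for the first term in the stated bound. Step (2d) is likewise pure classical bookkeeping, $\widetilde{O}(n^2)$ per iteration, and is subsumed.

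For step (2b) I would invoke Theorem \ref{qnlss}. Its block encodings for $M_1$, $M_2$, $M_3$ are \emph{exact} (parameter $\delta=0$) by Theorem \ref{ds} and Theorem \ref{nsystem}, so the hypotheses of Theorem \ref{qlsa1} on the block-encoding precisions are trivially met and the composition $M_3(M_1M_2)^{-1}=M_3M^{-1}$ has condition number $\kappa(M_3M^{-1})$. Writing $T_N=\widetilde{O}(\kappa(M_3M^{-1})(\mu(M_1)+\mu(M_2)+\mu(M_3))\log(n/\delta))$ for the cost of one Newton-system solve, Theorem \ref{qnlss} produces a relative-error-$\delta$ estimate $\overline{\norm{dS\circ dY}}$ of the norm in time $\frac1\delta T_N$, succeeding with probability $1-1/poly(n)$ after the usual constant-factor amplification of the success probability.

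For step (2c) I would run the tomography Algorithm \ref{alg:tom} on the unitary $U_N$ that prepares $\ket{dS\circ dY}$ to accuracy $\delta^2/n^3$. Since the target dimension is $d=O(n^2)$, this accuracy is $o(\delta^2/d)$, so the trace-distance argument of Section \ref{tom} shows the imperfect preparation does not affect the guarantees of Theorem \ref{thm:tom}, at the cost of only an extra $\log(n/\delta)$ factor already absorbed into $T_N$. Running the tomography with a precision parameter equal to a small constant times $\delta$ uses $N=\widetilde{O}(d/\delta^2)=\widetilde{O}(n^2/\delta^2)$ calls to $U_N$, hence time $\widetilde{O}(\tfrac{n^2}{\delta^2}T_N)$, and by Theorem \ref{thm:tom} returns a unit vector within $\ell_2$-distance $O(\delta)$ of $(dS\circ dY)/\norm{dS\circ dY}$ with probability $1-1/poly(n)$. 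Combining this with the norm estimate from step (2b) via the triangle-inequality remark at the end of Section \ref{tom} (namely $\norm{\eta_x\widetilde x-x}\le 2\delta\norm{x}$), and choosing the constants in the tomography and norm precisions so that the combined error is at most $\delta\norm{dS\circ dY}_2$, gives the claimed error bound.

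Finally I would collect the running times: the $\widetilde{O}(\tfrac{n^2}{\delta^2}T_N)$ of the tomography step dominates the $\tfrac1\delta T_N$ of step (2b) since $\delta<1$, and adding the $O(n^\omega)$ of step (2a) yields exactly the stated bound $\widetilde{O}(n^\omega+\tfrac{n^2}{\delta^2}\kappa(M_3M^{-1})(\mu(M_1)+\mu(M_2)+\mu(M_3))\log(n/\delta))$. A union bound over the $O(1)$ failure events --- amplitude estimation and sign estimation inside the tomography, and the norm estimation of step (2b) --- keeps the overall success probability at $1-1/poly(n)$. I expect the only point requiring genuine care to be the error propagation in step (2c): one must verify both that the imperfect state preparation by $U_N$ is negligible (handled by the $o(\delta^2/d)$ trace-distance bound, which is why $U_N$'s accuracy is fixed at $\delta^2/n^3$ and enters only logarithmically) and that the unit-vector tomography error composes correctly with the multiplicative norm error; everything else is a direct substitution of the already-proven statements.
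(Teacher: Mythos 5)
Your proposal is correct and follows essentially the same route as the paper: costing step (2a) classically at $O(n^{\omega})$, using Theorem \ref{qnlss} for the norm estimate in time $\frac{1}{\delta}T_N$, and running the tomography of Algorithm \ref{alg:tom} on $U_N$ (prepared to accuracy $\delta^{2}/n^{3}$ so the trace-distance argument applies) with $\widetilde{O}(n^{2}/\delta^{2})$ calls, which dominates and yields the stated bound. The only cosmetic difference is that the paper states the per-step guarantee as $2\delta\norm{dS\circ dY}_2$ in Algorithm \ref{alg:qipm} and absorbs the factor $2$ by rescaling $\delta$, exactly as you do by adjusting constants.
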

\noindent It remains to find the values for the parameters $T$ and $\delta$ so that the quantum interior point method converges to an approximate solution of the SDP. 

First, we make a claim that allows us to better bound the norm of the sum of errors when the tomography Algorithm \ref{alg:tom} is used in a sequence of independent trials as in the quantum interior point method. 
\begin{claim} \label{gauss} 
Let $\norm{ \overline{x}_{i} - x_{i} } = \eta_{i}$ be the approximation error for a sequence of independent applications of the tomography Algorithm \ref{alg:tom} for $i \in [m]$. 
Let $x= \sum_{i \in [m]} x_{i}$ and $\overline{x} = \sum_{i \in [m]} \overline{x_{i}}$, then with high probability, 
\als{ 
\norm{\overline{x} - x}_{2}^{2}  \leq 4 \sum_{i \in [T]} \eta_{i}^{2} 
} 
\end{claim}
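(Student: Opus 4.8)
The plan is to exploit that the errors $e_i := \overline{x}_i - x_i$ produced by independent runs of Algorithm~\ref{alg:tom} are independent random vectors whose conditional mean is very small, so that in the expansion
\[
\norm{\overline{x}-x}_2^2 = \norm{\sum_{i} e_i}_2^2 = \sum_i \norm{e_i}_2^2 + \sum_{i\neq j}\langle e_i,e_j\rangle = \sum_i \eta_i^2 + \sum_{i\neq j}\langle e_i,e_j\rangle
\]
the cross term is negligible compared to $\sum_i \eta_i^2$ with high probability. So it suffices to control $\sum_{i\neq j}\langle e_i,e_j\rangle$, and the whole point is that this sum behaves additively (Pythagorean-type growth $\sqrt{\sum_i\eta_i^2}$) rather than like the triangle-inequality bound $\sum_i\eta_i$.

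First I would condition on the event $\mathcal G$ that, in every one of the $T$ runs, the sign estimate is correct on every coordinate $k$ with $x_{i,k}^2 \geq \delta^2/d$ (indeed wherever $N x_{i,k}^2 \gg \ln d$); by Lemma~\ref{signs} and a union bound over the $T=\widetilde{O}(n^{1/4})$ runs, $\mathcal G$ holds with probability $1-1/\mathrm{poly}(n)$. On $\mathcal G$ we have $e_{i,k} = \sigma_{i,k}\sqrt{p_{i,k}} - x_{i,k}$ with $\sigma_{i,k}=\mathrm{sgn}(x_{i,k})$ on the large coordinates, and the $e_i$ for distinct $i$ are independent, so $\E[\langle e_i,e_j\rangle \mid \mathcal G] = \langle \E[e_i\mid\mathcal G], \E[e_j\mid\mathcal G]\rangle$. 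The key estimate is that $\norm{\E[e_i\mid\mathcal G]}_2$ is tiny: on coordinates with $N x_{i,k}^2\gg 1$, Jensen gives $\E[\sqrt{p_{i,k}}]\leq |x_{i,k}|$ and a second-order expansion of $\sqrt{\cdot}$ gives $|x_{i,k}|-\E[\sqrt{p_{i,k}}] = O(1/(N|x_{i,k}|))$, whose squared contribution summed over such coordinates is $O\big(N^{-2}\sum_k x_{i,k}^{-2}\big)=o(\delta^2)$ since $Nx_{i,k}^2\gtrsim\ln d$ there; the remaining "tiny" coordinates have $x_{i,k}^2=O(1/N)$, so there are at most $d$ of them carrying total mass $O(d/N)=O(\delta^2/\ln d)$, and each contributes at most $\E[e_{i,k}^2]\leq \E[(\sqrt{p_{i,k}}+|x_{i,k}|)^2]\leq 4x_{i,k}^2$ to $\norm{\E[e_i\mid\mathcal G]}_2^2$. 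Hence $\norm{\E[e_i\mid\mathcal G]}_2^2 = O(\delta^2/\ln d)$, and $\sum_{i\neq j}\langle\E[e_i\mid\mathcal G],\E[e_j\mid\mathcal G]\rangle = O(T^2\delta^2/\ln d)$.

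Finally I would bound the fluctuation of $\sum_{i\neq j}\langle e_i,e_j\rangle$ around this mean by viewing the partial sums over the runs $i=1,\dots,T$ as a martingale whose increment on adding $e_i$ is $O(\eta_i \max_{j<i}\eta_j)$ by Cauchy--Schwarz, and applying Azuma, together with the fact that each realized $\eta_i^2=\sum_k e_{i,k}^2$ itself concentrates around its mean (a sum of independent bounded coordinate contributions). Combining the bounds yields $\sum_{i\neq j}\langle e_i,e_j\rangle \leq 3\sum_i\eta_i^2$ with high probability, hence the claim, the constant $4$ being deliberately loose to absorb the slack in these concentration steps. The main obstacle is the middle step: controlling the conditional bias $\norm{\E[e_i\mid\mathcal G]}_2$, and in particular the coordinates where the sign estimate is not guaranteed correct — one has to argue those coordinates are too few and too small, relative to the threshold $\delta^2/d$ and the sample size $N=\widetilde{O}(d/\delta^2)$, to produce a bias large enough for the $\binom{T}{2}$ cross terms to overwhelm $\sum_i\eta_i^2$; without a good enough bias bound one only recovers the trivial factor $T$ in place of $4$.
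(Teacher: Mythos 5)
You should first note what you are being compared against: the paper does not actually prove Claim \ref{gauss} — it only remarks that it suffices to show $\E[\braket{\overline{x}_i - x_i}{\overline{x}_j - x_j}]\approx 0$, asserts that the error vectors are approximately Gaussian, and defers the proof to a full version (with a footnote that a plain union bound costs an extra $\sqrt{n}$ factor). Your outline — expand $\norm{\sum_i e_i}^2$ into diagonal plus cross terms, bound the cross terms via a bias estimate for each run plus a concentration argument — is the same route the paper sketches, and you correctly identify the conditional bias $\norm{\E[e_i\mid\mathcal{G}]}$ as the crux.

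However, your treatment of that crux does not close. (i) The sub-threshold coordinates are those with $x_{i,k}^2 < \delta^2/d = 36\ln d/N$, not $O(1/N)$; their total mass is therefore bounded only by $(\delta^2/d)\cdot d = \delta^2$ (exactly the bound used in the proof of Theorem \ref{thm:tom}), not by $O(d/N)=O(\delta^2/\ln d)$. On these coordinates the sign estimate carries no guarantee and $\E[\sqrt{p_{i,k}}]$ can differ from $|x_{i,k}|$ by a constant factor, so the per-coordinate bias can be of order $|x_{i,k}|$; your argument then only yields $\norm{\E[e_i\mid\mathcal{G}]}_2^2 = O(\delta^2)$, the same order as $\eta_i^2$, and the aligned-bias cross terms are only bounded by $O(T^2\delta^2)$, which swamps $\sum_i\eta_i^2=\Theta(T\delta^2)$. (ii) Even granting your bound $O(\delta^2/\ln d)$, the resulting $O(T^2\delta^2/\ln d)$ is still not negligible relative to $T\delta^2$, since $T=\widetilde{\Theta}(d^{1/4})=\widetilde{\Theta}(\sqrt{n})$ (not $n^{1/4}$) $\gg \ln d$; what is needed is a per-run bias of order $\delta/\sqrt{T}$, or an argument that the biases of different runs do not align (they may well align, e.g.\ when consecutive Newton solutions are close), and the proposal supplies neither. (iii) The Azuma step is also incorrect as stated: Cauchy--Schwarz bounds the increment $2\braket{e_i}{\sum_{j<i}e_j}$ by $2\eta_i\norm{\sum_{j<i}e_j}$, not by $O(\eta_i\max_{j<i}\eta_j)$, and bounding $\norm{\sum_{j<i}e_j}$ is precisely the quantity under proof, so the concentration step is circular; fixing it requires controlling the conditional variance of $\braket{e_i}{v}$ for fixed $v$ (near-isotropy of the error), which again rests on the missing distributional analysis. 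In short, your proposal reproduces and partially quantifies the paper's heuristic, but the genuinely hard step — showing the sub-threshold bias is $o(\delta/\sqrt{T})$, or restructuring the argument so it is not needed — remains open in your write-up, just as it does in the paper.
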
 
\noindent For the above claim it suffices to show that $\E[ \braket{ \overline{x}_{i} - x_{i}}{ \overline{x}_{j} - x_{j}} ]\approx 0$ for independent error vectors generated by the tomography algorithm. One can see that the error vectors 
in fact have a distribution close to a multivariate Gaussian, in which case the claim holds. We will provide a proof in the full version.\footnote{Without this claim, one can use a simple union bound which adds a $\sqrt{n}$ factor to the running time of the algorithms.}

We will now use Theorem  \ref{mainthm} to provide a bound on the number of iterations. 
The analysis of the convergence of the approximate interior point method in Section \ref{clan} requires for every iteration the approximation guarantee
$\norm{ \overline{dS \oplus dY} - dS \oplus dY  }_{F} \leq \frac{\xi}{\norm{Y \oplus Y^{-1}}_{2}}$. As $\norm{dS \circ dY}_{2} = \norm{ dS \oplus dY}_{F}$ and using Corollary \ref{onestep}, in order to obtain this guarantee the precision for the 
tomography algorithm in step (2c) must satisfy $\delta \leq  \frac{\xi}{ \norm{dS \oplus dY }_{F} \norm{Y \oplus Y^{-1}}_{2}}$. We choose 
$\delta =  \frac{\xi}{ 2 \gamma \norm{dS \oplus dY }_{F} \norm{Y \oplus Y^{-1}}_{2}}$ where $\gamma>1$. The parameter $\gamma$ depends on whether 
we want absolute or relative error guarantees for the final solution and is computed below.

\begin{theorem} \label{sdpthm} 
After $T=O(\sqrt{n} \log (n/\epsilon))$ iterations, the quantum interior point method with high probability finds a pair of positive definite matrices $(S, Y)$ such that $Tr(SY) \leq \epsilon$ and the constraints $(S, Y) \in (L- B,  L^{\perp} + C)$ are satisfied approximately in the following sense. 
\begin{enumerate} 
\item If  $ \norm{B \oplus C}_{F}  > \sqrt{T}$, then $(S, Y) \in (L- B', L^{\perp} + C')$ such that 
we have $\norm{ B \oplus C  -  B' \oplus C'  }_{F} \leq \xi \norm{B \oplus C}_{F}$ and the running time is
$$\widetilde{O}(n^{\omega+0.5}\log (n/\epsilon) + \frac{n^{2}}{\xi^{2}} \sum_{ i \in [T]}    (\mu(M_{1,i}) + \mu(M_{2,i}) + \mu(M_{3}) )  \kappa(M_{3} M_{i}^{-1}) \kappa(Y \oplus Y^{-1})^{2} \log (n/\xi) ).
$$  
\item 
We have $(S, Y) \in (L- B', L^{\perp} + C')$ with $\norm{ B \oplus C  -  B' \oplus C'  }_{F} \leq \xi $ and the running time is
$$\widetilde{O}(n^{\omega+0.5}\log (n/\epsilon) + \frac{n^{2.5}}{\xi^{2}} \sum_{ i \in [T]}    (\mu(M_{1,i}) + \mu(M_{2,i}) + \mu(M_{3}) )  \kappa(M_{3} M_{i}^{-1}) \kappa(Y \oplus Y^{-1})^{2} \log (n/\xi) ).
$$ 
  
\end{enumerate} 
\end{theorem}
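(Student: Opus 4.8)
The plan is to combine the approximate convergence analysis (Theorem~\ref{mainthm}), the per‑iteration cost estimate (Corollary~\ref{onestep}), and the error‑accumulation bound (Claim~\ref{gauss}), and then to tune the parameter $\gamma$ for the two error regimes. First I would check that the chosen precision $\delta_i = \frac{\xi}{2\gamma\,\|dS_i\oplus dY_i\|_F\,\|Y_i\oplus Y_i^{-1}\|_2}$, fed into Corollary~\ref{onestep}, produces at step $i$ an estimate $\overline{dS}_i\oplus\overline{dY}_i$ with $\|\overline{dS}_i\oplus\overline{dY}_i-dS_i\oplus dY_i\|_F \le \frac{\xi}{\|Y_i\oplus Y_i^{-1}\|_2}$ (using $\gamma>1$ to absorb the constant loss of the tomography and norm‑estimation steps), which is exactly the hypothesis of Theorem~\ref{mainthm}. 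Applying that theorem inductively along the run gives, for every iteration, that the updated pair $(\overline{S},\overline{Y})$ stays positive definite, remains at distance $\le\eta$ from the central path, and has $Tr(\overline{S}\,\overline{Y})=\overline{\nu}n$ with $\overline{\nu}=(1-\alpha/\sqrt{n})\nu$ for a constant $\alpha\ge 0.001$. Iterating $T = O(\sqrt{n}\,\log(Tr(S_0Y_0)/\epsilon)) = O(\sqrt{n}\,\log(n/\epsilon))$ times from $Tr(S_0Y_0)=\text{poly}(n)$ drives $Tr(SY)$ — which equals the duality gap $\Delta(S,Y)$ — below $\epsilon$, while positive definiteness is preserved throughout.

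Next I would account for the violation of the affine constraints. At each step the exact Newton solution satisfies $dS_i\in L$ and $dY_i\in L^\perp$, so writing the current iterate as $S_{i-1}=\hat S_{i-1}+E^S_{i-1}$ with $\hat S_{i-1}\in L-B$, the update gives $S_i = (\hat S_{i-1}+dS_i) + \big(E^S_{i-1}+(\overline{dS}_i-dS_i)\big)$ where $\hat S_{i-1}+dS_i\in L-B$; hence the defect telescopes, $E^S_T=\sum_{i\in[T]}(\overline{dS}_i-dS_i)$, and likewise $E^Y_T=\sum_{i\in[T]}(\overline{dY}_i-dY_i)$, so that $(S,Y)\in(L-(B-E^S_T),\,L^\perp+(C-E^Y_T))$, i.e. $B'=B-E^S_T$ and $C'=C-E^Y_T$. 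Since the tomography runs at distinct steps use independent randomness, Claim~\ref{gauss} gives $\|E^S_T\oplus E^Y_T\|_F^2 \le 4\sum_{i\in[T]}\eta_i^2$ with $\eta_i=\|\overline{dS}_i\oplus\overline{dY}_i-dS_i\oplus dY_i\|_F = O(\delta_i\|dS_i\oplus dY_i\|_F)$.

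To evaluate the chosen $\delta_i$ I would bound $\|dS_i\oplus dY_i\|_F$ via the scaling argument of Section~\ref{proof}: scaling by $Q=Y^{1/2}$ maps $(dS_i,dY_i)$ to $(\widehat{dS_i},\widehat{dY_i})$ with $\|\widehat{dS_i}\|_F,\|\widehat{dY_i}\|_F=O(1)$ by Lemma~\ref{frobbound}, whence $\|dS_i\|_F=\|Y^{-1/2}\widehat{dS_i}Y^{-1/2}\|_F\le O(1)\|Y^{-1}\|_2$ and $\|dY_i\|_F\le O(1)\|Y\|_2$, so $\|dS_i\oplus dY_i\|_F = O(\|Y_i\oplus Y_i^{-1}\|_2)$. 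Plugging this into the formula for $\delta_i$ yields $\eta_i = O(\xi/\gamma)$ (using $\|Y\oplus Y^{-1}\|_2\ge 1$), hence $\|E^S_T\oplus E^Y_T\|_F = O(\sqrt{T}\,\xi/\gamma)$, and $1/\delta_i^2 = O\!\big(\gamma^2\|Y_i\oplus Y_i^{-1}\|_2^4/\xi^2\big) = O\!\big(\gamma^2\kappa(Y_i\oplus Y_i^{-1})^2/\xi^2\big)$, using the identity $\kappa(Y\oplus Y^{-1})=\|Y\oplus Y^{-1}\|_2^2$.

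Finally I would fix $\gamma$ and total the cost. In case~1 the hypothesis $\|B\oplus C\|_F>\sqrt{T}$ permits the constant $\gamma=2$, so the defect is $O(\sqrt{T}\,\xi)\le\xi\|B\oplus C\|_F$ and each iteration costs $O(n^\omega)$ classically plus, by Corollary~\ref{onestep}, $\widetilde O\!\big(\frac{n^2}{\delta_i^2}(\mu(M_{1,i})+\mu(M_{2,i})+\mu(M_3))\kappa(M_3M_i^{-1})\log(n/\xi)\big) = \widetilde O\!\big(\frac{n^2}{\xi^2}(\mu(M_{1,i})+\mu(M_{2,i})+\mu(M_3))\kappa(M_3M_i^{-1})\kappa(Y_i\oplus Y_i^{-1})^2\log(n/\xi)\big)$ on the quantum side; summing over the $T$ iterations and collecting the classical part into $\widetilde O(n^{\omega+0.5}\log(n/\epsilon))$ gives the stated bound. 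In case~2, to push the defect below $\xi$ we instead take $\gamma=2\sqrt{T}=\widetilde O(n^{1/4})$, which multiplies the per‑step quantum cost by $\gamma^2=\widetilde O(\sqrt{n})$ and produces the $\widetilde O(n^{2.5}/\xi^2\cdot\dots)$ expression. The step I expect to need the most care is the second one: recognizing that the constraint defect telescopes exactly to $\sum_i(\overline{dS}_i-dS_i)$, and that invoking Claim~\ref{gauss} rather than the triangle inequality (which would cost an extra $\sqrt{T}=\widetilde O(n^{1/4})$ factor) is what keeps the accumulated error at $O(\sqrt{T})$ times the per‑step error; the remainder is bookkeeping on $\mu$, $\kappa$, and $\gamma$.
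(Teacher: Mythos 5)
Your proposal is correct and follows essentially the same route as the paper: Theorem~\ref{mainthm} applied inductively for the duality-gap decrease, the telescoping of the infeasibility into $\sum_i(\overline{dS}_i\oplus\overline{dY}_i - dS_i\oplus dY_i)$ bounded via Claim~\ref{gauss}, the choice $\delta_i = \frac{\xi}{2\gamma\norm{dS_i\oplus dY_i}_F\norm{Y_i\oplus Y_i^{-1}}_2}$ with $\gamma$ constant versus $\gamma=\Theta(\sqrt{T})$ for the two error regimes, and the bound $\norm{dS\oplus dY}_F\norm{Y\oplus Y^{-1}}_2 = O(\kappa(Y\oplus Y^{-1}))$ via Lemma~\ref{frobbound} plugged into Corollary~\ref{onestep} and summed over $T$ iterations. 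Your derivation of that last bound (inverting the $Y^{1/2}$-scaling and using $\kappa(Y\oplus Y^{-1})=\norm{Y\oplus Y^{-1}}_2^2$) differs only cosmetically from the paper's use of Fact~\ref{frobfact}, and the constant-factor choices of $\gamma$ are immaterial.
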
 
\begin{proof} 
Theorem \ref{mainthm} shows that in each iteration of the quantum interior point method, $Tr(SY)$ decreases by a multiplicative factor of $(1- \frac{\alpha}{ \sqrt{n}})$ for some 
constant $\alpha$. It follows that for the solutions $(S,Y)$ obtained after $O(\sqrt{n} \log (n/\epsilon))$ of the method we have that $Tr(SY) \leq \epsilon$.  

The solutions $(S, Y)$ found by Algorithm \ref{alg:qipm} are not exactly feasible as $(\overline{dS}_{i}, \overline{dY}_{i})$ for each step $i\in [T]$ do not belong to $L, L^{\perp}$. Consequently, $(S, Y) \in L - B', L^{\perp} + C'$ with $\norm{ B \oplus C  -  B' \oplus C'  }_{F}  = \norm{\sum_{i \in [T]} (\overline{dY_{i} \circ dS_{i} } -  dY_{i} \circ dS_{i} ) }_{2}$. 
This expression is the sum of errors for the tomography algorithm over a sequence of $T$ independent steps and can therefore be bounded using Claim \ref{gauss}. 

The error $\norm{\overline{dY_{i} \circ dS_{i} } -  dY_{i} \circ dS_{i}  }_{2}$ for the $i$th step of Algorithm \ref{alg:qipm} is at most $\frac{ \xi} {2 \gamma \norm{Y \oplus Y_{i}^{-1}}_{2} }$ with probability at least 
$1- 1/poly(n)$. Note that $\norm{Y_{i} \oplus Y_{i}^{-1}}_{2} \geq \max (\lambda_{max}(Y_{i} ), \frac{1}{ \lambda_{min}(Y_{i} )} ) \geq \frac{\lambda_{max}(Y_{i} )} { \lambda_{\min}(Y_{i} )} = \kappa(Y_{i} ) \geq 1$. We therefore have the guarantee, 
\als{ 
\norm{ B \oplus C  -  B' \oplus C'  }_{F}^{2}  \leq   \ \sum_{i \in [T]} \frac{ 4\xi^{2} } { 4\gamma^{2} \norm{Y_{i} \oplus Y_{i}^{-1}}_{2}^{2} }  \leq \frac{\xi^{2} T}{ \gamma^{2}}
} 
The approximation guarantees in parts 1 and 2 correspond to $\gamma = 1$ and $\gamma=\sqrt{T}$. 
 
We conclude by computing the running time. Substituting our value for $\delta$ in the running time of one iteration from Corollary \ref{onestep}, we have
$$\widetilde{O}(n^\omega + \frac{\gamma^2n^{2}(\norm{dS \oplus dY}_{F} \norm{Y \oplus Y^{-1}}_{2})^{2} }{\xi^{2}}\kappa(M_{3}M^{-1}) (\mu(M_{1}) +  \mu(M_{2}) + \mu(M_{3})) \log (n/\delta)).$$

We upper bound the norms that appear in the above expression.

\all{ 
\norm{dS \oplus dY}_{F} \norm{Y \oplus Y^{-1}}_{2} &= \kappa(Y \oplus Y^{-1}) \lambda_{min} (Y \oplus Y^{-1} ) \norm{dS \oplus dY}_{F} \nl 
&\leq  \kappa(Y \oplus Y^{-1}) \norm{ (Y \oplus Y^{-1}) (dS \oplus dY) }_{F} \nl 
&=  \kappa(Y \oplus Y^{-1})  \norm{ Y^{1/2} dS Y^{1/2}  \oplus Y^{-1/2} dY Y^{-1/2} }_{F} \nl 
&\leq \kappa(Y \oplus Y^{-1})/4  
} {capbound} 
We used Fact \ref{frobfact} for the first inequality, the second inequality follows from Lemma \ref{frobbound} in the classical analysis of the interior point method, 
which establishes the bounds  $\norm{ Y^{1/2} dS Y^{1/2}}_{F} < 1/6$ and $\norm{ Y^{-1/2} dY Y^{-1/2} }_{F} \leq \frac{\sqrt{ \eta^{2} + \chi^{2} }}{1-\eta} <1/6$. 

Thus, the quantum interior point method for semi-definite programs as described in Algorithm~\ref{alg:qipm} has running time 
$$\widetilde{O}(n^{\omega+0.5}\log (n/\epsilon) + \frac{\gamma^{2} n^{2}}{\xi^{2}} \sum_{ i \in [T]}    (\mu(M_{1,i}) + \mu(M_{2,i}) + \mu(M_{3}) )  \kappa(M_{3} M_{i}^{-1}) \kappa(Y \oplus Y^{-1})^{2} \log (n/\delta) ).
$$ 
Substituting $\gamma = 1$ and $\gamma=\sqrt{T}$ we obtain the running times stated in parts 1 and 2. 
\end{proof}

Let us make a number of remarks about our algorithm.

First, we note that the relative error approximation holds under the assumption that $\norm{B \oplus C}_{F} \geq \sqrt{T} = O(n^{1/4})$. If we scale the SDP so that $\norm{A^{(i)}}_{2}, \norm{B}_{2} \leq 1$ (this is also the scaling used in \cite{AGGW17}), then such an assumption would be reasonable if the matrices $(B,C)$ have high rank. In fact, if the matrix with spectral norm $1$ is roughly full rank and well-conditioned, then we expect the Frobenius norm to be $O(\sqrt{n})$. 

Second, we believe that the relative error will be sufficient in many practical cases and given that the Frobenius norm of the matrices grows with the dimension it makes more sense to talk about relative than absolute error.  

Third, another way of viewing our error guarantees is that we find $\epsilon$-optimal 
solutions to an SDP that is $\xi$-close to the original SDP that we wanted to solve. If there is some notion of robustness in the SDP, meaning that it is sufficient for the constraints to be almost satisfied, then the dependence of the running time of our algorithm to parameter $\epsilon$ is only logarithmic as for the classical interior point method. 

Fourth, the parameters $\mu$ are less than the sparsity of the matrices so for comparison we can have a linear dependence on the sparsity and a running time of $\widetilde{O}(n^{2.5}s/\xi^2)$ or $\widetilde{O}(n^3s/\xi^2)$ (for different errors) for sparse matrices. In the worst case, $\mu$ is the square root of the dimension and hence the running time becomes up to $\widetilde{O}(n^{3.5}/\xi^2)$ or $\widetilde{O}(n^4/\xi^2)$, where now this is the running time for all matrices, as long as they are well-conditioned in the sense we described above.

Last, the algorithm involves a classical computation of $Y^{-1}$ and $\nu I - SY$ in step (2a) This can be replaced by quantum estimation procedures but this does not improve the worst case running time. If we use quantum linear system solvers to create the block encoding for $M_{2}$ in Theorem \ref{nsystem}, then the block encoding requires time $\widetilde{O}(\mu(Y) \kappa(Y))$ and in fact the worst case running time increases by a factor $O(\sqrt{n})$. If we use quantum linear system solvers to construct $Y^{-1}$ 
and then perform tomography to store $Y^{-1}$ in the QRAM data structure, the tomography precision required would be $\delta^{2}/n^{3}$ leading to a prohibitively large running time. We have therefore used classical linear algebra to compute $Y^{-1}$ and $\nu I - SY$. It remains an open question if more advanced quantum methods can be used for this part of the method.

\subsection{Linear Programs} \label{sec:lp} 
We observe that the Newton linear system simplifies for the case of linear programs as it corresponds to the case where $S,Y$ are diagonal matrices 
and can be represented as vectors in $\R^{n}$. Consider a pair of primal dual LPs of the form, 
\als{ 
Opt(P) &= \min_{x \in \R^{m} } \{ c^{t} x \;|\; \sum_{i \in [m]} x_{i} a_{i} \succeq b, a_{i} \in \R^{n}\} \\
Opt(D) &= \max_{y \succeq 0} \{ b^{t} y \;|\; y^{t} a_{j} = c_{j} \} 
} 
Similar to the SDP case, we define $L= Span_{i \in [m]} (a_{i})$ and $s:=Ax-b$ and rewrite the primal dual pair of LPs in a symmetric form, 
\als{ 
Opt(P') &= \min_{s \succeq0 } \{ c^{t} s \;|\; s\in L -b \} \\
Opt(D') &= \max_{y \succeq 0} \{ b^{t} y \;|\;y \in L^{\perp} + c \} 
} 
The Newton linear system \eqref{newton} for this special case reduces to $ds \in L, dy \in L^{\perp}, ds \odot y + dy \odot s = \nu \vec{1} - s\odot y$. 
As with the SDPs we introduce variables $dx_{i}$ such that $ds = \sum_{i} dx_{i} a_{i}$, with these variables the matrix $M$ for the Newton linear system for an LP 
has dimensions $(n+m) \times (n+m)$ and has entries given by,

\all{ 
diag(y\circ 1)  \left [ \begin{matrix}
a_{11} &  \ldots & a_{m1}   & \ldots  \\
\vdots & \vdots & \vdots & \ddots \\
a_{1i} &  \ldots & a_{mi}   &   diag(y^{-1} \odot s)  \\
\vdots &  \vdots & \vdots & \ddots \\
a_{1n} &  \ldots &  a_{mn} & \ldots  \\
0 &  \ldots &  0  &a_{1}  \\
\vdots & \vdots & \vdots & \ddots \\
0 &  \ldots &  0  &a_{m}  \\
\end{matrix} \right ] 
\left [ \begin{matrix}
 dx_{1} \\
\ldots   \\
dx_{k}  \\
\ldots  \\
 dx_{m} \\  
  dy_{1} \\
\ldots   \\
dy_{i} \\
\ldots  \\
 dy_{n} \\   
\end{matrix} \right ] 
= \left [ \begin{matrix}
 (\nu I - s\odot y)_{1} \\
\vdots   \\
 (\nu I - s \odot y)_{i}   \\
\vdots  \\
  (\nu I - s \odot y )_{n}  \\  
 0 \\
\ldots  \\
 0 \\   
\end{matrix} \right ] 
} {newmat} 
As for the SDP we implement the block encoding for $M$ by factorizing it into $M_{1}$ and $M_{2}$ defined by the above equation. Analogous to Step (2a) in the SDP algorithm, the entries of the vector $y^{-1} \circ s$ and also $y\circ s$ can be computed and stored in memory in time $O(n)$ unlike the SDP case where time needed 
is $O(n^{\omega})$.

The block encoding for $M_{1}$ can be implemented efficiently as $y$ is stored in the QRAM. The rows and columns of $M_{2}$ can also be prepared easily using $y^{-1} \circ s$ and data stored in the QRAM, so it is also easy to construct an efficient block encoding for $M_{2}$. Similar to the SDP case, the 
precision for the quantum tomography algorithm for recovering $y$ can be upper bounded by $\kappa(y)=\frac{y_{max}}{y_{min}}$. 
The number of iterations $T=\widetilde{O}(\sqrt{n})$ and tomography is performed on a vector of dimension $(n+m)=O(n)$ in each iteration in time $\widetilde{O}(n \kappa(y_{i})/\xi^{2})$. We can therefore specialize Theorem \ref{sdpthm} for linear programs to obtain the following result. 

\begin{theorem} \label{lpthm} 
After $T=O(\sqrt{n} \log (n/\epsilon))$ iterations, the quantum interior point method for LPs produces with high probability a pair of solutions $(s, y)$ such that $\braket{s}{y} \leq \epsilon$ and the constraints $(s, y) \in (L-b, L^{\perp} +c)$ are satisfied approximately in the following sense,  
\begin{enumerate} 
\item If  $ \norm{b \circ c}_{2}  > \sqrt{T}$, then  $(s, y) \in (L-b', L^{\perp} +c')$ a
we have $\norm{ b \circ c  -  b' \circ c'  }_{2} \leq \xi \norm{  b \circ c }_{2}$ and the running time is
$$\widetilde{O}(n^{1.5}\log (n/\epsilon) + \frac{n}{\xi^{2}} \sum_{ i \in [T]}    (\mu(M_{1,i}) + \mu(M_{2,i}) + \mu(M_{3}) )  \kappa(M_{3} M_{i}^{-1}) \kappa(y \circ y^{-1})^{2} \log (n/\xi) ).
$$  
\item 
We have $(s, y) \in (L-b', L^{\perp} +c')$ such that $\norm{ b \circ c  -  b' \circ c'  }_{2} \leq \xi$  with running time, 
$$\widetilde{O}(n^{1.5}\log (n/\epsilon) + \frac{n^{1.5}}{\xi^{2}} \sum_{ i \in [T]}    (\mu(M_{1,i}) + \mu(M_{2,i}) + \mu(M_{3}) )  \kappa(M_{3} M_{i}^{-1}) \kappa(y \circ y^{-1})^{2} \log (n/\xi) ).
$$ 
  
\end{enumerate} 
\end{theorem}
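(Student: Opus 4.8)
The plan is to reuse the machinery of Theorem~\ref{sdpthm}, specialised to the case where $S = diag(s)$ and $Y = diag(y)$ are diagonal, so that the Newton linear system in equation~\eqref{newmat} has dimension $d = n+m = O(n)$ rather than $O(n^2)$, and the classical precomputation of $y^{-1}\odot s$ and $y\odot s$ in the analogue of Step~(2a) costs $O(n)$ instead of $O(n^\omega)$. First I would invoke Theorem~\ref{mainthm}: a linear program is exactly the special case of the semidefinite program in which $S,Y$ are already diagonal, so the convergence analysis of Section~\ref{clan} applies verbatim. Consequently, provided every iteration solves the Newton system to Frobenius (here $\ell_2$) error at most $\xi/\norm{diag(y)\oplus diag(y)^{-1}}_2$, the updated pair $(s,y)$ stays $\eta$-close to the central path and $\braket{s}{y}$ contracts by a factor $(1-\alpha/\sqrt n)$ with $\alpha \geq 0.001$; hence $T = O(\sqrt n\log(n/\epsilon))$ iterations drive $\braket{s}{y}$ below $\epsilon$.

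Next I would track the infeasibility exactly as in the proof of Theorem~\ref{sdpthm}. Because each approximate update $(\overline{ds_i},\overline{dy_i})$ need not lie in $(L,L^\perp)$, the returned pair satisfies $(s,y)\in(L-b',L^\perp+c')$ with $b'\circ c' - b\circ c = \sum_{i\in[T]}(\overline{dy_i\circ ds_i} - dy_i\circ ds_i)$, a sum of $T$ independent tomography errors; Claim~\ref{gauss} then gives $\norm{b\circ c - b'\circ c'}_2^2 \leq 4\sum_{i\in[T]}\eta_i^2$, where $\eta_i$ is the per-step error. Choosing the tomography precision $\delta = \frac{\xi}{2\gamma\,\norm{ds\circ dy}_2\,\norm{diag(y)\oplus diag(y)^{-1}}_2}$ makes $\eta_i \leq \frac{\xi}{2\gamma\,\norm{diag(y_i)\oplus diag(y_i)^{-1}}_2} \leq \frac{\xi}{2\gamma}$, using that this spectral norm is always at least $1$, whence $\norm{b\circ c - b'\circ c'}_2 \leq \xi\sqrt T/\gamma$. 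Taking $\gamma = 1$ yields part~1 (since $\xi\sqrt T < \xi\norm{b\circ c}_2$ under the hypothesis $\norm{b\circ c}_2 > \sqrt T$), and $\gamma = \sqrt T$ yields the absolute bound $\xi$ of part~2.

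For the running time I would combine three ingredients: (i) the construction of Section~\ref{sec:lp}, which gives $\widetilde{O}(1)$-time block encodings for $M_1$, $M_2$ (built from $y$ and $y^{-1}\odot s$ stored in QRAM) and for the rank-deficient map $M_3:\, dx\circ dy \mapsto ds\circ dy$, so that one quantum Newton solve together with a norm estimate costs $T_N = \widetilde{O}(\kappa(M_3 M^{-1})(\mu(M_1)+\mu(M_2)+\mu(M_3))\log(n/\delta))$ by Theorem~\ref{qnlss}; (ii) vector-state tomography (Algorithm~\ref{alg:tom}) on the $O(n)$-dimensional solution to $\ell_2$ error $\delta\norm{ds\circ dy}_2$, costing $\widetilde{O}(n/\delta^2)$ calls to the solver, so one iteration costs $\widetilde{O}(n + \frac{n}{\delta^2}T_N)$; and (iii) the bound $\norm{ds\circ dy}_2\,\norm{diag(y)\oplus diag(y)^{-1}}_2 \leq \kappa(y\circ y^{-1})/4$, which follows from Fact~\ref{frobfact} and the diagonal specialisation of Lemma~\ref{frobbound}, exactly as in equation~\eqref{capbound}. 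Substituting $\delta$, multiplying by $T = O(\sqrt n\log(n/\epsilon))$ iterations, and setting $\gamma = 1$ or $\gamma = \sqrt T$ produces the two running times claimed, the extra $\gamma^2 = O(\sqrt n)$ factor accounting for the difference between the $n/\xi^2$ prefactor of part~1 and the $n^{1.5}/\xi^2$ prefactor of part~2.

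The main obstacle is not the iteration count, which is inherited intact from Theorem~\ref{mainthm}, but the error bookkeeping: one must check that an $\ell_2$-error $\delta\norm{ds\circ dy}_2$ on the combined vector $ds\circ dy$ translates into a feasibility error carrying the correct $\kappa(y)$ scaling --- the diagonal analogue of the scaling-by-$Y^{1/2}$ reduction in the proof of Theorem~\ref{mainthm} --- and that Claim~\ref{gauss} legitimately removes the naive $\sqrt T$ union-bound loss over the $T$ tomography steps. Everything else is a routine substitution of the LP-specific quantities ($d=O(n)$, $O(n)$ classical precomputation, $\kappa(y\circ y^{-1})$ in place of $\kappa(Y\oplus Y^{-1})$) into the framework already established for semidefinite programs.
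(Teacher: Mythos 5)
Your proposal is correct and follows essentially the same route as the paper, which obtains Theorem~\ref{lpthm} precisely by specializing Theorem~\ref{sdpthm} to the diagonal case: dimension $n+m=O(n)$ for the Newton system and tomography, $O(n)$ classical precomputation of $y^{-1}\odot s$ in place of the $O(n^{\omega})$ step, $\kappa(y\circ y^{-1})$ replacing $\kappa(Y\oplus Y^{-1})$, and the same $\gamma=1$ versus $\gamma=\sqrt{T}$ choice (via Claim~\ref{gauss}) giving the relative and absolute feasibility errors and the $n/\xi^{2}$ versus $n^{1.5}/\xi^{2}$ prefactors. Your error bookkeeping and running-time substitution match the paper's argument, so no gap to report.
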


The parameters $\mu$ is less than the sparsity of the matrices so for comparison we can have a linear dependence on the sparsity and a running time of $\widetilde{O}(n^{1.5}s/\xi^2)$ or $\widetilde{O}(n^2s/\xi^2)$ (for different errors) for sparse matrices. In the worst case, $\mu$ is the square root of the dimension and hence the running time becomes up to $\widetilde{O}(n^{2}/\xi^2)$ or $\widetilde{O}(n^{2.5}/\xi^2)$, where now this is the running time for all matrices, as long as they are well-conditioned in the sense we described above.

\textbf{Acknowledgements:} A part of this work was done while the authors were visiting the Simons Institute. We thank Ronald de Wolf for helpful comments and insightful discussions on vector state tomography. This research was supported by the grants QuantERA QuantAlgo and ANR QuBIC.

\bibliographystyle{plain} 
\bibliography{b1.bib}

\end{document}